\documentclass[a4paper,UKenglish,cleveref, autoref, thm-restate]{lipics-v2021}
\nolinenumbers

\pdfoutput=1 %
\hideLIPIcs  %

\usepackage{environ}
\usepackage{xspace}

\usepackage[ruled]{algorithm}
\usepackage[noend]{algpseudocode}
\usepackage{amscd}
\usepackage{amsfonts}
\usepackage{amsmath}
\usepackage{amssymb}
\usepackage{booktabs}
\usepackage{color}
\usepackage{datetime}
\usepackage{epsfig}
\usepackage{graphicx}
\usepackage{latexsym}
\usepackage{marginnote}
\usepackage{multirow}
\usepackage{pifont}
\usepackage{placeins}
\usepackage{rotate}
\usepackage[nolabel]{showlabels}
\usepackage[disable]{todonotes}
\usepackage{tikz}
\usepackage{url}
\usepackage{varwidth}
\usepackage{verbatim} 
\usepackage{xcolor,colortbl}
\usepackage{xspace}
 \usetikzlibrary{calc}

\usepackage{lipsum}
\usepackage{setspace}

\creflabelformat{enumi}{\emph{(#2#1#3)}}
\creflabelformat{enumii}{\emph{(#2#1#3)}}
\creflabelformat{enumiii}{\emph{(#2#1#3)}}
\creflabelformat{enumiiii}{\emph{(#2#1#3)}}

\definecolor {snow}                {rgb}{1.00,0.98,0.98}
\definecolor {ghostwhite}          {rgb}{0.97,0.97,1.00}
\definecolor {whitesmoke}          {rgb}{0.96,0.96,0.96}
\definecolor {gainsboro}           {rgb}{0.86,0.86,0.86}
\definecolor {floralwhite}         {rgb}{1.00,0.98,0.94}
\definecolor {oldlace}             {rgb}{0.99,0.96,0.90}
\definecolor {linen}               {rgb}{0.98,0.94,0.90}
\definecolor {antiquewhite}        {rgb}{0.98,0.92,0.84}
\definecolor {papayawhip}          {rgb}{1.00,0.94,0.84}
\definecolor {blanchedalmond}      {rgb}{1.00,0.92,0.80}
\definecolor {bisque}              {rgb}{1.00,0.89,0.77}
\definecolor {peachpuff}           {rgb}{1.00,0.85,0.73}
\definecolor {navajowhite}         {rgb}{1.00,0.87,0.68}
\definecolor {moccasin}            {rgb}{1.00,0.89,0.71}
\definecolor {cornsilk}            {rgb}{1.00,0.97,0.86}
\definecolor {ivory}               {rgb}{1.00,1.00,0.94}
\definecolor {lemonchiffon}        {rgb}{1.00,0.98,0.80}
\definecolor {seashell}            {rgb}{1.00,0.96,0.93}
\definecolor {honeydew}            {rgb}{0.94,1.00,0.94}
\definecolor {mintcream}           {rgb}{0.96,1.00,0.98}
\definecolor {azure}               {rgb}{0.94,1.00,1.00}
\definecolor {aliceblue}           {rgb}{0.94,0.97,1.00}
\definecolor {lavender}            {rgb}{0.90,0.90,0.98}
\definecolor {lavenderblush}       {rgb}{1.00,0.94,0.96}
\definecolor {mistyrose}           {rgb}{1.00,0.89,0.88}
\definecolor {white}               {rgb}{1.00,1.00,1.00}
\definecolor {black}               {rgb}{0.00,0.00,0.00}
\definecolor {darkslategray}       {rgb}{0.18,0.31,0.31}
\definecolor {dimgray}             {rgb}{0.41,0.41,0.41}
\definecolor {slategray}           {rgb}{0.44,0.50,0.56}
\definecolor {lightslategray}      {rgb}{0.47,0.53,0.60}
\definecolor {gray}                {rgb}{0.75,0.75,0.75}
\definecolor {lightgrey}           {rgb}{0.83,0.83,0.83}
\definecolor {midnightblue}        {rgb}{0.10,0.10,0.44}
\definecolor {navy}                {rgb}{0.00,0.00,0.50}
\definecolor {cornflowerblue}      {rgb}{0.39,0.58,0.93}
\definecolor {darkslateblue}       {rgb}{0.28,0.24,0.55}
\definecolor {slateblue}           {rgb}{0.42,0.35,0.80}
\definecolor {mediumslateblue}     {rgb}{0.48,0.41,0.93}
\definecolor {lightslateblue}      {rgb}{0.52,0.44,1.00}
\definecolor {mediumblue}          {rgb}{0.00,0.00,0.80}
\definecolor {royalblue}           {rgb}{0.25,0.41,0.88}
\definecolor {blue}                {rgb}{0.00,0.00,1.00}
\definecolor {dodgerblue}          {rgb}{0.12,0.56,1.00}
\definecolor {deepskyblue}         {rgb}{0.00,0.75,1.00}
\definecolor {skyblue}             {rgb}{0.53,0.81,0.92}
\definecolor {lightskyblue}        {rgb}{0.53,0.81,0.98}
\definecolor {steelblue}           {rgb}{0.27,0.51,0.71}
\definecolor {lightsteelblue}      {rgb}{0.69,0.77,0.87}
\definecolor {lightblue}           {rgb}{0.68,0.85,0.90}
\definecolor {powderblue}          {rgb}{0.69,0.88,0.90}
\definecolor {paleturquoise}       {rgb}{0.69,0.93,0.93}
\definecolor {darkturquoise}       {rgb}{0.00,0.81,0.82}
\definecolor {mediumturquoise}     {rgb}{0.28,0.82,0.80}
\definecolor {turquoise}           {rgb}{0.25,0.88,0.82}
\definecolor {cyan}                {rgb}{0.00,1.00,1.00}
\definecolor {lightcyan}           {rgb}{0.88,1.00,1.00}
\definecolor {cadetblue}           {rgb}{0.37,0.62,0.63}
\definecolor {mediumaquamarine}    {rgb}{0.40,0.80,0.67}
\definecolor {aquamarine}          {rgb}{0.50,1.00,0.83}
\definecolor {darkgreen}           {rgb}{0.00,0.39,0.00}
\definecolor {darkolivegreen}      {rgb}{0.33,0.42,0.18}
\definecolor {darkseagreen}        {rgb}{0.56,0.74,0.56}
\definecolor {seagreen}            {rgb}{0.18,0.55,0.34}
\definecolor {mediumseagreen}      {rgb}{0.24,0.70,0.44}
\definecolor {lightseagreen}       {rgb}{0.13,0.70,0.67}
\definecolor {palegreen}           {rgb}{0.60,0.98,0.60}
\definecolor {springgreen}         {rgb}{0.00,1.00,0.50}
\definecolor {lawngreen}           {rgb}{0.49,0.99,0.00}
\definecolor {green}               {rgb}{0.00,1.00,0.00}
\definecolor {chartreuse}          {rgb}{0.50,1.00,0.00}
\definecolor {mediumspringgreen}   {rgb}{0.00,0.98,0.60}
\definecolor {greenyellow}         {rgb}{0.68,1.00,0.18}
\definecolor {limegreen}           {rgb}{0.20,0.80,0.20}
\definecolor {yellowgreen}         {rgb}{0.60,0.80,0.20}
\definecolor {forestgreen}         {rgb}{0.13,0.55,0.13}
\definecolor {olivedrab}           {rgb}{0.42,0.56,0.14}
\definecolor {darkkhaki}           {rgb}{0.74,0.72,0.42}
\definecolor {khaki}               {rgb}{0.94,0.90,0.55}
\definecolor {palegoldenrod}       {rgb}{0.93,0.91,0.67}
\definecolor {lightgoldenrodyellow} {rgb}{0.98,0.98,0.82}
\definecolor {lightyellow}         {rgb}{1.00,1.00,0.88}
\definecolor {yellow}              {rgb}{1.00,1.00,0.00}
\definecolor {gold}                {rgb}{1.00,0.84,0.00}
\definecolor {lightgoldenrod}      {rgb}{0.93,0.87,0.51}
\definecolor {goldenrod}           {rgb}{0.85,0.65,0.13}
\definecolor {darkgoldenrod}       {rgb}{0.72,0.53,0.04}
\definecolor {rosybrown}           {rgb}{0.74,0.56,0.56}
\definecolor {indianred}           {rgb}{0.80,0.36,0.36}
\definecolor {saddlebrown}         {rgb}{0.55,0.27,0.07}
\definecolor {sienna}              {rgb}{0.63,0.32,0.18}
\definecolor {peru}                {rgb}{0.80,0.52,0.25}
\definecolor {burlywood}           {rgb}{0.87,0.72,0.53}
\definecolor {beige}               {rgb}{0.96,0.96,0.86}
\definecolor {wheat}               {rgb}{0.96,0.87,0.70}
\definecolor {sandybrown}          {rgb}{0.96,0.64,0.38}
\definecolor {tan}                 {rgb}{0.82,0.71,0.55}
\definecolor {chocolate}           {rgb}{0.82,0.41,0.12}
\definecolor {firebrick}           {rgb}{0.70,0.13,0.13}
\definecolor {brown}               {rgb}{0.65,0.16,0.16}
\definecolor {darksalmon}          {rgb}{0.91,0.59,0.48}
\definecolor {salmon}              {rgb}{0.98,0.50,0.45}
\definecolor {lightsalmon}         {rgb}{1.00,0.63,0.48}
\definecolor {orange}              {rgb}{1.00,0.65,0.00}
\definecolor {darkorange}          {rgb}{1.00,0.55,0.00}
\definecolor {coral}               {rgb}{1.00,0.50,0.31}
\definecolor {lightcoral}          {rgb}{0.94,0.50,0.50}
\definecolor {tomato}              {rgb}{1.00,0.39,0.28}
\definecolor {orangered}           {rgb}{1.00,0.27,0.00}
\definecolor {red}                 {rgb}{1.00,0.00,0.00}
\definecolor {hotpink}             {rgb}{1.00,0.41,0.71}
\definecolor {deeppink}            {rgb}{1.00,0.08,0.58}
\definecolor {pink}                {rgb}{1.00,0.75,0.80}
\definecolor {lightpink}           {rgb}{1.00,0.71,0.76}
\definecolor {palevioletred}       {rgb}{0.86,0.44,0.58}
\definecolor {maroon}              {rgb}{0.69,0.19,0.38}
\definecolor {mediumvioletred}     {rgb}{0.78,0.08,0.52}
\definecolor {violetred}           {rgb}{0.82,0.13,0.56}
\definecolor {magenta}             {rgb}{1.00,0.00,1.00}
\definecolor {violet}              {rgb}{0.93,0.51,0.93}
\definecolor {plum}                {rgb}{0.87,0.63,0.87}
\definecolor {orchid}              {rgb}{0.85,0.44,0.84}
\definecolor {mediumorchid}        {rgb}{0.73,0.33,0.83}
\definecolor {darkorchid}          {rgb}{0.60,0.20,0.80}
\definecolor {darkviolet}          {rgb}{0.58,0.00,0.83}
\definecolor {blueviolet}          {rgb}{0.54,0.17,0.89}
\definecolor {purple}              {rgb}{0.63,0.13,0.94}
\definecolor {mediumpurple}        {rgb}{0.58,0.44,0.86}
\definecolor {thistle}             {rgb}{0.85,0.75,0.85}
\definecolor {snow2}               {rgb}{0.93,0.91,0.91}
\definecolor {snow3}               {rgb}{0.80,0.79,0.79}
\definecolor {snow4}               {rgb}{0.55,0.54,0.54}
\definecolor {seashell2}           {rgb}{0.93,0.90,0.87}
\definecolor {seashell3}           {rgb}{0.80,0.77,0.75}
\definecolor {seashell4}           {rgb}{0.55,0.53,0.51}
\definecolor {antiquewhite1}       {rgb}{1.00,0.94,0.86}
\definecolor {antiquewhite2}       {rgb}{0.93,0.87,0.80}
\definecolor {antiquewhite3}       {rgb}{0.80,0.75,0.69}
\definecolor {antiquewhite4}       {rgb}{0.55,0.51,0.47}
\definecolor {bisque2}             {rgb}{0.93,0.84,0.72}
\definecolor {bisque3}             {rgb}{0.80,0.72,0.62}
\definecolor {bisque4}             {rgb}{0.55,0.49,0.42}
\definecolor {peachpuff2}          {rgb}{0.93,0.80,0.68}
\definecolor {peachpuff3}          {rgb}{0.80,0.69,0.58}
\definecolor {peachpuff4}          {rgb}{0.55,0.47,0.40}
\definecolor {navajowhite2}        {rgb}{0.93,0.81,0.63}
\definecolor {navajowhite3}        {rgb}{0.80,0.70,0.55}
\definecolor {navajowhite4}        {rgb}{0.55,0.47,0.37}
\definecolor {lemonchiffon2}       {rgb}{0.93,0.91,0.75}
\definecolor {lemonchiffon3}       {rgb}{0.80,0.79,0.65}
\definecolor {lemonchiffon4}       {rgb}{0.55,0.54,0.44}
\definecolor {cornsilk2}           {rgb}{0.93,0.91,0.80}
\definecolor {cornsilk3}           {rgb}{0.80,0.78,0.69}
\definecolor {cornsilk4}           {rgb}{0.55,0.53,0.47}
\definecolor {ivory2}              {rgb}{0.93,0.93,0.88}
\definecolor {ivory3}              {rgb}{0.80,0.80,0.76}
\definecolor {ivory4}              {rgb}{0.55,0.55,0.51}
\definecolor {honeydew2}           {rgb}{0.88,0.93,0.88}
\definecolor {honeydew3}           {rgb}{0.76,0.80,0.76}
\definecolor {honeydew4}           {rgb}{0.51,0.55,0.51}
\definecolor {lavenderblush2}      {rgb}{0.93,0.88,0.90}
\definecolor {lavenderblush3}      {rgb}{0.80,0.76,0.77}
\definecolor {lavenderblush4}      {rgb}{0.55,0.51,0.53}
\definecolor {mistyrose2}          {rgb}{0.93,0.84,0.82}
\definecolor {mistyrose3}          {rgb}{0.80,0.72,0.71}
\definecolor {mistyrose4}          {rgb}{0.55,0.49,0.48}
\definecolor {azure2}              {rgb}{0.88,0.93,0.93}
\definecolor {azure3}              {rgb}{0.76,0.80,0.80}
\definecolor {azure4}              {rgb}{0.51,0.55,0.55}
\definecolor {slateblue1}          {rgb}{0.51,0.44,1.00}
\definecolor {slateblue2}          {rgb}{0.48,0.40,0.93}
\definecolor {slateblue3}          {rgb}{0.41,0.35,0.80}
\definecolor {slateblue4}          {rgb}{0.28,0.24,0.55}
\definecolor {royalblue1}          {rgb}{0.28,0.46,1.00}
\definecolor {royalblue2}          {rgb}{0.26,0.43,0.93}
\definecolor {royalblue3}          {rgb}{0.23,0.37,0.80}
\definecolor {royalblue4}          {rgb}{0.15,0.25,0.55}
\definecolor {blue2}               {rgb}{0.00,0.00,0.93}
\definecolor {blue4}               {rgb}{0.00,0.00,0.55}
\definecolor {dodgerblue2}         {rgb}{0.11,0.53,0.93}
\definecolor {dodgerblue3}         {rgb}{0.09,0.45,0.80}
\definecolor {dodgerblue4}         {rgb}{0.06,0.31,0.55}
\definecolor {steelblue1}          {rgb}{0.39,0.72,1.00}
\definecolor {steelblue2}          {rgb}{0.36,0.67,0.93}
\definecolor {steelblue3}          {rgb}{0.31,0.58,0.80}
\definecolor {steelblue4}          {rgb}{0.21,0.39,0.55}
\definecolor {deepskyblue2}        {rgb}{0.00,0.70,0.93}
\definecolor {deepskyblue3}        {rgb}{0.00,0.60,0.80}
\definecolor {deepskyblue4}        {rgb}{0.00,0.41,0.55}
\definecolor {skyblue1}            {rgb}{0.53,0.81,1.00}
\definecolor {skyblue2}            {rgb}{0.49,0.75,0.93}
\definecolor {skyblue3}            {rgb}{0.42,0.65,0.80}
\definecolor {skyblue4}            {rgb}{0.29,0.44,0.55}
\definecolor {lightskyblue1}       {rgb}{0.69,0.89,1.00}
\definecolor {lightskyblue2}       {rgb}{0.64,0.83,0.93}
\definecolor {lightskyblue3}       {rgb}{0.55,0.71,0.80}
\definecolor {lightskyblue4}       {rgb}{0.38,0.48,0.55}
\definecolor {slategray1}          {rgb}{0.78,0.89,1.00}
\definecolor {slategray2}          {rgb}{0.73,0.83,0.93}
\definecolor {slategray3}          {rgb}{0.62,0.71,0.80}
\definecolor {slategray4}          {rgb}{0.42,0.48,0.55}
\definecolor {lightsteelblue1}     {rgb}{0.79,0.88,1.00}
\definecolor {lightsteelblue2}     {rgb}{0.74,0.82,0.93}
\definecolor {lightsteelblue3}     {rgb}{0.64,0.71,0.80}
\definecolor {lightsteelblue4}     {rgb}{0.43,0.48,0.55}
\definecolor {lightblue1}          {rgb}{0.75,0.94,1.00}
\definecolor {lightblue2}          {rgb}{0.70,0.87,0.93}
\definecolor {lightblue3}          {rgb}{0.60,0.75,0.80}
\definecolor {lightblue4}          {rgb}{0.41,0.51,0.55}
\definecolor {lightcyan2}          {rgb}{0.82,0.93,0.93}
\definecolor {lightcyan3}          {rgb}{0.71,0.80,0.80}
\definecolor {lightcyan4}          {rgb}{0.48,0.55,0.55}
\definecolor {paleturquoise1}      {rgb}{0.73,1.00,1.00}
\definecolor {paleturquoise2}      {rgb}{0.68,0.93,0.93}
\definecolor {paleturquoise3}      {rgb}{0.59,0.80,0.80}
\definecolor {paleturquoise4}      {rgb}{0.40,0.55,0.55}
\definecolor {cadetblue1}          {rgb}{0.60,0.96,1.00}
\definecolor {cadetblue2}          {rgb}{0.56,0.90,0.93}
\definecolor {cadetblue3}          {rgb}{0.48,0.77,0.80}
\definecolor {cadetblue4}          {rgb}{0.33,0.53,0.55}
\definecolor {turquoise1}          {rgb}{0.00,0.96,1.00}
\definecolor {turquoise2}          {rgb}{0.00,0.90,0.93}
\definecolor {turquoise3}          {rgb}{0.00,0.77,0.80}
\definecolor {turquoise4}          {rgb}{0.00,0.53,0.55}
\definecolor {cyan2}               {rgb}{0.00,0.93,0.93}
\definecolor {cyan3}               {rgb}{0.00,0.80,0.80}
\definecolor {cyan4}               {rgb}{0.00,0.55,0.55}
\definecolor {darkslategray1}      {rgb}{0.59,1.00,1.00}
\definecolor {darkslategray2}      {rgb}{0.55,0.93,0.93}
\definecolor {darkslategray3}      {rgb}{0.47,0.80,0.80}
\definecolor {darkslategray4}      {rgb}{0.32,0.55,0.55}
\definecolor {aquamarine2}         {rgb}{0.46,0.93,0.78}
\definecolor {aquamarine4}         {rgb}{0.27,0.55,0.45}
\definecolor {darkseagreen1}       {rgb}{0.76,1.00,0.76}
\definecolor {darkseagreen2}       {rgb}{0.71,0.93,0.71}
\definecolor {darkseagreen3}       {rgb}{0.61,0.80,0.61}
\definecolor {darkseagreen4}       {rgb}{0.41,0.55,0.41}
\definecolor {seagreen1}           {rgb}{0.33,1.00,0.62}
\definecolor {seagreen2}           {rgb}{0.31,0.93,0.58}
\definecolor {seagreen3}           {rgb}{0.26,0.80,0.50}
\definecolor {palegreen1}          {rgb}{0.60,1.00,0.60}
\definecolor {palegreen2}          {rgb}{0.56,0.93,0.56}
\definecolor {palegreen3}          {rgb}{0.49,0.80,0.49}
\definecolor {palegreen4}          {rgb}{0.33,0.55,0.33}
\definecolor {springgreen2}        {rgb}{0.00,0.93,0.46}
\definecolor {springgreen3}        {rgb}{0.00,0.80,0.40}
\definecolor {springgreen4}        {rgb}{0.00,0.55,0.27}
\definecolor {green2}              {rgb}{0.00,0.93,0.00}
\definecolor {green3}              {rgb}{0.00,0.80,0.00}
\definecolor {green4}              {rgb}{0.00,0.55,0.00}
\definecolor {chartreuse2}         {rgb}{0.46,0.93,0.00}
\definecolor {chartreuse3}         {rgb}{0.40,0.80,0.00}
\definecolor {chartreuse4}         {rgb}{0.27,0.55,0.00}
\definecolor {olivedrab1}          {rgb}{0.75,1.00,0.24}
\definecolor {olivedrab2}          {rgb}{0.70,0.93,0.23}
\definecolor {olivedrab4}          {rgb}{0.41,0.55,0.13}
\definecolor {darkolivegreen1}     {rgb}{0.79,1.00,0.44}
\definecolor {darkolivegreen2}     {rgb}{0.74,0.93,0.41}
\definecolor {darkolivegreen3}     {rgb}{0.64,0.80,0.35}
\definecolor {darkolivegreen4}     {rgb}{0.43,0.55,0.24}
\definecolor {khaki1}              {rgb}{1.00,0.96,0.56}
\definecolor {khaki2}              {rgb}{0.93,0.90,0.52}
\definecolor {khaki3}              {rgb}{0.80,0.78,0.45}
\definecolor {khaki4}              {rgb}{0.55,0.53,0.31}
\definecolor {lightgoldenrod1}     {rgb}{1.00,0.93,0.55}
\definecolor {lightgoldenrod2}     {rgb}{0.93,0.86,0.51}
\definecolor {lightgoldenrod3}     {rgb}{0.80,0.75,0.44}
\definecolor {lightgoldenrod4}     {rgb}{0.55,0.51,0.30}
\definecolor {lightyellow2}        {rgb}{0.93,0.93,0.82}
\definecolor {lightyellow3}        {rgb}{0.80,0.80,0.71}
\definecolor {lightyellow4}        {rgb}{0.55,0.55,0.48}
\definecolor {yellow2}             {rgb}{0.93,0.93,0.00}
\definecolor {yellow3}             {rgb}{0.80,0.80,0.00}
\definecolor {yellow4}             {rgb}{0.55,0.55,0.00}
\definecolor {gold2}               {rgb}{0.93,0.79,0.00}
\definecolor {gold3}               {rgb}{0.80,0.68,0.00}
\definecolor {gold4}               {rgb}{0.55,0.46,0.00}
\definecolor {goldenrod1}          {rgb}{1.00,0.76,0.15}
\definecolor {goldenrod2}          {rgb}{0.93,0.71,0.13}
\definecolor {goldenrod3}          {rgb}{0.80,0.61,0.11}
\definecolor {goldenrod4}          {rgb}{0.55,0.41,0.08}
\definecolor {darkgoldenrod1}      {rgb}{1.00,0.73,0.06}
\definecolor {darkgoldenrod2}      {rgb}{0.93,0.68,0.05}
\definecolor {darkgoldenrod3}      {rgb}{0.80,0.58,0.05}
\definecolor {darkgoldenrod4}      {rgb}{0.55,0.40,0.03}
\definecolor {rosybrown1}          {rgb}{1.00,0.76,0.76}
\definecolor {rosybrown2}          {rgb}{0.93,0.71,0.71}
\definecolor {rosybrown3}          {rgb}{0.80,0.61,0.61}
\definecolor {rosybrown4}          {rgb}{0.55,0.41,0.41}
\definecolor {indianred1}          {rgb}{1.00,0.42,0.42}
\definecolor {indianred2}          {rgb}{0.93,0.39,0.39}
\definecolor {indianred3}          {rgb}{0.80,0.33,0.33}
\definecolor {indianred4}          {rgb}{0.55,0.23,0.23}
\definecolor {sienna1}             {rgb}{1.00,0.51,0.28}
\definecolor {sienna2}             {rgb}{0.93,0.47,0.26}
\definecolor {sienna3}             {rgb}{0.80,0.41,0.22}
\definecolor {sienna4}             {rgb}{0.55,0.28,0.15}
\definecolor {burlywood1}          {rgb}{1.00,0.83,0.61}
\definecolor {burlywood2}          {rgb}{0.93,0.77,0.57}
\definecolor {burlywood3}          {rgb}{0.80,0.67,0.49}
\definecolor {burlywood4}          {rgb}{0.55,0.45,0.33}
\definecolor {wheat1}              {rgb}{1.00,0.91,0.73}
\definecolor {wheat2}              {rgb}{0.93,0.85,0.68}
\definecolor {wheat3}              {rgb}{0.80,0.73,0.59}
\definecolor {wheat4}              {rgb}{0.55,0.49,0.40}
\definecolor {tan1}                {rgb}{1.00,0.65,0.31}
\definecolor {tan2}                {rgb}{0.93,0.60,0.29}
\definecolor {tan4}                {rgb}{0.55,0.35,0.17}
\definecolor {chocolate1}          {rgb}{1.00,0.50,0.14}
\definecolor {chocolate2}          {rgb}{0.93,0.46,0.13}
\definecolor {chocolate3}          {rgb}{0.80,0.40,0.11}
\definecolor {firebrick1}          {rgb}{1.00,0.19,0.19}
\definecolor {firebrick2}          {rgb}{0.93,0.17,0.17}
\definecolor {firebrick3}          {rgb}{0.80,0.15,0.15}
\definecolor {firebrick4}          {rgb}{0.55,0.10,0.10}
\definecolor {brown1}              {rgb}{1.00,0.25,0.25}
\definecolor {brown2}              {rgb}{0.93,0.23,0.23}
\definecolor {brown3}              {rgb}{0.80,0.20,0.20}
\definecolor {brown4}              {rgb}{0.55,0.14,0.14}
\definecolor {salmon1}             {rgb}{1.00,0.55,0.41}
\definecolor {salmon2}             {rgb}{0.93,0.51,0.38}
\definecolor {salmon3}             {rgb}{0.80,0.44,0.33}
\definecolor {salmon4}             {rgb}{0.55,0.30,0.22}
\definecolor {lightsalmon2}        {rgb}{0.93,0.58,0.45}
\definecolor {lightsalmon3}        {rgb}{0.80,0.51,0.38}
\definecolor {lightsalmon4}        {rgb}{0.55,0.34,0.26}
\definecolor {orange2}             {rgb}{0.93,0.60,0.00}
\definecolor {orange3}             {rgb}{0.80,0.52,0.00}
\definecolor {orange4}             {rgb}{0.55,0.35,0.00}
\definecolor {darkorange1}         {rgb}{1.00,0.50,0.00}
\definecolor {darkorange2}         {rgb}{0.93,0.46,0.00}
\definecolor {darkorange3}         {rgb}{0.80,0.40,0.00}
\definecolor {darkorange4}         {rgb}{0.55,0.27,0.00}
\definecolor {coral1}              {rgb}{1.00,0.45,0.34}
\definecolor {coral2}              {rgb}{0.93,0.42,0.31}
\definecolor {coral3}              {rgb}{0.80,0.36,0.27}
\definecolor {coral4}              {rgb}{0.55,0.24,0.18}
\definecolor {tomato2}             {rgb}{0.93,0.36,0.26}
\definecolor {tomato3}             {rgb}{0.80,0.31,0.22}
\definecolor {tomato4}             {rgb}{0.55,0.21,0.15}
\definecolor {orangered2}          {rgb}{0.93,0.25,0.00}
\definecolor {orangered3}          {rgb}{0.80,0.22,0.00}
\definecolor {orangered4}          {rgb}{0.55,0.15,0.00}
\definecolor {red2}                {rgb}{0.93,0.00,0.00}
\definecolor {red3}                {rgb}{0.80,0.00,0.00}
\definecolor {red4}                {rgb}{0.55,0.00,0.00}
\definecolor {deeppink2}           {rgb}{0.93,0.07,0.54}
\definecolor {deeppink3}           {rgb}{0.80,0.06,0.46}
\definecolor {deeppink4}           {rgb}{0.55,0.04,0.31}
\definecolor {hotpink1}            {rgb}{1.00,0.43,0.71}
\definecolor {hotpink2}            {rgb}{0.93,0.42,0.65}
\definecolor {hotpink3}            {rgb}{0.80,0.38,0.56}
\definecolor {hotpink4}            {rgb}{0.55,0.23,0.38}
\definecolor {pink1}               {rgb}{1.00,0.71,0.77}
\definecolor {pink2}               {rgb}{0.93,0.66,0.72}
\definecolor {pink3}               {rgb}{0.80,0.57,0.62}
\definecolor {pink4}               {rgb}{0.55,0.39,0.42}
\definecolor {lightpink1}          {rgb}{1.00,0.68,0.73}
\definecolor {lightpink2}          {rgb}{0.93,0.64,0.68}
\definecolor {lightpink3}          {rgb}{0.80,0.55,0.58}
\definecolor {lightpink4}          {rgb}{0.55,0.37,0.40}
\definecolor {palevioletred1}      {rgb}{1.00,0.51,0.67}
\definecolor {palevioletred2}      {rgb}{0.93,0.47,0.62}
\definecolor {palevioletred3}      {rgb}{0.80,0.41,0.54}
\definecolor {palevioletred4}      {rgb}{0.55,0.28,0.36}
\definecolor {maroon1}             {rgb}{1.00,0.20,0.70}
\definecolor {maroon2}             {rgb}{0.93,0.19,0.65}
\definecolor {maroon3}             {rgb}{0.80,0.16,0.56}
\definecolor {maroon4}             {rgb}{0.55,0.11,0.38}
\definecolor {violetred1}          {rgb}{1.00,0.24,0.59}
\definecolor {violetred2}          {rgb}{0.93,0.23,0.55}
\definecolor {violetred3}          {rgb}{0.80,0.20,0.47}
\definecolor {violetred4}          {rgb}{0.55,0.13,0.32}
\definecolor {magenta2}            {rgb}{0.93,0.00,0.93}
\definecolor {magenta3}            {rgb}{0.80,0.00,0.80}
\definecolor {magenta4}            {rgb}{0.55,0.00,0.55}
\definecolor {orchid1}             {rgb}{1.00,0.51,0.98}
\definecolor {orchid2}             {rgb}{0.93,0.48,0.91}
\definecolor {orchid3}             {rgb}{0.80,0.41,0.79}
\definecolor {orchid4}             {rgb}{0.55,0.28,0.54}
\definecolor {plum1}               {rgb}{1.00,0.73,1.00}
\definecolor {plum2}               {rgb}{0.93,0.68,0.93}
\definecolor {plum3}               {rgb}{0.80,0.59,0.80}
\definecolor {plum4}               {rgb}{0.55,0.40,0.55}
\definecolor {mediumorchid1}       {rgb}{0.88,0.40,1.00}
\definecolor {mediumorchid2}       {rgb}{0.82,0.37,0.93}
\definecolor {mediumorchid3}       {rgb}{0.71,0.32,0.80}
\definecolor {mediumorchid4}       {rgb}{0.48,0.22,0.55}
\definecolor {darkorchid1}         {rgb}{0.75,0.24,1.00}
\definecolor {darkorchid2}         {rgb}{0.70,0.23,0.93}
\definecolor {darkorchid3}         {rgb}{0.60,0.20,0.80}
\definecolor {darkorchid4}         {rgb}{0.41,0.13,0.55}
\definecolor {purple1}             {rgb}{0.61,0.19,1.00}
\definecolor {purple2}             {rgb}{0.57,0.17,0.93}
\definecolor {purple3}             {rgb}{0.49,0.15,0.80}
\definecolor {purple4}             {rgb}{0.33,0.10,0.55}
\definecolor {mediumpurple1}       {rgb}{0.67,0.51,1.00}
\definecolor {mediumpurple2}       {rgb}{0.62,0.47,0.93}
\definecolor {mediumpurple3}       {rgb}{0.54,0.41,0.80}
\definecolor {mediumpurple4}       {rgb}{0.36,0.28,0.55}
\definecolor {thistle1}            {rgb}{1.00,0.88,1.00}
\definecolor {thistle2}            {rgb}{0.93,0.82,0.93}
\definecolor {thistle3}            {rgb}{0.80,0.71,0.80}
\definecolor {thistle4}            {rgb}{0.55,0.48,0.55}
\definecolor {gray1}               {rgb}{0.01,0.01,0.01}
\definecolor {gray2}               {rgb}{0.02,0.02,0.02}
\definecolor {gray3}               {rgb}{0.03,0.03,0.03}
\definecolor {gray4}               {rgb}{0.04,0.04,0.04}
\definecolor {gray5}               {rgb}{0.05,0.05,0.05}
\definecolor {gray6}               {rgb}{0.06,0.06,0.06}
\definecolor {gray7}               {rgb}{0.07,0.07,0.07}
\definecolor {gray8}               {rgb}{0.08,0.08,0.08}
\definecolor {gray9}               {rgb}{0.09,0.09,0.09}
\definecolor {gray10}              {rgb}{0.10,0.10,0.10}
\definecolor {gray11}              {rgb}{0.11,0.11,0.11}
\definecolor {gray12}              {rgb}{0.12,0.12,0.12}
\definecolor {gray13}              {rgb}{0.13,0.13,0.13}
\definecolor {gray14}              {rgb}{0.14,0.14,0.14}
\definecolor {gray15}              {rgb}{0.15,0.15,0.15}
\definecolor {gray16}              {rgb}{0.16,0.16,0.16}
\definecolor {gray17}              {rgb}{0.17,0.17,0.17}
\definecolor {gray18}              {rgb}{0.18,0.18,0.18}
\definecolor {gray19}              {rgb}{0.19,0.19,0.19}
\definecolor {gray20}              {rgb}{0.20,0.20,0.20}
\definecolor {gray21}              {rgb}{0.21,0.21,0.21}
\definecolor {gray22}              {rgb}{0.22,0.22,0.22}
\definecolor {gray23}              {rgb}{0.23,0.23,0.23}
\definecolor {gray24}              {rgb}{0.24,0.24,0.24}
\definecolor {gray25}              {rgb}{0.25,0.25,0.25}
\definecolor {gray26}              {rgb}{0.26,0.26,0.26}
\definecolor {gray27}              {rgb}{0.27,0.27,0.27}
\definecolor {gray28}              {rgb}{0.28,0.28,0.28}
\definecolor {gray29}              {rgb}{0.29,0.29,0.29}
\definecolor {gray30}              {rgb}{0.30,0.30,0.30}
\definecolor {gray31}              {rgb}{0.31,0.31,0.31}
\definecolor {gray32}              {rgb}{0.32,0.32,0.32}
\definecolor {gray33}              {rgb}{0.33,0.33,0.33}
\definecolor {gray34}              {rgb}{0.34,0.34,0.34}
\definecolor {gray35}              {rgb}{0.35,0.35,0.35}
\definecolor {gray36}              {rgb}{0.36,0.36,0.36}
\definecolor {gray37}              {rgb}{0.37,0.37,0.37}
\definecolor {gray38}              {rgb}{0.38,0.38,0.38}
\definecolor {gray39}              {rgb}{0.39,0.39,0.39}
\definecolor {gray40}              {rgb}{0.40,0.40,0.40}
\definecolor {gray42}              {rgb}{0.42,0.42,0.42}
\definecolor {gray43}              {rgb}{0.43,0.43,0.43}
\definecolor {gray44}              {rgb}{0.44,0.44,0.44}
\definecolor {gray45}              {rgb}{0.45,0.45,0.45}
\definecolor {gray46}              {rgb}{0.46,0.46,0.46}
\definecolor {gray47}              {rgb}{0.47,0.47,0.47}
\definecolor {gray48}              {rgb}{0.48,0.48,0.48}
\definecolor {gray49}              {rgb}{0.49,0.49,0.49}
\definecolor {gray50}              {rgb}{0.50,0.50,0.50}
\definecolor {gray51}              {rgb}{0.51,0.51,0.51}
\definecolor {gray52}              {rgb}{0.52,0.52,0.52}
\definecolor {gray53}              {rgb}{0.53,0.53,0.53}
\definecolor {gray54}              {rgb}{0.54,0.54,0.54}
\definecolor {gray55}              {rgb}{0.55,0.55,0.55}
\definecolor {gray56}              {rgb}{0.56,0.56,0.56}
\definecolor {gray57}              {rgb}{0.57,0.57,0.57}
\definecolor {gray58}              {rgb}{0.58,0.58,0.58}
\definecolor {gray59}              {rgb}{0.59,0.59,0.59}
\definecolor {gray60}              {rgb}{0.60,0.60,0.60}
\definecolor {gray61}              {rgb}{0.61,0.61,0.61}
\definecolor {gray62}              {rgb}{0.62,0.62,0.62}
\definecolor {gray63}              {rgb}{0.63,0.63,0.63}
\definecolor {gray64}              {rgb}{0.64,0.64,0.64}
\definecolor {gray65}              {rgb}{0.65,0.65,0.65}
\definecolor {gray66}              {rgb}{0.66,0.66,0.66}
\definecolor {gray67}              {rgb}{0.67,0.67,0.67}
\definecolor {gray68}              {rgb}{0.68,0.68,0.68}
\definecolor {gray69}              {rgb}{0.69,0.69,0.69}
\definecolor {gray70}              {rgb}{0.70,0.70,0.70}
\definecolor {gray71}              {rgb}{0.71,0.71,0.71}
\definecolor {gray72}              {rgb}{0.72,0.72,0.72}
\definecolor {gray73}              {rgb}{0.73,0.73,0.73}
\definecolor {gray74}              {rgb}{0.74,0.74,0.74}
\definecolor {gray75}              {rgb}{0.75,0.75,0.75}
\definecolor {gray76}              {rgb}{0.76,0.76,0.76}
\definecolor {gray77}              {rgb}{0.77,0.77,0.77}
\definecolor {gray78}              {rgb}{0.78,0.78,0.78}
\definecolor {gray79}              {rgb}{0.79,0.79,0.79}
\definecolor {gray80}              {rgb}{0.80,0.80,0.80}
\definecolor {gray81}              {rgb}{0.81,0.81,0.81}
\definecolor {gray82}              {rgb}{0.82,0.82,0.82}
\definecolor {gray83}              {rgb}{0.83,0.83,0.83}
\definecolor {gray84}              {rgb}{0.84,0.84,0.84}
\definecolor {gray85}              {rgb}{0.85,0.85,0.85}
\definecolor {gray86}              {rgb}{0.86,0.86,0.86}
\definecolor {gray87}              {rgb}{0.87,0.87,0.87}
\definecolor {gray88}              {rgb}{0.88,0.88,0.88}
\definecolor {gray89}              {rgb}{0.89,0.89,0.89}
\definecolor {gray90}              {rgb}{0.90,0.90,0.90}
\definecolor {gray91}              {rgb}{0.91,0.91,0.91}
\definecolor {gray92}              {rgb}{0.92,0.92,0.92}
\definecolor {gray93}              {rgb}{0.93,0.93,0.93}
\definecolor {gray94}              {rgb}{0.94,0.94,0.94}
\definecolor {gray95}              {rgb}{0.95,0.95,0.95}
\definecolor {gray97}              {rgb}{0.97,0.97,0.97}
\definecolor {gray98}              {rgb}{0.98,0.98,0.98}
\definecolor {gray99}              {rgb}{0.99,0.99,0.99}
\definecolor {darkgrey}            {rgb}{0.66,0.66,0.66}

\newcommand{\resp}[1]{[resp.\ #1]}

\newcommand{\TODO}[1]{{}}

\newcommand{\ignore}[1]{}

\newcommand{\RSTODO}[1]{{\bf \textcolor{darkgreen}{{\fbox{RS TODO:} #1}}}}
\renewcommand{\RSTODO}[1]{}

\newcommand{\ignoreinshort}[1]{}
\newcommand{\ignoreinlong}[1]{{#1}}

\providecommand{\longversion}{true}

\def\makenewenumerate#1#2{%
    \newcounter{cnt#1}
    \newenvironment{#1}%
    {\begin{list}{\makebox[0pt][r]{#2}}%
            {\setlength{\itemsep}{0pt}%
                \setlength{\parsep}{.2em}%
                \setlength{\leftmargin}{1.5em}%
                \setlength{\labelwidth}{.4em}%
                \usecounter{cnt#1}}}
            {\end{list}}}

\makenewenumerate{myenumerate}{\arabic{cntmyenumerate}.}

\makenewenumerate{renumerate}{\rm(\roman{cntrenumerate})}
\makenewenumerate{renumerateprime}{\rm(\roman{cntrenumerateprime}$'$)}
\makenewenumerate{renumeratesecond}{\rm(\roman{cntrenumeratesecond}$''$)}

\makenewenumerate{aenumerate}{({\it\alph{cntaenumerate}})}
\makenewenumerate{aenumerateprime}{({\it\alph{cntaenumerateprime}$'$})}
\makenewenumerate{aenumeratesecond}{({\it\alph{cntaenumeratesecond}$''$})}

\newcommand{\sref}[1]{\S{}\ref{#1}}

\newcommand{\set}[1]{\ensuremath{\{{#1}\}}\xspace}

\newcommand{\limp}{\ensuremath{\leftarrow}\xspace}
\renewcommand{\iff}{\ensuremath{\leftrightarrow}\xspace}
\newcommand{\defas}{\ensuremath{\stackrel{\text{\scalebox{.7}{def}}}{=}}\xspace}

\newcommand{\obdds}{\text{OBDD}s\xspace}
\newcommand{\obdd}{\textrm{OBDD}\xspace}

\newcommand{\sdds}{\text{SDD}s\xspace}
\newcommand{\sdd}{\textrm{SDD}\xspace}
\newcommand{\sddof}[1]{\textrm{SDD}{\ensuremath{(#1)}}\xspace}

\newcommand{\NNF}{\textrm{NNF}}
\newcommand{\NNFs}{\textrm{NNF}s}

\newcommand{\dDNNF}{\textrm{d-DNNF}}
\newcommand{\dDNNFs}{\textrm{d-DNNF}s}

\newcommand{\dDNNFof}[1]{\textrm{d-DNNF}{\ensuremath{(#1)}}}

\newcommand{\sdDNNF}{\textrm{sd-DNNF}}

\newcommand{\stardDNNFof}[1]{\textrm{*d-DNNF}{\ensuremath{(#1)}}}
\newcommand{\stardDNNF}{\textrm{*d-DNNF}}

\newcommand{\redOf}[2]{#1\ensuremath{_{\allalpha}^{\sf red}(#2)}}
\newcommand{\extOf}[2]{#1\ensuremath{_{\allalpha}^{\sf ext}(#2)}}

\newcommand{\Smooth}[1]{\ensuremath{{\sf Smooth}(#1)}}
\newcommand{\AlphaSmooth}{\ensuremath{\mbox{smooth}_{\allalpha}}}

\newcommand{\vi}{\ensuremath{\varphi}}
\renewcommand{\vi}[1]{\ensuremath{\varphi_{#1}}}

\newcommand{\vip}{\ensuremath{\varphi^p}\xspace}
\renewcommand{\vip}[1]{\ensuremath{\varphi^p_{#1}}\xspace}
\newcommand{\mup}{\ensuremath{\mu^p}\xspace}
\newcommand{\etap}{\ensuremath{\eta^p}\xspace}

\newcommand{\atoms}[1]{\ensuremath{Atoms(#1)}\xspace}
\renewcommand{\atoms}[1]{\ensuremath{\mathsf{Atoms(#1)}\xspace}}

\newcommand{\C}{\ensuremath{\mathcal{C}}\xspace}

\newcommand\mysout{\bgroup \markoverwith{{-}}\ULon}
\newcommand\nosout{\bgroup \markoverwith{{ }}\ULon}
\definecolor{mygray}{rgb}{0.90,0.90,0.90}
\definecolor{mywhite}{rgb}{1.00,1.00,1.00}

\newcommand{\proptofol}{\ensuremath{{\cal B}2{\cal T}}\xspace}
\newcommand{\foltoprop}{\ensuremath{{\cal T}2{\cal B}}\xspace}

\newcommand{\B}{\ensuremath{\mathcal{B}}\xspace}
\newcommand{\T}{\ensuremath{\mathcal{T}}\xspace}

\newcommand{\smttt}[1]{\ensuremath{\text{SMT}(#1)}\xspace}

\newcommand{\euf}{\ensuremath{\mathcal{EUF}}\xspace}

\newcommand{\eq}{\ensuremath{\mathcal{E}}\xspace}

\newcommand{\larat}{\ensuremath{\mathcal{LA}(\mathbb{Q})}\xspace}
\newcommand{\laint}{\ensuremath{\mathcal{LA}(\mathbb{Z})}\xspace}

\renewcommand{\larat}{\ensuremath{\mathcal{LRA}}\xspace}
\renewcommand{\laint}{\ensuremath{\mathcal{LIA}}\xspace}

\newcommand{\bv}{\ensuremath{\mathcal{BV}}\xspace}
\newcommand{\mem}{\ensuremath{\mathcal{AR}}\xspace}

\newcommand{\smtlarat}{\smttt{\larat}}

\newcommand{\Tmodels}{\models_{\T}}

\newcommand{\pmodels}{\models_p}

\newcommand{\mathsat}{\textsc{MathSAT}\xspace}

\newcommand{\mathsatfive}{\textsc{MathSAT5}\xspace}

\renewcommand{\TODO}[1]{\todo[inline,color=green!40,caption={}]{{\small{TODO: #1}}}}

\renewcommand{\RSTODO}[1]{\todo[inline,color=green!40,caption={}]{{\small{RS TODO: #1}}}}

\newcommand{\allA}{\ensuremath{\mathbf{A}}\xspace}
\newcommand{\allalpha}{\ensuremath{\boldsymbol{\alpha}}\xspace}
\newcommand{\allalphaprime}{\ensuremath{\boldsymbol{\alpha}'}\xspace}

\newcommand{\residual}[2]{\ensuremath{#1|_{#2}}\xspace}

\renewcommand{\B}{\ensuremath{\mathbb{B}}\xspace}
\newcommand{\pequiv}{\equiv_p}

\newcommand{\allsym}[1]{\ensuremath{{\underline{\boldsymbol#1}}}}
\renewcommand{\allsym}[1]{\ensuremath{{\boldsymbol#1}}}
\renewcommand{\allA}{\allsym{A}}

\renewcommand{\allalpha}{\allsym{\alpha}}

\renewcommand{\C}[1]{\ensuremath{C_{#1}}}

\newcommand{\via}[1]{\ensuremath{\varphi_{#1}{[\allalpha]}}}

\newcommand{\psia}[1]{\ensuremath{\psi_{#1}{[\allalpha]}}}

\newcommand{\psip}[1]{\ensuremath{\psi_{#1}^p}}
\newcommand{\psiap}[1]{\ensuremath{\psi_{#1}^p{[\allA]}}}

\newcommand{\viap}[1]{\ensuremath{\varphi_{#1}^{p}{[\allA]}}}

\newcommand{\etaprime}[1]{\ensuremath{\eta'_{#1}}}
\newcommand{\rhoprime}[1]{\ensuremath{\rho'_{#1}}}

\newcommand{\bequiv}{\ensuremath{\equiv_{\B}}}
\newcommand{\Tequiv}{\ensuremath{\equiv_{\T}}}

\newcommand{\theorycl}{\ensuremath{\{C_{1},\ldots,C_{K}\}}}

\renewcommand{\pmodels}{\models_{\B}}
\newcommand{\bmodels}{\models_{\B}}

\newcommand{\Aone}{\ensuremath{A_1}}
\newcommand{\Atwo}{\ensuremath{A_2}}

\newcommand{\aone}{\ensuremath{(x \le 0)}\xspace}
\newcommand{\atwo}{\ensuremath{(x = 1)}\xspace}
\newcommand{\athree}{\ensuremath{(x \ge 2)}\xspace}

\newcommand{\btwo}{\ensuremath{(x \ge 1)}\xspace}
\newcommand{\atoma}{\aone}
\newcommand{\atomb}{\btwo}
\newcommand{\atomc}{\athree}

\newcommand{\CTTA}[1]{\ensuremath{H}(#1)} %
\newcommand{\ITTA}[1]{\ensuremath{P}(#1)} %
\renewcommand{\CTTA}[1]{\ensuremath{H_{\allalpha}(#1)}} %
\renewcommand{\ITTA}[1]{\ensuremath{P_{\allalpha}(#1)}} %
\newcommand{\CTTAprime}[1]{\ensuremath{H_{\allalpha{}'}(#1)}} 
\newcommand{\ITTAprime}[1]{\ensuremath{P_{\allalpha{}'}(#1)}} 
\newcommand{\TLEMMAS}[1]{\ensuremath{Cl_{\allalpha}(#1)}}

\newcommand{\OrOf}[2]{\ensuremath{\bigvee_{#2\in#1}#2}}

\newcommand{\Treduceda}{\T-reduced\xspace}
\renewcommand{\Treduceda}{\T-reduced$_{\allalpha}$\xspace}
\renewcommand{\Treduceda}{$\T$-reduced$_{\allalpha}$\xspace}

\newcommand{\Treduction}{\T-reduction\xspace}
\newcommand{\Treduced}{{\T-reduced}\xspace}
\newcommand{\TredOf}[1]{\ensuremath{#1_{\sf red}}}
\renewcommand{\TredOf}[1]{\T{}\!{\sf red}(\ensuremath{#1})}
\renewcommand{\TredOf}[1]{\T{}\!{\sf red}\ensuremath{_{\allalpha}(#1)}}
\newcommand{\Textendeda}{\T-extended\xspace}
\renewcommand{\Textendeda}{\T-extended$_{\allalpha}$\xspace}
\renewcommand{\Textendeda}{$\T$-extended$_{\allalpha}$\xspace}

\newcommand{\Textended}{{\T-extended\xspace}}
\newcommand{\Textension}{\T-extension\xspace}
\newcommand{\TextOf}[1]{\ensuremath{#1_{\sf ext}}}
\renewcommand{\TextOf}[1]{\T{}\!{\sf ext}\ensuremath{_{\allalpha}(#1)}}

\newcommand{\dfour}{\textsc{D4}\xspace}
\newcommand{\cudd}{\textsc{CUDD}\xspace}
\newcommand{\pysdd}{\textsc{PySDD}\xspace}
\newcommand{\ddnnife}{\textsc{ddnnife}\xspace}

\DeclareMathSymbol{\mhyphen}{\mathord}{AMSa}{"39}
\newcommand{\Input}{\textbf{input:}\xspace}
\newcommand{\Output}{\textbf{output:}\xspace}
\newcommand{\vimu}[1]{\ensuremath{\residual{\varphi{}_{#1}}{\mu}}}
\newcommand{\dDNNFcompile}{\ensuremath{\mathsf{dDNNFcompile}}}

\newcommand{\Partition}{\ensuremath{\mathsf{Partition}}}
\newcommand{\Select}{\ensuremath{\mathsf{Select}}}
\newcommand{\LemmaEnumerator}{\ensuremath{\mathsf{\T\!\mhyphen{}lemmaEnum}}}
\renewcommand{\LemmaEnumerator}{\ensuremath{\mathsf{\T\!\mhyphen{}lemmaEnum_{\allalpha}}}}
\newcommand{\BoolNNFCompiler}{\ensuremath{\mathsf{*dDNNFcompile}}}

\newcommand{\TredNNFCompilerOf}[1]{\ensuremath{\mathsf{\T{}^{red}_{\allalpha}}\mhyphen{}\BoolNNFCompiler(#1)}}
\newcommand{\TextNNFCompilerOf}[1]{\ensuremath{\mathsf{\T{}^{ext}_{\allalpha}}\mhyphen{}\BoolNNFCompiler(#1)}}

\newcommand{\SharpSAT}{\ensuremath{{\sf \#SAT\!_{\allA}}\xspace}}
\newcommand{\SharpSMT}{\ensuremath{{\sf \#SMT}\xspace}}
\renewcommand{\SharpSMT}{\ensuremath{{\mbox{$\T_{\allalpha}$-}{\sf \#SAT}}\xspace}}
\renewcommand{\SharpSMT}{\ensuremath{{\mbox{$\T$-}{\sf \#SAT_{\allalpha}}}\xspace}}
\newcommand{\SharpSATOf}[1]{\ensuremath{\mathit{\sf \#SAT\!_{\allA}}(#1)\xspace}}
\newcommand{\SharpSMTOf}[1]{\ensuremath{\mathit{\mbox{$\T_{\allalpha}$-}{\sf \#SAT}}(#1)\xspace}}
\renewcommand{\SharpSMTOf}[1]{\ensuremath{\mathit{\mbox{$\T$-}{\sf \#SAT_{\allalpha}}}(#1)\xspace}}
\newcommand{\AllSAT}{\ensuremath{\sf AllSAT\!_{\allA}}\xspace}
\newcommand{\AllSMT}{{\sf AllSMT}\xspace}
\renewcommand{\AllSMT}{\ensuremath{{\mbox{$\T_{\allalpha}$-}{\sf AllSAT}_{\allalpha}}\xspace}}
\renewcommand{\AllSMT}{\ensuremath{{\mbox{$\T$-}{\sf AllSAT_{\allalpha}}}\xspace}}
\newcommand{\AllSATOf}[1]{\ensuremath{\mathit{\sf AllSAT}\!_{\allA}(#1)\xspace}}
\newcommand{\AllSMTOf}[1]{\ensuremath{\mathit{\sf AllSMT}(#1)\xspace}}
\renewcommand{\AllSMTOf}[1]{\ensuremath{\mathit{\mbox{$\T_{\allalpha}$-}{\sf AllSAT}}(#1)\xspace}}
\renewcommand{\AllSMTOf}[1]{\ensuremath{\mathit{\mbox{$\T$-}{\sf AllSAT_{\allalpha}}}(#1)\xspace}}

\newcommand{\resvil}{\ensuremath{\residual{\varphi{}}{l}}}

\usepackage{tikz}
\usetikzlibrary{arrows.meta, positioning, calc, shadows, shapes.geometric}

\tikzset{
    andNode/.style={circle, draw=teal, text=teal, thick, minimum size=6mm, inner sep=0pt, font=\large},
    orNode/.style={circle, draw=orangered, text=orangered, thick, minimum size=6mm, inner sep=0pt, font=\large},
    lit/.style={font=\Large, thick},
    edge/.style={->, thick, >=stealth}
}

\newif\ifnoproofs
\noproofsfalse

\makeatletter%
\@mparswitchfalse%
\makeatother%
\reversemarginpar%

\newcommand{\citet}[1]{\cite{#1}}

\title{%
  d-DNNF Modulo Theories:\texorpdfstring{\\}{}
   A General Framework for Polytime SMT Queries
} 

\author{Gabriele {Masina}}{DISI, University of Trento, Italy}{gabriele.masina@unitn.it}{https://orcid.org/0000-0001-8842-4913}{}
\author{Emanuele {Civini}}{DISI, University of Trento, Italy}{emanuele.civini@studenti.unitn.it}{https://orcid.org/0009-0009-7212-345X}{}
\author{Massimo {Michelutti}}{DISI, University of Trento, Italy}{massimo.michelutti00@gmail.com}{https://orcid.org/0009-0002-3490-2910}{}
\author{Giuseppe {Spallitta}}{Rice University, TX, USA}{gs81@rice.edu}{https://orcid.org/0000-0002-4321-4995}{}
\author{Roberto {Sebastiani}}{DISI, University of Trento,
Italy}{roberto.sebastiani@unitn.it}{https://orcid.org/0000-0002-0989-6101}{%
}

\titlerunning{d-DNNF Modulo Theories: A General Framework for Polytime SMT Queries} %

\authorrunning{G. Masina, E. Civini, M. Michelutti, G. Spallitta, and  R. Sebastiani} %

\Copyright{Jane Open Access and Joan R. Public} %

\ccsdesc[500]{Theory of computation~Automated reasoning}
\ccsdesc[500]{Theory of computation~Constraint and logic programming}
\ccsdesc[500]{Computing methodologies~Knowledge representation and reasoning}

\keywords{SMT, Knowledge Compilation, d-DNNF} %

\category{} %

\supplement{}

\supplementdetails[subcategory={Tool source code}, cite={}, swhid={}]{Software}{https://github.com/ecivini/tddnnf}

\supplementdetails[subcategory={Experiments Source Code}, cite={}, swhid={}]{Software}{https://github.com/ecivini/tddnnf-testbench}

\EventEditors{Alexey Ignatiev and Stefan Szeider}
\EventNoEds{2}
\EventLongTitle{29th International Conference on Theory and Applications of Satisfiability Testing (SAT 2026)}
\EventShortTitle{SAT 2026}
\EventAcronym{SAT}
\EventYear{2026}
\EventDate{July 20--23, 2026}
\EventLocation{Lisbon, Portugal}
\EventLogo{}
\SeriesVolume{377}
\ArticleNo{31}

\begin{document}

\maketitle

\renewcommand{\longversion}{false}
\ifthenelse{\equal{\longversion}{true}}
{%
  \renewcommand{\ignoreinshort}[1]{{\textcolor{midnightblue}{#1}}}
  \renewcommand{\ignoreinlong}[1]{}
  \specialcomment{IGNOREINSHORT}{\begingroup\color{midnightblue}}{\endgroup}
  \excludecomment{IGNOREINLONG}
}%
{%
 \renewcommand{\ignoreinshort}[1]{}
\renewcommand{\ignoreinlong}[1]{#1}
 \excludecomment{IGNOREINSHORT}
 \specialcomment{IGNOREINLONG}{\begingroup}{\endgroup}
}
 \excludecomment{IGNORE}

\begin{abstract}
    In Knowledge Compilation (KC) a propositional knowledge base is
compiled off-line into some target form, typically into
deterministic decomposable negation normal form  (d-DNNF) or one of
its subcases,
which is then used on-line
to answer a large number of queries in polytime, such as clausal
entailment, model counting, and others.
The general idea is to push as much of the computational effort into 
the off-line compilation phase, which is amortized over all on-line
polytime queries.

In this paper, we present for the first time
a novel and general technique to leverage
d-DNNF compilation and querying to SMT level.
Intuitively, before d-DNNF compilation, the input SMT formula is combined with a
list of pre-computed ad-hoc theory lemmas, so that the queries at SMT level reduce to those
at propositional level. 
This approach has several features:
(i) 
it works for every theory, or theory combination thereof;
(ii)
it works for all forms of d-DNNF;
(iii)
it is easy to
implement on top of any d-DNNF compiler and any %
theory-lemma enumerator,
which are used as black boxes;
(iv)
most importantly,
{\em these compiled SMT d-DNNFs can be queried in
  polytime by means of a standard propositional d-DNNF reasoner.}
As proof of concept,
we have implemented a %
tool on top of state-of-the-art d-DNNF
packages and of the MathSAT SMT solver. Some preliminary empirical
evaluation supports the feasibility and effectiveness of the approach.

\end{abstract}

\section{Introduction}%
\label{sec:intro}

\subparagraph*{Context.}
Knowledge Compilation (KC), see e.g.~\cite{darwicheKnowledgeCompilationMap2002},
is a research field that aims to address the
computational intractability of general propositional reasoning.
In KC, a propositional knowledge base is
compiled off-line into some target form, typically 
(some subcase of) 
{\em deterministic decomposable negation normal form  (\dDNNF{})}~\cite{darwicheKnowledgeCompilationMap2002}.
Such an encoded knowledge base  is then used on-line
to answer a large number of queries in polytime.
The general idea is to push as much of the computational effort as
possible into 
the off-line compilation phase, which may be computationally demanding,
but whose cost is then amortized over all on-line
polytime queries.

\dDNNFs{} are particular forms of DAG-encoded NNF propositional
formulas, which are both {\em deterministic} (in every disjunctive subformula all disjuncts
are mutually inconsistent) and {\em decomposable} (in every
conjunctive subformula all conjuncts do not share variables).
Remarkably, with a \dDNNF{} formula, the checks for 
{\em consistency  (CO)},
{\em validity  (VA)},
{\em clausal entailment (CE)},
{\em implicant (IM)},
{\em model counting (CT)} and
{\em model enumeration (ME)}~%
\footnote{Notationally, we adopt the shortcuts ``CO'', ``VA'', ``CE'', ``IM'', ``CT'', ``ME'', ``EQ'', ``SE'' from~%
\cite{darwicheKnowledgeCompilationMap2002}.}
can be performed in polytime~%
\cite{darwicheKnowledgeCompilationMap2002}.~%
\footnote{For model
enumeration (ME),
``polytime'' means ``polynomial wrt.\ the number of models enumerated''~%
\cite{darwicheKnowledgeCompilationMap2002}.}
Also, subcases of \dDNNFs{} such as SDDs~\cite{darwicheSDDNewCanonical2011} 
and OBDDs~\cite{bryantGraphBasedAlgorithmsBoolean1986}, under specific ordering
conditions, allow for both {\em equivalence (EQ)} and {\em sentential entailment (SE)}
checks in polytime~\cite{darwicheKnowledgeCompilationMap2002}, and
they are {\em canonical}, that is, SDDs/OBDDs encodings of equivalent
formulas are %
syntactically identical.~%
\footnote{Notice that OBDDs~\cite{bryantGraphBasedAlgorithmsBoolean1986} were conceived and are
mostly used in the field of formal verification, and are usually
represented as if-then-else DAGs with two leaves: $\top,\bot$. However, if we rewrite each 
``if $A_i$ then $\residual{\vi{}}{A_i}$ else $\residual{\vi{}}{\neg
A_i}$'' node of an OBDD into
``$(A_i\wedge\residual{\vi{}}{A_i})\vee(\neg
A_i\wedge\residual{\vi{}}{\neg A_i})$'' recursively,
then we obtain a \dDNNF{}~\cite{darwicheKnowledgeCompilationMap2002}.
}

Although extensive, the literature on KC, particularly on \dDNNFs{}, is mostly restricted to
the propositional case, which limits its expressiveness and
application potential.

\subparagraph*{Contributions.}

In this paper, we investigate for the first time the problem of
leveraging \dDNNF{} compilation and querying  to the SMT level, with the general
goal of preserving the properties of the corresponding propositional
\dDNNFs{}, {\em in particular the polynomial-time complexity of the above-mentioned
queries}.
We notice first that the traditional ``lazy'' lemmas-on-demand
approach, which is widely adopted in standard SMT solving and enumeration %
(see
e.g.~\cite{barrettSatisfiabilityModuloTheories2021,lahiriSMTTechniquesFast2006}),
does not seem to be suitable 
for SMT-level \dDNNF{} compilation. 
Then, we analyze the problem theoretically in detail, 
and we introduce a novel and very general framework 
extending \dDNNFs{} to the realm of SMT, identifying and proving the theoretical 
properties that allow us to comply with our general goal above.

Intuitively, before \dDNNF{} compilation, the input SMT formula can be ``eagerly''
combined with an ad-hoc 
list of pre-computed theory lemmas, so that the queries at the SMT level reduce to those
at the propositional level. 
Overall, our approach has several features:
\begin{itemize}
\item
it works for every theory, or theory combination thereof;
\item
it works for all forms of d-DNNF;
\item
it is easy to
implement on top of any d-DNNF compiler and any %
theory-lemma enumerator,
which are used as black boxes;
\item
most importantly,
{\em these compiled SMT d-DNNFs can be queried in
  polytime by means of a standard propositional d-DNNF reasoner.}
\end{itemize}
Notice that %
the ``eager'' generation of all the needed theory lemmas upfront,
although inefficient and obsolete in standard SMT solving, is very suitable for KC,
whose rationale is to shift the bulk of computational effort to the off-line
compilation phase, thereby optimizing the efficiency of the on-line
query-answering phase. To this extent, we move all the 
theory-reasoning effort to the %
compilation phase, by means of off-line
theory-lemma enumeration. %
{(We recall that for many relevant theories %
(e.g.\ \laint{}, \mem{}, \bv{}, \ldots) even the satisfiability of the
conjunctive fragment is NP-hard~\cite{barrettSatisfiabilityModuloTheories2021,sebastianiColorsMakeTheories2016}.)}

We have implemented an SMT-level \dDNNF{} compilation and querying
tool on top of state-of-the-art d-DNNF
packages and of our novel theory-agnostic lemma enumerator~%
\cite{civiniEagerEncodingsTheoryAgnostic2026}.
A preliminary empirical
evaluation supports the feasibility and effectiveness of the approach.

\subparagraph*{Related Work.}
 To the best of our knowledge, there is no previous work on leveraging
 general \dDNNFs{} to SMT level.
 Some works are restricted to OBDDs
 ~\cite{mollerDifferenceDecisionDiagrams1999,bryantBooleanSatisfiabilityTransitivity2002,goelBDDBasedProcedures2003,vandepolBDDRepresentationLogicEquality2005,cavadaComputingPredicateAbstractions2007,chakiDecisionDiagramsLinear2009,micheluttiCanonicalDecisionDiagrams2024},
 a couple to SDDs
 \cite{dosmartiresExactApproximateWeighted2019,micheluttiCanonicalDecisionDiagrams2024}.
(In particular, \cite{dosmartiresExactApproximateWeighted2019}
 produces SDDs of the Boolean abstraction of the formula and filters out
 the \T-inconsistent models during enumeration.)
 \cite{derkinderenTopDownKnowledgeCompilation2023} sketches a possible 
 procedure  to produce
 a decision \dDNNF{} which is 
 free from \T-inconsistent cubes by DPLL-like lazy SMT enumeration (without
 partitioning, see \sref{sec:background}).
 A comprehensive survey can be found
 in~\cite{micheluttiCanonicalDecisionDiagrams2024}.
 
In~\cite{vandenbroeckLiftedProbabilisticInference2011}, the authors propose a KC framework for weighted model counting for a function-free finite-domain fragment of uninterpreted first-order logic, thus addressing a related, yet substantially different problem from ours. In contrast, our setting targets SMT, requiring reasoning modulo background theories through \T-lemmas, and supports modulo-theory extensions of the full range of polytime queries available on \dDNNFs{}.

 The closest to this paper is our own work in
 \cite{micheluttiCanonicalDecisionDiagrams2024},
 where we proposed a
 theory-agnostic method for building {\em canonical decision diagrams}
 (e.g., \obdds{} and \sdds{}) {\em modulo %
 theories}, by augmenting the input formula $\vi{}$ with all the \T-lemmas
 which could be obtained from a total SMT enumeration call on $\vi{}$. 
 The main focus and key contribution of
 \cite{micheluttiCanonicalDecisionDiagrams2024} is that the resulting decision diagrams are  {\em
   theory-canonical}, i.e., \T-equivalent
 \T-formulas are encoded into the same OBDD/SDD, so that their
 \T-equivalence can be checked in
 constant time. 
 To this extent, this paper 
 generalizes and
 extends \cite{micheluttiCanonicalDecisionDiagrams2024} to all forms
 of 
 \dDNNFs{}, focusing on leveraging all the respective
 polytime queries to the SMT level. In particular, all the results in
 \cite{micheluttiCanonicalDecisionDiagrams2024} are captured by those
 in this paper.

 Also, %
 the approach in this paper is 
 enabled technologically by our recent theory-agnostic 
 lemma-enumeration techniques described in
 \cite{civiniEagerEncodingsTheoryAgnostic2026}, which are highly
 parallelizable and dramatically
 faster than the %
 technique we used in
 \cite{micheluttiCanonicalDecisionDiagrams2024}, and enable scaling to
 much larger problems.
 These lemma-enumeration techniques are very
 different from those used in
 the early ``eager encodings'' in
 SMT~\cite{velevEffectiveUseBoolean2001,strichmanDecidingSeparationFormulas2002,strichmanSolvingPresburgerLinear2002}, since the latter are specific for very few and easy theories, they do not comply with
theory combination, and may produce lots of lemmas that are not
necessary to our goals~\cite{civiniEagerEncodingsTheoryAgnostic2026}.

\subparagraph*{Content.}
In \sref{sec:background}, we introduce the necessary background on \dDNNFs{} and SMT. Next, in \sref{sec:problem}, we analyze the challenges of lifting \dDNNFs{} and their properties to the SMT level. In \sref{sec:framework}, we present a formal framework for reducing several SMT-level queries to propositional queries, when the input formula is augmented with a suitable set of \T-lemmas. 
In \sref{sec:nnf} we show how to apply this framework to \dDNNFs{}, exploiting their structural properties to support a wide range of queries in polynomial time. In \sref{sec:experiments}, we present preliminary empirical results on the effectiveness of our approach. Finally, in \sref{sec:conclusions}, we draw our conclusions and outline future research directions.

\section{Background}%
\label{sec:background}

\subparagraph*{Notation \& terminology.}
We assume the reader is familiar with the syntax, semantics, and results of propositional and first-order logics.
We adopt the following terminology and notation.

{Propositional satisfiability (SAT) is the problem of
deciding the satisfiability of propositional formulas.}
A propositional formula \vi{} is either a Boolean constant ($\top$ or $\bot$
representing ``true'' and ``false'', respectively), a Boolean atom, or
a formula built from propositional connectives
($\neg,\land,\lor,\to,\limp,\leftrightarrow,\oplus$) over propositional
formulas.
A {\em literal} is either an atom $A_i$ (a {\em
  positive literal}) or its negation $\neg A_i$ (a {\em negative
  literal}); a \emph{clause} is a
disjunction of literals, and a \emph{cube} is a conjunction of
literals.

A formula is in \emph{Negation Normal Form (NNF)}, if it only contains $\wedge$ or $\vee$ connectives, and negations only occur at the level of literals.
A formula is in \emph{Conjunctive Normal Form (CNF)} if it is a
conjunction ($\wedge$) of clauses. 
We represent (possibly partial) truth assignments as cubes
$\mu\defas\bigwedge l_i$,
with the intended meaning that positive and negative literals $A_i$
and $\neg A_i$ in $\mu$ mean that $A_i$ is assigned to $\top$ and
$\bot$ respectively.
We denote by ``$\residual{\vi{}}{\mu}$'' (``residual of $\vi{}$ under
$\mu$'') the formula obtained by substituting in $\vi{}$ the atoms in $\mu$ with
their assigned truth constants and propagate them in the standard way
($\vi{}\wedge\top\Rightarrow \vi{}$, $\vi{}\wedge\bot\Rightarrow
\bot$, etc.).%

Satisfiability Modulo Theories (SMT) extends SAT to the context of
first-order formulas modulo some background theory \T, which provides
an intended interpretation for constant, function, and predicate
symbols (see, e.g.,~\cite{barrettSatisfiabilityModuloTheories2021}). We restrict to quantifier-free formulas. 
A \T-formula is a combination of theory-specific atoms (\T-atoms) and
Boolean atoms (\B-atoms) via Boolean connectives
(``atoms'' denotes \T- and \B-atoms indifferently).

We denote by \atoms{\vi{}} the set of atoms %
occurring in \vi{}.
For instance, the theory of linear real arithmetic (\larat) provides
the standard interpretations of arithmetic operators ($+$, $-$,
$\cdot$) and relations ($=$, $\neq$, $\le$, $\ge$ $<$, $>$) over the reals. %
\larat-atoms are linear (in)equalities over
rational variables. 
An example of \larat-formula is $((x-y\le 3)\vee (x=z))$, where $x,y,z$ are \larat-variables, and $(x-y\le 3),(x=z)$ are \larat-atoms.
Other theories of interest include, e.g., equalities (\eq), equalities with uninterpreted functions (\euf), bit-vectors (\bv), arrays (\mem), and combinations thereof.
We assume w.l.o.g.\ that we have no \T-valid
  or \T-inconsistent \T-atom 
 (like e.g. $(x\le x)$ or $(x>x)$), because we may convert them
 upfront into $\top$ and $\bot$ respectively.

We assume that Boolean and \T-formulas are represented as
single-rooted Directed Acyclic Graphs (DAGs), in which internal nodes
are labelled with Boolean connectives, and leaf nodes with literals or
Boolean constants, so that syntactically identical sub-formulas are
represented by the same sub-DAG. The \emph{size} of a formula is the
number of nodes in its DAG representation. 

\foltoprop{} is a bijective function (``theory to Boolean''),
called {\em Boolean 
abstraction},
which maps Boolean atoms into themselves,
 \T{}-atoms into fresh Boolean atoms, 
and is homomorphic wrt.\ Boolean connectives and set inclusion.
The function \proptofol{} (``Boolean to theory''), called {\em
  refinement},  is the inverse of \foltoprop. 
  (For instance
$\foltoprop(\{((x-y\le 3)\vee (x=z)) \}) =
\{(A_1\vee A_2) \}$, $A_1$ and $A_2$ being fresh
Boolean variables, and $\proptofol(\{\neg A_1, A_2\})=
\{\neg (x-y\le 3), (x=z)\}$.)%

The symbol
$\allalpha\defas\set{\alpha_i}_i$,
possibly with subscripts or superscripts, denotes
a set of atoms, %
and $\allA\defas\set{A_i}_i$ denotes its Boolean abstraction.
We
denote by
$2^{\allalpha}$ the set of all total truth assignments on \allalpha.
The symbols
$\vi{}$, $\psi$ denote \T{}-formulas, and 
$\mu$, $\eta$, $\rho$ denote conjunctions of \T{}-literals;
 $\vip{}$, $\psi^p$  denote Boolean formulas,
 $\mu^p$,  $\eta^p$, $\rho^p$  
denote conjunctions of Boolean literals 
and we use them as synonyms for the Boolean abstraction of
$\vi{}$, $\psi$, $\mu$, $\eta$,  and $\rho$ respectively,  and vice versa 
(e.g., $\vip{}$ denotes $\foltoprop(\vi{})$,
$\eta$ denotes $\proptofol(\etap)$). 
If $\foltoprop(\eta) \models \foltoprop(\vi{})$, then we say that $\eta$
\emph{propositionally (or \B-)satisfies} $\vi{}$, written
$\eta\pmodels\vi{}$. 
(Notice that if $\eta\pmodels\vi{}$ then
$\eta\Tmodels\vi{}$, but not vice versa.)
The notion of propositional/\B- satisfiability,
entailment and validity follow straightforwardly.
When both $\vi{}\pmodels\psi$ and $\psi\pmodels\vi{}$, we say that
$\vi{}$ and $\psi$ are {\em propositionally/\B- equivalent}, written 
``$\vi{}\bequiv\psi$''.
When both $\vi{}\Tmodels\psi$ and $\psi\Tmodels\vi{}$, we say that
$\vi{}$ and $\psi$ are {\em \T-equivalent}, written 
``$\vi{}\Tequiv\psi$''.
 (Notice that if $\eta\bequiv\vi{}$ then
 $\eta\Tequiv\vi{}$, but not vice versa.)
We call a {\em \T-lemma} any \T-valid clause.

\subparagraph*{Deterministic Decomposable Negation Normal Form, \dDNNFs{}.}
\begin{figure}
  \centering
  \begin{subfigure}[T]{0.25\textwidth}
    \centering
    \begin{tikzpicture}[scale=0.7,transform shape]
    \node[andNode] (root) at (0, 0) {$\wedge$};

    \node[orNode] (or1) at (-1, -1) {$\vee$};
    \node[orNode] (or2) at (1, -1) {$\vee$};

    \node[lit] (na1)  at (-1.5, -2) {$\neg A_1$};
    \node[lit] (a2) at (-0.5, -2) {$A_2$};
    \node[lit] (na2) at (0.5, -2) {$\neg A_2$};
    \node[lit] (a3) at (1.5, -2) {$A_3$};

    \draw[edge] (root) -- (or1);
    \draw[edge] (root) -- (or2);

    \draw[edge] (or1) -- (na1);
    \draw[edge] (or1) -- (a2);

    \draw[edge] (or2) -- (na2);
    \draw[edge] (or2) -- (a3);
\end{tikzpicture}
  \end{subfigure}
  \begin{subfigure}[T]{0.3\textwidth}
    \centering
    \begin{tikzpicture}[scale=0.7,transform shape]
    \node[orNode] (root) at (0, 0) {$\lor$};

    \node[andNode] (and1) at (-1, -1) {$\land$};
    \node[andNode] (and2) at (1, -1) {$\land$};

    \node[lit] (a1)  at (-2, -2) {$A_1$};
    \node[lit] (na1) at (0.5, -2) {$\neg A_1$};
    \node[orNode]  (or2) at (2, -2) {$\lor$};

    \node[andNode]  (and3) at (1.5, -3) {$\land$};
    \node[lit] (na2)  at (2.5, -3) {$\neg A_2$};

    \node[lit] (a2) at (-1, -4) {$A_2$};
    \node[lit] (a3) at (1.5, -4) {$A_3$};

    \draw[edge] (root) -- (and1);
    \draw[edge] (root) -- (and2);

    \draw[edge] (and1) -- (a1);
    \draw[edge] (and1) -- (a2);
    \draw[edge] (and1) -- (a3);

    \draw[edge] (and2) -- (na1);
    \draw[edge] (and2) -- (or2);

    \draw[edge] (or2) -- (and3);
    \draw[edge] (or2) -- (na2);

    \draw[edge] (and3) -- (a2);
    \draw[edge] (and3) -- (a3);
\end{tikzpicture}
  \end{subfigure}
  \begin{subfigure}[T]{0.35\textwidth}
    \centering
    \begin{tikzpicture}[scale=0.7,transform shape]
    \node[orNode] (root) at (0, 0) {$\lor$};

    \node[andNode] (and1) at (-1, -1) {$\land$};
    \node[andNode] (and2) at (1, -1) {$\land$};

    \node[lit] (a1)  at (-2, -2) {$A_1$};
    \node[lit] (na1) at (0.5, -2) {$\neg A_1$};
    \node[orNode]  (or2) at (2, -2) {$\lor$};

    \node[andNode]  (and3) at (1.1, -3) {$\land$};
    \node[andNode] (and4)  at (3, -3) {$\land$};

    \node[orNode] (or3) at (2.5, -4) {$\lor$};
    \node[lit] (na2)  at (3.5, -4) {$\neg A_2$};

    \node[lit] (a2) at (-1, -5) {$A_2$};
    \node[lit] (a3) at (2, -5) {$A_3$};
    \node[lit] (na3) at (3, -5) {$\neg A_3$};

    \draw[edge] (root) -- (and1);
    \draw[edge] (root) -- (and2);

    \draw[edge] (and1) -- (a1);
    \draw[edge] (and1) -- (a2);
    \draw[edge] (and1) -- (a3);

    \draw[edge] (and2) -- (na1);
    \draw[edge] (and2) -- (or2);

    \draw[edge] (or2) -- (and3);
    \draw[edge] (or2) -- (and4);

    \draw[edge] (and3) -- (a2);
    \draw[edge] (and3) -- (a3);

    \draw[edge] (and4) -- (na2);
    \draw[edge] (and4) -- (or3);

    \draw[edge] (or3) -- (a3);
    \draw[edge] (or3) -- (na3);

\end{tikzpicture}
  \end{subfigure}

  \caption{
    Different NNF representations for the formula $(\neg A_1\vee A_2)\wedge(\neg A_2\vee A_3)$: a NNF (left), a \dDNNF{} (center), and a \sdDNNF{} (right).
  }%
  \label{fig:ddnnfprop}
\end{figure}
We recall some definitions and results from~\cite{darwicheKnowledgeCompilationMap2002}.
An NNF formula is \emph{decomposable} if every $\wedge$-node $\psi_1\wedge\ldots\wedge\psi_k$ is such that $\atoms{\psi_i}\cap\atoms{\psi_j}=\emptyset$ for every $i\neq j$.
An NNF formula is \emph{deterministic} if every $\vee$-node $\psi_1\vee\ldots\vee\psi_k$ is such that $\psi_i\wedge\psi_j\models\bot$ for every $i\neq j$.
An NNF formula is \emph{smooth} if every $\vee$-node $\psi_1\vee\ldots\vee\psi_k$ is such that $\atoms{\psi_i}=\atoms{\psi_j}$ for every $i\neq j$.

\emph{\dDNNF{}} is the class of NNF formulas that are both decomposable and deterministic.
\emph{\sdDNNF{}} is the class of NNF formulas that are decomposable, deterministic, and smooth.
  Every \dDNNF{} $\vi{}$ can be converted into an equivalent $\sdDNNF{}$
  in polynomial time, by applying the transformation \Smooth{\vi{}},
  which replaces bottom-up every non-smooth $\vee$-node
  $\psi\defas\psi_1\vee\ldots\vee\psi_K$ with
  $\bigvee_{i=1}^K{\psi_i\wedge\bigwedge_{\alpha\in\atoms{\psi}\setminus\atoms{\psi_i}}(\alpha\vee\neg\alpha)}$
  (Lemma A.1 in~\cite{darwicheKnowledgeCompilationMap2002}).
\cref{fig:ddnnfprop} shows a graphical representation of a NNF formula, and equivalent \dDNNF{} and \sdDNNF{} formulas.
A \dDNNF{} formula $\vi{}$ allows for a wide range of queries to be computed in polynomial time, including:
\begin{itemize}
  \item \emph{[CO] consistency}: decide whether $\vi{}$ is satisfiable;
  \item \emph{[VA] validity}: decide whether $\vi{}$ is valid;
  \item \emph{[CE] clausal entailment}: decide whether $\vi{}\models C$ for a clause $C$;
  \item \emph{[IM] implicant}: decide whether $\gamma\models\vi{}$ for a cube $\gamma$;
  \item \emph{[CT] model counting}: compute the number of total truth
    assignments satisfying $\vi{}$; %
  \item \emph{[ME] model enumeration}: enumerate all total truth
    assignments satisfying $\vi{}$. %
\end{itemize}

\dDNNF{}s have a number of subclasses, which are obtained by imposing further structural properties on their DAG representation~\cite{darwicheKnowledgeCompilationMap2002}. 
Notable ones are \obdds{}~\cite{bryantGraphBasedAlgorithmsBoolean1986} and \sdds{}~\cite{darwicheSDDNewCanonical2011}.

\obdds~\cite{bryantGraphBasedAlgorithmsBoolean1986} are \dDNNFs{} where the root node is a decision node and a total order ``$<$'' on the atoms is imposed.
A decision node is either a constant $\top, \bot$, or a $\vee$-node
having the form $(A \wedge \residual{\vi{}}{A}) \vee (\neg A \wedge
\residual{\vi{}}{\neg A})$, where $A$ is an atom, and
$\residual{\vi{}}{A},\residual{\vi{}}{\neg A}$ are decision nodes. 
In every path from the root to a leaf, each atom is tested only once, following the order ``$<$''. Given a variable ordering, \obdds{} are \emph{canonical}.

\sdds~\cite{darwicheSDDNewCanonical2011} generalize \obdds{} by branching on sentences rather than atoms. An \sdd{} decomposes according to a \emph{vtree} $v$ ---a rooted binary tree whose leaves correspond to atoms. An \sdd{} respecting $v$ is either: a constant $\top, \bot$; a literal if $v$ is a leaf; or a decomposition $\bigvee_{i=1}^n (\vi{i} \wedge \psi_i)$ if $v$ is internal. In the latter case, the \emph{primes} $\vi{i}$ and \emph{subs} $\psi_i$ are \sdds{} respecting the left and right subtrees of $v$, respectively. Moreover, the primes must form a \emph{partition}, i.e., they are consistent, mutually exclusive, and their disjunction is valid. Given a fixed vtree, \sdds{} are \emph{canonical}.

Beyond queries supported by general \dDNNFs{}, given two \obdds{} \resp{\sdds{}} $\vi{}$ and $\psi$ on the same variable ordering \resp{vtree}, the following queries %
are computable
in polynomial time:
\begin{itemize}
  \item \emph{[EQ] equivalence}: decide whether $\vi{}\equiv\psi$;
  \item \emph{[SE] sentential entailment}: decide whether $\vi{}\models\psi$.
\end{itemize}

{Any formula can be converted into an equivalent \dDNNF{},
  though the resulting DAG might be exponentially larger than the input
  formula in the worst case.}
\cref{alg:dDNNFcompile} shows  (a much simplified version of) a
procedure for \dDNNF{} compilation.
  The algorithm assumes the input formula $\vi{}$ to be in CNF.~%
It is first invoked as $\dDNNFcompile(\vi{},\top)$ and it works
recursively, as a classic DPLL-style AllSAT enumeration procedure,
except for the {\em partitioning step} (lines~\ref{alg:dDNNFcompile:partition1}-\ref{alg:dDNNFcompile:partition2}):
if the current residual formula \vimu{} can be partitioned into the conjunction
of $k>1$ residual formulas $\vimu{1},\ldots,\vimu{k}$ which do not share atoms, then
each $\vimu{i}$ is \dDNNF{}-encoded independently and the resulting
\dDNNFs{} are conjoined. 
(Notice that in \cref{alg:dDNNFcompile} the ``$\wedge$'' and
``$\vee$'' node
constructors are implicitly assumed to simplify and propagate the $\top$ and $\bot$
constants:
$\psi\wedge\top\Rightarrow\psi$,
$\psi\wedge\bot\Rightarrow\bot$,
$\psi\vee\top\Rightarrow\top$,
$\psi\vee\bot\Rightarrow\psi$.)
Actual compilers, e.g.,~\cite{lagniezImprovedDecisionDNNFCompiler2017}, use more sophisticated enumeration schemas (e.g.\ CDCL-style)
and, most importantly, adopt forms of {\em component caching}: the \dDNNF{} encodings
of all residuals \vimu{} are cached and reused through the search, so
that multiple instances of the same  \dDNNF{} subformula are shared, yielding
a DAG structure rather than a tree.

\begin{algorithm}[t]\caption{\dDNNFcompile($\vimu{},\mu$)
    \Comment{%
      Recursive, first invoked as
      $\dDNNFcompile(\vi{},\top)$}\label{alg:dDNNFcompile}}
  \Input $\vimu{}$: residual of input formula $\vi{}$ wrt.\ current truth assignment $\mu$  (initially $\vi{}$);\\
  \phantom{\Input}$\mu$: \ \ \ current truth assignment to \atoms{\vi{}}
  (initially $\top$)
\\\Output \dDNNFof{\vimu{}} 
\begin{algorithmic}[1]
  \If {($\vimu{}=\top$)} {\Return $\top$} \EndIf
  \If {($\vimu{}=\bot$)} {\Return $\bot$}\EndIf
  \If {($(l)$ is a unit clause in \vimu{})} {\Return
    $(l\wedge\dDNNFcompile(\residual{\vi{}}{\mu\wedge l},\mu\wedge l))$}\EndIf
  \State $\set{\vimu{1},\ldots,\vimu{k}}\gets\Partition(\vimu{})$
  \label{alg:dDNNFcompile:partition1}
  \Comment{$\vimu{}\!=\!\bigwedge_i\vimu{i}$, s.t.\ $\forall ij,\atoms{\vimu{i}}\cap\atoms{\vimu{j}}\!=\!\emptyset$}
  \If {$(k>1)$} 
  \Return $\bigwedge_{i=1}^{k}\dDNNFcompile(\vimu{i},\mu)$
  \EndIf
  \label{alg:dDNNFcompile:partition2}
  \State $l\gets\Select(\vimu{})$ \Comment{Select one literal $l$ on
    \atoms{\vimu{}}}
  \State \Return $(l\wedge\dDNNFcompile(\residual{\vi{}}{\mu\wedge l},\mu\wedge l)) \vee
  (\neg l \wedge \dDNNFcompile(\residual{\vi{}}{\mu\wedge \neg l},\mu\wedge\neg l)) $
\end{algorithmic}
\end{algorithm}  

\subparagraph*{\T-lemma enumeration.}
Our framework requires enumerating a set of \T-lemmas ruling out all \T-inconsistent total truth assignments propositionally satisfying a given \T-formula~$\vi{}$ (see \sref{sec:framework}).
To this end, we briefly recap the \T-lemma enumeration techniques of~\cite{micheluttiCanonicalDecisionDiagrams2024,civiniEagerEncodingsTheoryAgnostic2026}.
In~\cite{micheluttiCanonicalDecisionDiagrams2024}, we proposed a technique based on an enumeration of all the \T-consistent \emph{total} truth assignments satisfying $\vi{}$, which, as a by-product, produces a suitable set of \T-lemmas.
In~\cite{civiniEagerEncodingsTheoryAgnostic2026}, we introduced three more
orthogonal and composable techniques that drastically improve
scalability and that are highly parallelizable.

First, we proposed a \emph{cube-and-conquer} enumeration %
algorithm, which
enumerates a set of \T-consistent \emph{partial} truth assignments
$\mu_i$ which do not falsify $\vi{}$,  %
and then performs an independent %
run on each subproblem~$\vi{}\wedge\mu_i$. This partitions the search
space into independent, smaller and simpler subproblems, which (i) are easier to solve
and (ii) can be computed independently in parallel.

Second, we proved that the \emph{projected}  enumeration onto theory-atoms only is sufficient to derive a complete set of \T-lemmas, while being typically much faster in practice.

Finally, we showed that, if the atoms can be \emph{partitioned} into
sets $\allalpha_1,\ldots, \allalpha_k$ that are disjoint wrt.\
variables and uninterpreted functions, then we can perform independent runs projected on each set~$\allalpha_i$.
The resulting sets of \T-lemmas can then be combined to obtain a complete set for~$\vi{}$. 
This strategy allows us to drastically reduce the overall enumeration cost.

\section{d-DNNF{} Modulo Theories: An Analysis of the Problem}%
\label{sec:problem}

As stated in \sref{sec:intro}, our general goal is to encode every
\T-formula \vi{} into a \T-equivalent \dDNNF{}
\T-formula s.t.\ the following checks can be performed in polytime
by means of their corresponding propositional checks:
{\em \T-consistency  (CO)},
{\em \T-validity  (VA)},
{\em clausal \T-entailment (CE)},
{\em \T-implicant (IM)},
{\em \T-consistent assignment counting (CT)}   and
{\em \T-consistent assignment enumeration (ME)}.
Also, we want that  {\em \T-equivalence (EQ)} and {\em sentential
\T-entailment (SE)} checks to be polytime for those forms of \dDNNF{}
\T-formulas such that EQ and SE are
polytime for their propositional counterpart  (e.g., SDDs and OBDDs);
we also wish the latter to be \T-canonical. 
Importantly, the whole process should be {\em theory-agnostic}.

\begin{remark}%
\label{remark:allsmtcountsmt}
Here CT and ME %
refer respectively to counting and
enumerating all \T-consistent total truth 
assignments on a given atom set
  $\allalpha\supseteq\atoms{\vi{}}$ which propositionally satisfy the
input \T-formula $\vi{}$, and we also refer to them as ``\SharpSMT{}''
and ``\AllSMT{}'' (also ``\SharpSAT{}'' and ``\AllSAT{}'' for the
propositional case).
\footnote{%
In this paper we prefer ``\SharpSMT{}'' and
    ``\AllSMT{}'' to
    ``\#SMT{}$_{\allalpha}$'' and ``AllSMT$_{\allalpha}$'' respectively to avoid confusion,
    because we are enumerating \T-satisfiable satisfying assignments
    rather than theory-specific FOL models. 
    In particular, ``\#SMT{}'' or ``model counting
  modulo theories'' and, more in general,
  ``extension of model counting to SMT'' have been proposed with
  very different meanings in the literature (e.g.,
  \cite{maVolumeComputationBoolean2009,phanModelCountingModulo2015,chistikovApproximateCountingSMT2017,morettin-wmi-aij19,derkinderenTopDownKnowledgeCompilation2023,shawEfficientVolumeComputation2025,shawApproximateSMTCounting2025}).
}
We stress the fact that, regardless of the
\dDNNF{} format, CT and ME for \T-formulas 
are much trickier than
their propositional counterpart problems.
In fact, in the propositional case, %
producing a {\em partial} assignment $\mup$ 
s.t.\ $\mup\bmodels\vip{}$ prevents from producing all the $2^k$ total assignments $\etap_{i}$
extending $\mup$, $k$ being the number of unassigned atoms in $\mup$.
Unfortunately, this is not the case with \T-formulas, %
where producing a {\em  partial} \T-consistent{} assignment $\mu$ s.t.\ $\mu\bmodels\vi{}$
does {\em not} prevent from producing all the $2^k$ total assignments $\eta_{i}$
extending $\mu$, {\em because not all of them are necessarily \T-consistent}.
Therefore, being able to reduce CT and ME to their propositional counterparts may be
very valuable (see  \sref{sec:experiments}).
\end{remark}  

Our idea is thus to move all the effort of theory reasoning to the
compilation phase, and encode all the necessary theory-related information %
directly into the \dDNNF{} \T-formulas, so that the latter can be queried by
means of standard polytime propositional queries.
Notice that, in the propositional case, there is a form of duality
between the queries 
CO, CE, CT, and ME, which search   for {\em satisfying} truth assignments,
  and  {VA and IM}, which search   for {\em falsifying} ones.
  This duality suggests that we need two distinct encodings for the two groups of queries. 
  (For EQ and SE, this works either way, because
$\vi{1}\equiv\vi{2}$ iff $\neg\vi{1}\equiv\neg\vi{2}$ and
$\vi{1}\models\vi{2}$ iff $\neg\vi{2}\models\neg\vi{1}$.)

We start from the observation that, if we encoded a \T-formula $\vi{}$ simply into
$\dDNNFof{\vi{}}$, then the latter could be propositionally satisfied
\resp{falsified} by many \T-inconsistent truth assignments,
which would prevent CO, CE, CT, ME, EQ, and SE
\resp{VA, IM, EQ, and SE} queries to be correctly implemented by means
of their propositional
counterpart queries.~%
Therefore, we must get rid of such truth assignments, in the two
respective cases. %
Our idea is to do this  in the compilation phase, by adding proper \T-lemmas
ruling out these assignments.

Unfortunately, the traditional ``lazy'' lemmas-on-demand
approach, which is widely adopted in standard SMT %
tools
(see e.g.~%
\cite{barrettSatisfiabilityModuloTheories2021,lahiriSMTTechniquesFast2006}),
does not seem to be applicable here.
Consider the case we want to encode $\vi{}$ into a \T-equivalent
\dDNNF{} \T-formula with no satisfying \T-inconsistent 
truth assignments, so as to allow polytime CO, CE, CT, and ME
queries.
Consider \cref{alg:dDNNFcompile} (or its more-sophisticated variants).
The natural candidate way to implement a lemma-on-demand SMT extension of this
procedure and   its variants
would be to
apply a \T-consistency check on the current truth assignment
$\mu$ as soon as a new \T-literal is added to it, and to backtrack when this check fails, adding a
\T-lemma ruling  out $\mu$.
The problem with this approach would be in the partitioning step which generates the $\wedge$-nodes (lines~\ref{alg:dDNNFcompile:partition1}-\ref{alg:dDNNFcompile:partition2} in
\cref{alg:dDNNFcompile}).
In fact, even if $\mu$ is \T-consistent and $\vimu{1},\ldots,\vimu{k}$ do
not share atoms, the $\vimu{i}$'s cannot be safely \dDNNF{}-ized
independently, because their \T-atoms may be implicitly linked by some
\T-lemma, so that $\bigwedge_{i=1}^k\dDNNFcompile(\vimu{i},\mu)$
(line~\ref{alg:dDNNFcompile:partition2}  in
\cref{alg:dDNNFcompile}) could be propositionally
satisfied by some \T-inconsistent truth assignment.

Therefore, to cope with this fact, we need adding {\em a priori} an
ad-hoc set of \T-lemmas ruling out those undesired truth assignments. 

\begin{figure}
  \centering
  \begin{subfigure}[T]{0.45\textwidth}
    \centering
    \newcommand{\atomone}{\ensuremath{(x_1\le 0)}}
\newcommand{\atomtwo}{\ensuremath{(x_2\le 0)}}
\newcommand{\atomthree}{\ensuremath{(x_1\ge 1)}}
\newcommand{\atomfour}{\ensuremath{(x_2\ge 1)}}
\begin{tikzpicture}[scale=0.7,transform shape]
    \node[andNode] (root) at (0, 0) {$\land$};

    \node[orNode] (or1) at (-2, -1) {$\lor$};
    \node[orNode] (or2) at (2, -1) {$\lor$};

    \node[lit] (a1)  at (-3, -2) {\textcolor{blue}{$\atomone$}};
    \node[andNode] (and1) at (-1, -2) {$\land$};
    \node[lit] (a3)  at (1, -2) {\textcolor{blue}{$\atomthree$}};
    \node[andNode] (and2) at (3, -2) {$\land$};

    \node[lit] (na1)  at (-2, -3) {$\neg \atomone$};
    \node[lit] (a2)  at (0, -3) {$\atomtwo$};
    \node[lit] (na3)  at (2, -3) {$\neg \atomthree$};
    \node[lit] (a4)  at (4, -3) {$\atomfour$};

    \draw[edge,blue] (root) -- (or1);
    \draw[edge,blue] (root) -- (or2);
    
    \draw[edge,blue] (or1) -- (a1);
    \draw[edge] (or1) -- (and1);

    \draw[edge] (and1) -- (na1);
    \draw[edge] (and1) -- (a2);

    \draw[edge,blue] (or2) -- (a3);
    \draw[edge] (or2) -- (and2);
    \draw[edge] (and2) -- (na3);
    \draw[edge] (and2) -- (a4);
\end{tikzpicture}
  \end{subfigure}
  \begin{subfigure}[T]{0.5\textwidth}
    \centering
    \newcommand{\atomone}{\ensuremath{(x_1\le 0)}}
\newcommand{\atomtwo}{\ensuremath{(x_2\le 0)}}
\newcommand{\atomthree}{\ensuremath{(x_1\ge 1)}}
\newcommand{\atomfour}{\ensuremath{(x_2\ge 1)}}
\begin{tikzpicture}[scale=0.7,transform shape]
    \node[orNode] (root) at (0, 0) {$\lor$};

    \node[andNode] (and1) at (-2, -1) {$\land$};
    \node[andNode] (and2) at (2, -1) {$\land$};

    \node[lit] (a1)  at (-4, -2) {$\atomone$};
    \node[lit] (na4) at (4, -2) {$\neg\atomfour$};

    \node[lit] (na2)  at (-3, -3) {$\neg\atomtwo$};
    \node[lit] (a4)  at (-1, -3) {$\atomfour$};
    \node[lit] (na1) at (1, -3) {$\neg\atomone$};
    \node[lit] (a3)  at (3, -3) {$\atomthree$};

    \node[lit] (na3)  at (-2, -4) {$\neg\atomthree$};
    \node[lit] (a2)  at (2, -4) {$\atomtwo$};

    \draw[edge] (root) -- (and1);
    \draw[edge] (root) -- (and2);
    
    \draw[edge] (and1) -- (a1);
    \draw[edge] (and1) -- (na3);
    \draw[edge] (and1) -- (a4);
    \draw[edge] (and1) -- (na2);

    \draw[edge] (and2) -- (na1);
    \draw[edge] (and2) -- (a3);
    \draw[edge] (and2) -- (na4);
    \draw[edge] (and2) -- (a2);
\end{tikzpicture}
  \end{subfigure}
  \caption{\dDNNF{} \T-formula with \T-inconsistent satisfying truth assignment (blue paths) from \cref{ex:ddnnf-bad-assignment} (left), and a \T-equivalent \dDNNF{} \T-formula with no \T-inconsistent satisfying truth assignment (right).}
  \label{fig:ddnnf-bad-assignment}
\end{figure}
\begin{example}%
\label{ex:ddnnf-bad-assignment}
Consider the \larat-formula
$\vi{}\defas(
(x_1\le 0) \vee (x_2\le 0)
)\wedge(
(x_1\ge 1) \vee (x_2\ge 1)
)$,
and assume we apply to $\vi{}$ the hypothetical SMT variant of
\cref{alg:dDNNFcompile} described above. Since the two clauses do not
share any \T-atom, the algorithm partitions the two clauses (lines~\ref{alg:dDNNFcompile:partition1}-\ref{alg:dDNNFcompile:partition2})
and recursively \dDNNF{}-izes them independently.
Since inside each clause the two \larat-atoms cannot generate conflicting
literals, 
in each independent
call no \T-inconsistent truth assignment is generated. 
Therefore the procedure may return
the \dDNNF{} \T-formula shown in \cref{fig:ddnnf-bad-assignment} (left).
As highlighted by the blue paths,
we notice that the \dDNNF{}
is propositionally
satisfied by some \T-inconsistent truth assignment, e.g.,
$\mu_1\defas(x_1\le 0)\wedge (x_2\le 0) \wedge (x_1\ge 1) \wedge \neg
(x_2\ge 1)$%
.
The problem could have been avoided by adding a priori to $\vi{}$ the \T-lemmas
$\neg (x_1\le 0) \vee\neg (x_1\ge 1)$ and
$\neg (x_2\le 0) \vee\neg (x_2\ge 1)$.
The resulting \dDNNF{} could be the one shown in \cref{fig:ddnnf-bad-assignment} (right), which is \T-equivalent to the previous one, but has no \T-inconsistent satisfying truth assignment.
\end{example}

\section{A Formal Framework for \T-formulas}%
\label{sec:framework}
\noindent
In this section, we introduce novel definitions and results %
  used in the rest of the paper.
For the sake of compactness, all the
  proofs of the theorems are deferred to Appendix~\ref{sec:proofs}.%

\subsection{Basic definitions and properties}%
\label{sec:theory-definitions}

Hereafter, unless specified otherwise, we implicitly assume that
  $\allalpha,\allalphaprime$ denote supersets of the atoms of the
  formulas $\vi{},\vi{i},\psi,\psi_{i}$ which we are considering,
and $\allA, \allA'$ denote their Boolean abstraction for
  $\vip{},\vip{i},\psip{},\psip{i}$.
We sometimes adopt the notation ``\via{}''
    to stress the fact that  \allalpha{} is the
superset of \atoms{\vi{}} whose truth assignments we are
interested in (the same applies to ``$\vip{}[\allA]$'').
 We consider {\em supersets} of \atoms{\vi{}} 
to compare formulas with different sets of atoms:
$\vi{1}$ and $\vi{2}$ can be compared only if they are both
considered as formulas on some $\allalpha{}\supseteq(\atoms{\vi{1}}\cup\atoms{\vi{2}})$.
(E.g., in order to check that the propositional formulas $A_1$, $(A_1\vee A_2)\wedge (A_1\vee\neg A_2)$ and
$(A_1\vee A_3)\wedge (A_1\vee\neg A_3)$ are all equivalent, we need
considering them as formulas %
on \set{A_1,A_2,A_3}.)

\begin{definition}[\CTTA{\vi{}},\ITTA{\vi{}} \cite{micheluttiCanonicalDecisionDiagrams2024}]%
\label{def:cttaitta}
Given a set of %
atoms \allalpha{} and a \T-formula \via{},
we denote by
$\CTTA{\vi{}}\defas\set{\eta_{i}}_{i}$ and 
$\ITTA{\vi{}}\defas\set{\rho_{j}}_{j}$ respectively
   the set of all
{\em \T-satisfiable} and that of all {\em \T-unsatisfiable} total truth
assignments on %
$\allalpha{}$ which
propositionally satisfy $\vi{}$, s.t.
\begin{eqnarray}
  \label{eq:decomposition}
  \vi{} \bequiv \bigvee_{\eta{}\in\CTTA{\vi{}}}\eta{} \vee \bigvee_{\rho{}\in\ITTA{\vi{}}}\rho{}.
\end{eqnarray}  
\end{definition}
Consequently, \CTTA{\top} and \ITTA{\top} are respectively  the sets of
 all  \T-satisfiable and \T-unsatisfiable total truth assignments which one
   can build on the atoms in \allalpha.%
   Notice that it is important to specify
which superset \allalpha{} of \atoms{\vi{}} the sets of truth assignments refer to.

\begin{example}
\label{ex:proposition1}
  Let $\allalpha{}=\set{\alpha_1,\alpha_2}\defas\set{\aone,\atwo}$
  and let $\allA{}=\set{A_1,A_2}$.
  Consider the \T-formulas
  $\vi{1}\defas \aone \vee \atwo$ and
  $\vi{2}\defas \neg\aone \iff \atwo$, so that
  $\vip{1}\defas \Aone \vee \Atwo$ and
  $\vip{2}\defas \neg\Aone \iff \Atwo$.  
It is easy to see that
$\vi{1}\not\bequiv\vi{2}$ and 
$\vi{1}\Tequiv\vi{2}$.
Then \\
$\CTTA{\vi{1}}=\CTTA{\vi{2}}= %
\set{\aone\wedge\neg\atwo,\neg\aone\wedge\atwo}$,\\ %
$\ITTA{\vi{1}}=\set{\aone\wedge\atwo}$ and
$\ITTA{\vi{2}}=\emptyset$,\\
  $\CTTA{\neg\vi{1}}=\CTTA{\neg\vi{2}}=\set{\neg\aone\wedge\neg\atwo}$,\\
  $\ITTA{\neg\vi{1}}=\emptyset$ and
  $\ITTA{\neg\vi{2}}=\set{\aone\wedge\atwo}$.\\
\end{example}

\subsection{\Treduced{} and \Textended{} \T-formulas, and their properties}

\label{sec:treduced}  

\begin{definition}[ \Treduceda \T-formula]%
\label{def:treduced}
  We say that a  \T-formula $\vi{}$ is  {\bf theory-reduced} for a
  given superset $\allalpha$ of $\atoms{\vi{}}$ (\Treduceda{}) if
  and only if
  $\ITTA{\vi{}}=\emptyset$, so that~%
  \eqref{eq:decomposition} reduces to
 \begin{eqnarray}
 \label{eq:treduction1}
    \vi{} \bequiv \bigvee_{\eta{}\in\CTTA{\vi{}}}\eta{}.
 \end{eqnarray}
\end{definition}

The following result shows that, with \Treduceda
\T-formulas, some fundamental reasoning functionalities can be reduced to
their Boolean counterpart.

\begin{theorem}%
\label{teo:treducedallqueries}
Let \via{}, \via{1} and \via{2} be   \Treduceda \T-formulas.
Let $C$ be some clause on \allalpha{}.
Then we have the following facts.
\begin{enumerate}[(a)]
\item{} [CO] $\vi{}$ is \T-satisfiable if and only if it is \B-satisfiable.
\item{} [CE] $\vi{}\Tmodels C$ if and only if $\vi{} \pmodels C$. 
\item{} [EQ] $\vi{1}\Tequiv\vi{2}$ if and only if $\vi{1}\bequiv\vi{2}$. 
\item{} [SE] $\vi{1}\Tmodels\vi{2}$ if and only if $\vi{1}\pmodels\vi{2}$. 
\item{} [CT] $\SharpSMTOf{\vi{}}=\SharpSATOf{\vip{}}$.
\item{} [ME] $\AllSMTOf{\vi{}}$ is the refinement of $\AllSATOf{\vip{}}$. 
\end{enumerate}
\end{theorem}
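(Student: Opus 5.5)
The whole theorem rests on one observation: for a \Treduceda{} formula, \B-satisfying total assignments on \allalpha{} coincide with \T-satisfying ones. By \eqref{eq:treduction1}, $\vi{}\bequiv\bigvee_{\eta\in\CTTA{\vi{}}}\eta$. So a total assignment $\eta$ on \allalpha{} \B-satisfies $\vi{}$ iff $\eta\in\CTTA{\vi{}}$, and every such $\eta$ is \T-satisfiable.

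Conversely, I show that any \T-model $\mathcal{I}$ of $\vi{}$ induces a total assignment $\eta_{\mathcal{I}}$ on \allalpha{} that lies in $\CTTA{\vi{}}$. Take $\eta_{\mathcal{I}}$ to be the truth values $\mathcal{I}$ gives to the atoms of \allalpha{}. It is \T-consistent, since $\mathcal{I}$ satisfies it. It \B-satisfies $\vi{}$, since the truth of $\vi{}$ under $\mathcal{I}$ depends only on the values of its atoms, all of which are in \allalpha{}.

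This gives the key lemma.

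\emph{Key lemma.} For \Treduceda{} $\vi{}$, the following are equivalent:
\begin{itemize}
\item the set of \T-consistent total assignments on \allalpha{} that \B-satisfy $\vi{}$;
\item the set of all total assignments on \allalpha{} that \B-satisfy $\vi{}$;
\item $\CTTA{\vi{}}$.
\end{itemize}
Moreover, $\vi{}$ is \T-equivalent to the disjunction of this set.

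Each item then follows from the lemma.

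\begin{itemize}
\item \textbf{(a) [CO].} \B-satisfiable implies $\CTTA{\vi{}}\neq\emptyset$ by \eqref{eq:treduction1}, so some $\eta$ is \T-satisfiable and \B-entails $\vi{}$, hence $\vi{}$ is \T-satisfiable. The converse direction holds in general.
\item \textbf{(b) [CE].} The ``if'' direction is general. For ``only if'', suppose $\vi{}\not\pmodels C$. Then some total $\eta$ on \allalpha{} \B-satisfies $\vi{}\wedge\neg C$; this uses that $C$ is on \allalpha{}. By the lemma, $\eta\in\CTTA{\vi{}}$ is \T-consistent. Since $\eta$ is total on $\atoms{C}$, it \B-falsifies $C$, so any \T-model of $\eta$ satisfies $\vi{}\wedge\neg C$. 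Hence $\vi{}\not\Tmodels C$.
\item \textbf{(d) [SE].} The same argument as (b), with $\vi{2}$ in place of $C$. A \T-consistent total $\eta$ on \allalpha{} \B-satisfying $\vi{1}\wedge\neg\vi{2}$ yields a \T-model of $\vi{1}\wedge\neg\vi{2}$, because $\eta$ fixes every atom of $\vi{2}$. Notably, reducedness of $\vi{2}$ is not needed here.
\item \textbf{(c) [EQ].} Apply (d) in both directions.
\item \textbf{(e) [CT] and (f) [ME].} Both are read off the lemma. $\SharpSMTOf{\vi{}}$ counts, and $\AllSMTOf{\vi{}}$ lists, the \T-consistent total assignments on \allalpha{} that \B-satisfy $\vi{}$. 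By the lemma this set is $\CTTA{\vi{}}$, which equals the set of all \B-models on \allalpha{}. Its abstraction under \foltoprop{} is therefore exactly the set of models of $\vip{}$ on \allA{}. Since \foltoprop{} is a bijection, the counts agree, and refining $\AllSATOf{\vip{}}$ via \proptofol{} gives $\AllSMTOf{\vi{}}$.
\end{itemize}

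The main obstacle is not deep. It is making the ``extend to a total assignment'' step rigorous: a witness found at the propositional level must fix every atom relevant to the other formula. That is guaranteed by the standing assumption that \allalpha{} contains the atoms of all formulas involved, including $C$ and $\vi{2}$, and I would state this dependence explicitly at each use.
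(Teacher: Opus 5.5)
Your proof is correct, and its core fact is the paper's: for a \Treduceda{} formula, $\ITTA{\vi{}}=\emptyset$, so the total assignments on $\allalpha$ that \B-satisfy $\vi{}$ are exactly $\CTTA{\vi{}}$. For CO, CT and ME your argument is essentially the paper's; the routes differ for CE, SE and EQ.

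\emph{The paper's route.} It obtains SE and EQ from set characterizations: $\vi{1}\Tmodels\vi{2}$ iff $\CTTA{\vi{1}}\subseteq\CTTA{\vi{2}}$, whereas $\vi{1}\pmodels\vi{2}$ iff additionally $\ITTA{\vi{1}}\subseteq\ITTA{\vi{2}}$, and likewise with equalities for EQ. It proves CE separately, via a conditioning lemma stating that $\residual{\vi{}}{l}\wedge l$ is again \Treduceda{}. This lemma is applied literal by literal to the cube $\gamma$ equivalent to $\neg C$, and CO is then invoked on $\residual{\vi{}}{\gamma}\wedge\gamma$.

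\emph{Your route.} You use a single witness argument: a total assignment on $\allalpha$ that \B-satisfies $\vi{1}\wedge\neg\vi{2}$ must be \T-consistent and fixes all atoms of $\vi{2}$, so it yields a \T-model. This is shorter, and it makes explicit something the paper leaves unstated (though its own characterization supports it). SE needs only the left-hand formula to be \Treduceda{}, so CE is simply the instance $\vi{2}:=C$, even though a clause such as $(x\le 0)\vee(x\ge 1)$ is not \T-reduced.

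\emph{What each buys.} The paper's conditioning route mainly buys structural reuse: it is dualized for \T-extended formulas to drive the induction on cube length for IM. It also mirrors how CE is actually computed on a \dDNNF{}, by conditioning plus a consistency check. Your approach would handle IM just as directly, since $\gamma\Tmodels\vi{}$ iff $\neg\vi{}\Tmodels\neg\gamma$, with $\neg\vi{}$ \T-reduced.
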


\label{sec:textendeda}

\begin{definition}[ \Textendeda \T-formula]
\label{def:textended}
We say that a  \T-formula $\vi{}$ is  {\bf theory-extended} for a
given superset $\allalpha$ of $\atoms{\vi{}}$ (\Textendeda{}) if
  and only if $\neg\vi{}$ is \Treduceda (that is, 
$\ITTA{\neg\vi{}}=\emptyset$%
).
\end{definition}

The following results show that, with \Textendeda
\T-formulas, some fundamental reasoning functionalities can be reduced to
their Boolean counterpart.

\begin{theorem}
\label{teo:textendedallqueries}
Let \via{}, \via{1} and \via{2} be   \Textendeda \T-formulas.
Let $\gamma$ be some cube on \allalpha{}.
Then we have the following facts.
\begin{enumerate}[(a)]
\item{} [VA] $\vi{}$ is \T-valid if and only if it is \B-valid.
\item{} [IM] $\gamma\Tmodels \vi{}$ if and only if $\gamma\pmodels\vi{}$.
\item{} [EQ] $\vi{1}\Tequiv\vi{2}$ if and only if $\vi{1}\bequiv\vi{2}$.
\item{} [SE] $\vi{1}\Tmodels\vi{2}$ if and only if $\vi{1}\bmodels\vi{2}$.
\end{enumerate}
\end{theorem}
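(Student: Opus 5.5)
The plan is to reduce every item to Theorem~\ref{teo:treducedallqueries}, applied to the negated formulas. By Definition~\ref{def:textended}, if $\vi{}$ is \Textendeda{} then $\neg\vi{}$ is \Treduceda. Two elementary facts will be used throughout:
\begin{itemize}
\item $\neg$ commutes with the Boolean abstraction \foltoprop, so $\B$-validity of $\vi{}$ coincides with $\B$-unsatisfiability of $\neg\vi{}$, and likewise for $\T$;
\item the usual contrapositions hold in both the $\B$ and the $\T$ sense.
\end{itemize}

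For [VA], $\vi{}$ is \T-valid iff $\neg\vi{}$ is \T-unsatisfiable. By item (a) of Theorem~\ref{teo:treducedallqueries} applied to $\neg\vi{}$, this holds iff $\neg\vi{}$ is \B-unsatisfiable, i.e.\ iff $\vi{}$ is \B-valid.

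For [IM], let $\gamma$ be a cube on \allalpha{}. Then $\neg\gamma$ is a clause $C$ on \allalpha{}, and $\gamma\Tmodels\vi{}$ iff $\neg\vi{}\Tmodels\neg\gamma$. By item (b) of Theorem~\ref{teo:treducedallqueries} this is equivalent to $\neg\vi{}\bmodels\neg\gamma$, i.e.\ to $\gamma\bmodels\vi{}$. The same contraposition is valid propositionally, since \foltoprop{} is homomorphic for $\neg$.

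For [SE], $\vi{1}\Tmodels\vi{2}$ iff $\neg\vi{2}\Tmodels\neg\vi{1}$. Both $\neg\vi{1}$ and $\neg\vi{2}$ are \Treduceda, so item (d) of Theorem~\ref{teo:treducedallqueries} gives that this is equivalent to $\neg\vi{2}\bmodels\neg\vi{1}$, i.e.\ to $\vi{1}\bmodels\vi{2}$.

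[EQ] then follows by applying [SE] in both directions. Alternatively, it follows from item (c) of Theorem~\ref{teo:treducedallqueries}, since $\vi{1}\Tequiv\vi{2}$ iff $\neg\vi{1}\Tequiv\neg\vi{2}$, and likewise for $\bequiv$.

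The only step needing care is to check that negation preserves the reference atom set \allalpha{}. Since $\atoms{\neg\vi{}}=\atoms{\vi{}}\subseteq\allalpha$, Theorem~\ref{teo:treducedallqueries} is applicable with the same \allalpha{}, and $\neg\gamma$ is indeed a clause on \allalpha{}. No genuine obstacle is expected beyond this bookkeeping.
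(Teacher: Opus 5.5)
Your proof is correct. For [VA], [EQ] and [SE] it matches the paper in substance. The paper also derives [EQ] and [SE] (\cref{teo:textendedtequivalence} and \cref{teo:textendedtentailment}) by applying the \Treduceda{} results to $\neg\vi{1}$ and $\neg\vi{2}$ and contraposing. Its proof of [VA] (\cref{teo:tvaliditytextended}) is your reduction to [CO] on $\neg\vi{}$, written out via \cref{prop:assignmentsets}.

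The real difference is [IM]. The paper proves it (\cref{teo:revtentailment}) by induction on the number of literals of $\gamma$. The base case is \cref{teo:littentails}, which uses [VA] together with \cref{teo:tresidualdisjoinistextended}, the fact that $\neg l\vee\resvil{}$ is again \Textendeda. The inductive step rewrites $\neg l_n\vee\vi{}$ as $\neg l_n\vee\residual{\vi{}}{l_n}$ and reapplies the hypothesis to this new \Textendeda{} formula.

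You instead contrapose once: $\gamma\Tmodels\vi{}$ iff $\neg\vi{}\Tmodels\neg\gamma$. You note that $\neg\gamma$ is (\B-equivalent to) a clause on \allalpha{}, and invoke [CE] for the \Treduceda{} formula $\neg\vi{}$.

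What your route buys:
\begin{itemize}
\item It is shorter and dispenses with \cref{teo:tresidualdisjoinistextended} and \cref{teo:littentails} altogether; the residual work is done once, inside the [CE] proof via \cref{teo:tresidualconjoinistreduced}.
\item It shows \cref{teo:textendedallqueries} to be a pure corollary of \cref{teo:treducedallqueries}, under the CO/VA and CE/IM duality that the paper notes in \sref{sec:problem} but only exploits for [EQ] and [SE].
\end{itemize}
The paper's inductive argument instead works directly on the \Textendeda{} formula and its residuals, mirroring the [CE] proof. That is natural if one wants the \Textendeda{} case to stand on its own, but it is not needed for this statement.
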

 
\subsection{\Treduction and \Textension via \T-lemmas}
\label{sec:teo}

We present a general way to transform a generic \T-formula $\vi{}$
into a \T-equivalent one which is also \Treduceda{} \resp{\Textendeda{}}.  

\begin{definition}[\cite{micheluttiCanonicalDecisionDiagrams2024}]%
  \label{def:ruleout-nobetas}
  We say that a set \theorycl{} of \T-lemmas on \allalpha{} {\bf rules out} a set
  $\set{\rho_{1},\ldots,\rho_{M}}$ of 
  \T-unsatisfiable total truth assignments  on \allalpha{}
  if and only if,
  for every $\rho_{j}$ in the set, there exists a $\C{l}$
  s.t.\ $\rho_{j}\pmodels \neg \C{l}$, that is, if and only if
  $\bigvee_{j=1}^M\rho_{j}\wedge\bigwedge_{l=1}^K\C{l}\bequiv\bot$.
\end{definition}

 Given \allalpha{} and some \T-formula \vi{}, we denote
 as $\TLEMMAS{\vi{}}$ a function which returns a set 
 of \T-lemmas \theorycl{} on \allalpha{} which rules out \ITTA{\vi{}}.
 To this extent, $\TLEMMAS{\top}$ returns a set of \T-lemmas ruling
 out all \T-inconsistent truth assignments on $\allalpha$.

\begin{theorem}%
    \label{teo:treduce}
    Let \vi{} be a \T-formula s.t.\ $\atoms{\vi{}}\in\allalpha{}$.
    Let $\TLEMMAS{\vi{}}\defas\theorycl{}$  be a set of \T-lemmas which rules out 
    \ITTA{\vi{}}.
    Let
        \begin{eqnarray}
      \label{eq:treduce}
      \TredOf{\vi{}} &\defas&
        \vi{}\  \wedge \bigwedge_{\C{l}\in\TLEMMAS{\vi{}}} \C{l}.
     \end{eqnarray}
Then we have that:
\begin{enumerate}[(i)]
\item\label{item:treduce:tequiv} $\TredOf{\vi{}}\Tequiv\vi{}$;
\item\label{item:treduce:bmodels} $\TredOf{\vi{}}\bmodels\vi{}$;
\item\label{item:treduce:treduced} $\TredOf{\vi{}}$ is \Treduceda.
\end{enumerate}
\end{theorem}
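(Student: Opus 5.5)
The three items follow almost directly from the definitions, so I will treat them in the order (ii), (i), (iii), reusing earlier steps.

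\textbf{Item (ii).} $\TredOf{\vi{}}$ is, by \eqref{eq:treduce}, the conjunction of $\vi{}$ with the clauses $\C{l}$. Its Boolean abstraction is therefore $\vip{}\wedge\bigwedge_l\C{l}^p$. This trivially propositionally entails $\vip{}$, so $\TredOf{\vi{}}\bmodels\vi{}$.

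\textbf{Item (i).} One direction follows from (ii): since $\bmodels$ implies $\Tmodels$, we get $\TredOf{\vi{}}\Tmodels\vi{}$. For the converse, every $\C{l}$ is a \T-lemma, i.e.\ \T-valid. Hence every \T-model of $\vi{}$ satisfies all the $\C{l}$ and thus satisfies $\TredOf{\vi{}}$. This gives $\vi{}\Tmodels\TredOf{\vi{}}$.

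\textbf{Item (iii).} This is the only step with content; I will argue it by contradiction. Suppose some $\rho\in\ITTA{\TredOf{\vi{}}}$ exists. Then $\rho$ is a \T-unsatisfiable total assignment on \allalpha{} with $\rho\bmodels\TredOf{\vi{}}$. By (ii), $\rho\bmodels\vi{}$, so $\rho\in\ITTA{\vi{}}$.
\begin{itemize}
\item Since $\TLEMMAS{\vi{}}$ rules out $\ITTA{\vi{}}$ (Definition~\ref{def:ruleout-nobetas}), there is some $\C{l}$ with $\rho\bmodels\neg\C{l}$.
\item On the other hand, $\rho\bmodels\TredOf{\vi{}}$ gives $\rho\bmodels\C{l}$.
\item Because $\rho$ is a total assignment on a set containing all atoms of $\C{l}$, it cannot propositionally satisfy both $\C{l}$ and $\neg\C{l}$. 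This is a contradiction.
\end{itemize}
Hence $\ITTA{\TredOf{\vi{}}}=\emptyset$, which is exactly the definition of \Treduceda.

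One detail needs checking here: the atoms of $\TredOf{\vi{}}$ must still lie in \allalpha{}, so that \allalpha{} is an admissible superset. This holds because the lemmas are on \allalpha{} by hypothesis. The hypothesis written as $\atoms{\vi{}}\in\allalpha{}$ should be read as $\subseteq$.

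The only potential obstacle is this bookkeeping about the atom set and totality of $\rho$. Everything else is immediate from the definitions.
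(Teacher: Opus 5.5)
Your proposal is correct and follows essentially the same argument as the paper. Items (i) and (ii) are read off from \eqref{eq:treduce} and the \T-validity of the lemmas. For (iii), both you and the paper observe that any $\rho\in\ITTA{\TredOf{\vi{}}}$ would lie in $\ITTA{\vi{}}$, would thus propositionally falsify some $\C{l}$, and hence would falsify $\TredOf{\vi{}}$. The only cosmetic difference is that you phrase (iii) as a contradiction, whereas the paper argues directly; the paper also notes $\CTTA{\TredOf{\vi{}}}=\CTTA{\vi{}}$, which is not needed for (iii).
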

\noindent
Notice that \TredOf{\vi{}} makes $\vi{}$ \Treduceda by moving all \T-inconsistent assignments $\rho_j$
from $\ITTA{\vi{}}$ to $\ITTA{\neg\vi{}}$ while preserving
$\CTTA{\vi{}}$ and $\CTTA{\neg\vi{}}$.

\begin{theorem}%
    \label{teo:textend}
    Let \vi{} be a \T-formula s.t.\ $\atoms{\vi{}}\in\allalpha{}$.
    Let $\TLEMMAS{\neg\vi{}}\defas\theorycl{}$  be a set of \T-lemmas which rules out 
    \ITTA{\neg\vi{}}.
    Let
        \begin{eqnarray}
      \label{eq:textend}
      \TextOf{\vi{}} &\defas&
        \vi{}\  \vee \neg(\bigwedge_{\C{l}\in\TLEMMAS{\neg\vi{}}} \C{l}).
     \end{eqnarray}
Then we have that:
\begin{enumerate}[(i)]
\item\label{item:textend:tequiv} $\TextOf{\vi{}}\Tequiv\vi{}$;
\item\label{item:textend:bmodels} $\vi{}\bmodels\TextOf{\vi{}}$;
\item\label{item:textend:textended} $\TextOf{\vi{}}$ is \Textendeda.
\end{enumerate}
\end{theorem}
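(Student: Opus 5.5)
The plan is to reduce \cref{teo:textend} to \cref{teo:treduce} applied to $\neg\vi{}$. Write $\Gamma\defas\bigwedge_{\C{l}\in\TLEMMAS{\neg\vi{}}}\C{l}$. By De Morgan, $\neg\TextOf{\vi{}}\equiv\neg\vi{}\wedge\Gamma$, and this equivalence is purely syntactic, so it holds both as $\bequiv$ and as $\Tequiv$. Since $\TLEMMAS{\neg\vi{}}$ rules out $\ITTA{\neg\vi{}}$, the formula $\neg\vi{}\wedge\Gamma$ is exactly $\TredOf{\neg\vi{}}$ as in \eqref{eq:treduce}, with $\neg\vi{}$ in place of $\vi{}$. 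So $\neg\TextOf{\vi{}}\bequiv\TredOf{\neg\vi{}}$.

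For the three items I would then proceed as follows.

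\textbf{Item (i).} \cref{teo:treduce}(i) gives $\TredOf{\neg\vi{}}\Tequiv\neg\vi{}$, hence $\neg\TextOf{\vi{}}\Tequiv\neg\vi{}$. Negating both sides preserves $\T$-equivalence, so $\TextOf{\vi{}}\Tequiv\vi{}$. Alternatively one can argue directly: each $\C{l}$ is $\T$-valid, so $\Gamma\Tequiv\top$ and $\neg\Gamma\Tequiv\bot$.

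\textbf{Item (ii).} This is immediate from the form $\vi{}\vee\neg\Gamma$: any assignment that $\B$-satisfies $\vi{}$ also $\B$-satisfies the disjunction. It is also the contrapositive of \cref{teo:treduce}(ii) applied to $\neg\vi{}$.

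\textbf{Item (iii).} By \cref{def:textended}, I need $\ITTA{\neg\TextOf{\vi{}}}=\emptyset$ on \allalpha{}. The sets $\CTTA{\cdot}$ and $\ITTA{\cdot}$ depend only on the set of total assignments on \allalpha{} that $\B$-satisfy the formula, so they are invariant under $\bequiv$. Hence $\ITTA{\neg\TextOf{\vi{}}}=\ITTA{\TredOf{\neg\vi{}}}$, which is empty by \cref{teo:treduce}(iii).

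\textbf{Points to check.} The main step needing care is this invariance under $\bequiv$, together with the hypothesis bookkeeping: I need $\atoms{\neg\vi{}}=\atoms{\vi{}}\subseteq\allalpha{}$, and the lemmas in $\TLEMMAS{\neg\vi{}}$ must be on \allalpha{}. Both hold by assumption, so \cref{teo:treduce} applies verbatim. If I prefer not to rely on that theorem, the direct argument for (iii) is short. Take any total $\T$-unsatisfiable $\rho$ on \allalpha{} with $\rho\bmodels\neg\vi{}\wedge\Gamma$. Then $\rho\in\ITTA{\neg\vi{}}$, so some $\C{l}$ satisfies $\rho\bmodels\neg\C{l}$, which contradicts $\rho\bmodels\Gamma$.
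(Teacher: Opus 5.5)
Your proposal is correct and matches the paper's proof, which simply observes that $\TextOf{\vi{}}\bequiv\neg\TredOf{\neg\vi{}}$ and reads off all three items from \cref{teo:treduce} applied to $\neg\vi{}$. Your explicit remark that $\CTTA{\cdot}$ and $\ITTA{\cdot}$ are invariant under $\bequiv$, which item (iii) needs, is a step the paper leaves implicit. Your direct argument for (iii) is also sound.
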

\noindent
Notice that \TextOf{\vi{}} makes $\vi{}$ \Textendeda by moving all \T-inconsistent assignments $\rho_j$
from $\ITTA{\neg\vi{}}$ to $\ITTA{\vi{}}$ while preserving
$\CTTA{\vi{}}$ and $\CTTA{\neg\vi{}}$.

\begin{example}%
\label{ex:tredtext}
Let $\allalpha\defas\set{\atoma,\atomb,\atomc}$ and $\vi{}\defas\atoma \vee \atomb$.
$\ITTA{\vi{}}=\{\atoma\wedge\atomb\wedge\atomc,\;
\atoma\wedge\atomb\wedge\neg\atomc,\;
\atoma\wedge\neg\atomb\wedge\atomc\}$
and $\TLEMMAS{\vi{}}=\{\neg\atoma\vee\neg\atomb,\;\neg\atoma\vee\neg\atomc\}$. Hence,
\begin{align*}
  \TredOf{\vi{}}
    &= \big(\atoma \vee \atomb\big)\wedge\big(\neg\atoma\vee\neg\atomb\big)\wedge\big(\neg\atoma\vee\neg\atomc\big).
\end{align*}

Let $\neg\vi{}\defas\neg\atoma \wedge \neg\atomb$.
We have $\ITTA{\neg\vi{}}=\set{\neg\atoma\wedge\neg\atomb\wedge\atomc}$
and $\TLEMMAS{\neg\vi{}}=\{\atomb\vee\neg\atomc\}$. Hence,
\begin{align*}
  \TextOf{\vi{}}
    = \big(\atoma \vee \atomb\big) \vee
      \big(\neg\atomb\wedge\atomc\big)
      \bequiv \atoma \vee \atomb \vee \atomc.
\end{align*}
Notice that $\TredOf{\vi{}}$ is \Treduceda
and $\TextOf{\vi{}}$ is \Textendeda, as expected.
\end{example}

\begin{IGNOREINLONG}
\begin{remark}
\label{remark:texttred}
With \cref{def:ruleout-nobetas},
the set of \T-lemmas is not unique, so that the definition
of $\TredOf{\vi{}}$ \resp{of $\TextOf{\vi{}}$} is not unique.
Nevertheless,
due to \cref{teo:treduce} and
\cref{teo:treducedallqueries} [EQ],
\resp{to \cref{teo:textend} and
\cref{teo:textendedallqueries} [EQ]},
two different versions of
$\TredOf{\vi{}}$ \resp{$\TextOf{\vi{}}$} based on different \T-lemma
sets would produce two formulas
$\vi{1}$ and $\vi{2}$ s.t.\ $\vi{1}\bequiv\vi{2}$.
Thus, different \T-lemma sets \TLEMMAS{\vi{}} 
\resp{\TLEMMAS{\neg\vi{}}}
would produce syntactically different, though
\B-equivalent, versions of $\TredOf{\vi{}}$ \resp{of
  $\TextOf{\vi{}}$}.
\end{remark}  
\end{IGNOREINLONG}

\section{d-DNNF \T-Formulas and their Polytime Queries}%
\label{sec:nnf}
We say that a \T-formula $\psi{}$ is in \NNF{} \resp{\dDNNF{}, \sdDNNF{}, etc.} iff
its Boolean abstraction is in \NNF{}  \resp{\dDNNF, \sdDNNF, etc.}.
To this extent, we 
extend w.l.o.g.\ the notion of smoothness to all atoms in \allalpha{}, that is,
we say that \via{} is \AlphaSmooth{} if $\atoms{\vi{}}=\allalpha$ and
\vi{} is smooth. The function $\Smooth{}$ is extended accordingly.
\Cref{fig:ddnnftheory} shows some examples of \NNF{} \T-formulas.

\begin{figure}
  \centering
  \begin{subfigure}[T]{0.178\textwidth}
    \centering
    \begin{tikzpicture}[scale=0.7,transform shape]
    \node[orNode] (root) at (0, 0) {$\lor$};

    \node[lit] (a1) at (-1, -1) {$\atoma$};
    \node[lit] (a2) at (1, -1) {$\atomb$};

    \draw[edge] (root) -- (a1);
    \draw[edge] (root) -- (a2);
\end{tikzpicture}
  \end{subfigure}
  \begin{subfigure}[T]{0.228\textwidth}
    \centering
    \begin{tikzpicture}[scale=0.7,transform shape]
    \node[orNode] (root) at (0, 0) {$\vee$};

    \node[lit] (a1) at (1, -1) {$\atoma$};
    \node[andNode] (and1) at (-1, -1) {$\wedge$};

    \node[lit] (na1) at (0, -2) {$\neg\atoma$};
    \node[lit] (a2) at (-2, -2) {$\atomb$};

    \draw[edge] (root) -- (a1);
    \draw[edge] (root) -- (and1);

    \draw[edge] (and1) -- (na1);
    \draw[edge] (and1) -- (a2);
\end{tikzpicture}
  \end{subfigure}
  \begin{subfigure}[T]{0.33\textwidth}
    \centering
    \begin{tikzpicture}[scale=0.7,transform shape]
    \node[orNode] (root) at (0, 0) {$\vee$};

    \node[andNode] (and1) at (-1.5, -1) {$\wedge$};
    \node[andNode] (and2) at (1.5, -1) {$\wedge$};

    \node[lit] (a1)  at (-2.5, -2) {$\atoma$};
    \node[lit] (na2)  at (-1.5, -3) {$\neg\atomb$};
    \node[lit] (na3)  at (-0.5, -2) {$\neg\atomc$};

    \node[lit] (na1) at (0.5, -3) {$\neg\atoma$};
    \node[lit] (a2) at (2.5, -3) {$\atomb$};

    \draw[edge] (root) -- (and1);
    \draw[edge] (root) -- (and2);

    \draw[edge] (and1) -- (a1);
    \draw[edge] (and1) -- (na2);
    \draw[edge] (and1) -- (na3);

    \draw[edge] (and2) -- (na1);
    \draw[edge] (and2) -- (a2);
\end{tikzpicture}
  \end{subfigure}
  \begin{subfigure}[T]{0.24\textwidth}
    \centering
    \begin{tikzpicture}[scale=0.7,transform shape]
    \node[orNode] (root) at (0, 0) {$\vee$};

    \node[andNode] (and1) at (1, -1) {$\wedge$};
    \node[lit] (a1) at (-1, -1) {$\atoma$};

    \node[orNode] (or1) at (0, -2) {$\vee$};
    \node[lit] (na1) at (2, -2) {$\neg\atoma$};

    \node[andNode] (and2) at (1, -3) {$\wedge$};
    \node[lit] (a2) at (-1, -3) {$\atomb$};

    \node[lit] (na2) at (0, -4) {$\neg\atomb$};
    \node[lit] (a3) at (2, -4) {$\atomc$};

    \draw[edge] (root) -- (and1);
    \draw[edge] (root) -- (a1);
    
    \draw[edge] (and1) -- (or1);
    \draw[edge] (and1) -- (na1);

    \draw[edge] (or1) -- (and2);
    \draw[edge] (or1) -- (a2);

    \draw[edge] (and2) -- (na2);
    \draw[edge] (and2) -- (a3);
\end{tikzpicture}
  \end{subfigure}
  \caption{
    \NNF{} representations for formulas in \cref{ex:tredtext}.
    Left: the \NNF{} of $\vi{}$; 
    Center left: %
    $\dDNNFof{\vi{}}$;
    Center right: %
    \redOf{\dDNNF}{\vi{}}; Right: %
    \extOf{\dDNNF}{\vi{}}.
  }%
  \label{fig:ddnnftheory}
\end{figure}

We present some results on polytime queries for \dDNNF{} \T-formulas. We note that these results can be extended to other \NNFs{} which admit polytime queries
(e.g., DNF, CNF, MODS, PI, IP, etc.~\cite{darwicheKnowledgeCompilationMap2002}), 
though we omit them for simplicity.

\subsection{Polytime queries on d-DNNF{} \T-formulas by exploiting \Treduction}
\label{sec:operations:treduction}  

\cref{teo:treducedallqueries}
shows that, for a \Treduceda{} \T-formula \vi{},
\T-satisfiability (CO),
clause \T-entailment (CE),
\T-consistent assignment counting (CT),
\T-consistent assignment enumeration (ME),
\T-equivalence (EQ) and
\T-entailment (SE),
reduce to the
(computationally much cheaper) 
Boolean case, regardless of the theory \T considered.
This leads to the following results.   
\begin{theorem}
\label{teo:polinomialwithtreduced}
Let $\psi{}$ be a \Treduceda \dDNNF{} \T-formula, and let $C$ be
some clause on \allalpha{}.
\begin{enumerate}[(a)]
\item{\label{item:poly:co}}[CO] The \T-satisfiability of $\psi{}$ can be computed in polynomial
  time wrt.\ the size of $\psi{}$. 
\item{\label{item:poly:ce}}[CE] The \T-entailment $\psi\Tmodels C$ can be computed in polynomial
  time wrt.\ the size of $\psi{}$ and $C$. 
\item{\label{item:poly:ct}}[CT] $\SharpSMTOf{\psi{}}$  can be computed in polynomial
  time wrt.\ the size of $\psi{}$. 
\item{\label{item:poly:me}}[ME] $\AllSMTOf{\psi{}}$ can be computed in polynomial
  time wrt.\ the size of $\psi{}$ and the size of the output set. 
\end{enumerate}
\end{theorem}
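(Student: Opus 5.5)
The plan is to combine \cref{teo:treducedallqueries} with the known polytime propositional algorithms on \dDNNFs{} from~\cite{darwicheKnowledgeCompilationMap2002}. The structure is the same for every item: since $\psi$ is \Treduceda{}, each SMT-level query on $\psi$ is equivalent to the corresponding \B-level query on $\psip{}$. Moreover, $\psip{}$ is a propositional \dDNNF{} of the same size as $\psi$, because \foltoprop{} is a bijection on atoms and is homomorphic wrt.\ the connectives, so the DAG is unchanged. Computing $\psip{}$ is itself linear in the size of $\psi$: it is a relabelling of leaves.

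For [CO], \cref{teo:treducedallqueries}(a) reduces \T-satisfiability of $\psi$ to satisfiability of $\psip{}$. This is decided in linear time by the standard bottom-up evaluation: literals are satisfiable, an $\wedge$-node is satisfiable iff all its children are (by decomposability), and an $\vee$-node is satisfiable iff some child is.

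For [CE], \cref{teo:treducedallqueries}(b) gives $\psi\Tmodels C$ iff $\psi\bmodels C$, i.e.\ iff $\psip{}\wedge\neg C^p$ is \B-unsatisfiable. Here $\neg C^p$ is a cube $\gamma$, so we use conditioning. The residual $\residual{\psip{}}{\gamma}$ is obtained by replacing the literals of $\gamma$ with constants, and it remains a \dDNNF{} of size at most that of $\psip{}$. Applying the CO check to it takes time polynomial in the size of $\psi$ and $C$. (If $C$ contains complementary literals it is trivially entailed.)

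For [CT] and [ME], we use \cref{teo:treducedallqueries}(e) and (f). These require counting and enumerating total assignments on \allalpha{}, not just on $\atoms{\psi}$, and this is the step needing the most care. First we apply the extended $\Smooth{}$ transformation so that $\psip{}$ becomes \AlphaSmooth{}. This means adding $(A\vee\neg A)$ conjuncts for the atoms of $\allA$ missing under each $\vee$-child, and at the root for the atoms of $\allA\setminus\atoms{\psip{}}$. The transformation is polynomial (Lemma A.1 of~\cite{darwicheKnowledgeCompilationMap2002}) in the size of $\psi$ and $|\allalpha|$, where $|\allalpha|$ may be taken as bounded by the input size, or else counted as part of it. It preserves equivalence, decomposability and determinism.

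On the smoothed \sdDNNF{}, counting is the standard bottom-up pass: literals count $1$, $\wedge$-nodes take products (by decomposability), and $\vee$-nodes take sums (by determinism and smoothness, disjuncts have disjoint model sets over the same atoms). This gives $\SharpSATOf{\psip{}}=\SharpSMTOf{\psi}$.

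For [ME], we enumerate the models of the \sdDNNF{} by the standard recursive procedure: take the union over $\vee$-children and the Cartesian product over $\wedge$-children, after first pruning unsatisfiable nodes via the CO pass so that no dead branches are explored. This runs in time polynomial in the size of $\psi$ and the number of models output. Applying \proptofol{} to each model yields $\AllSMTOf{\psi}$ by \cref{teo:treducedallqueries}(f), at linear extra cost per model.
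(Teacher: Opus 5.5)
Your proposal is correct and follows essentially the same route as the paper. Each query on the \Treduceda{} $\psi$ is reduced via \cref{teo:treducedallqueries} to the corresponding propositional query on $\psip{}$, which is then answered by the standard polytime \dDNNF{} algorithms of~\cite{darwicheKnowledgeCompilationMap2002}. The paper merely cites those algorithms, whereas you spell them out and make explicit the smoothing wrt.\ the superset \allalpha{} that CT and ME require, a detail the paper leaves implicit.
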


\begin{theorem}
\label{teo:polinomialequivalencewithtreduced}
Let $\psi_{1}$ and $\psi_{2}$ be \Treduceda \dDNNF{}.
\begin{enumerate}[(a)]
\item{}\label{item:poly:tred:eq} [EQ] If  $\psi_{1}$ and $\psi_{2}$ are in one form s.t.\ $\psi_{1}\bequiv\psi_{2}$ can be computed in polynomial time (e.g., OBDDs and SDDs), then
$\psi_{1}\Tequiv\psi_{2}$ can be computed in polynomial time.
\item{}\label{item:poly:tred:se} [SE] If  $\psi_{1}$ and $\psi_{2}$ are in one form s.t.\ $\psi_{1}\bmodels\psi_{2}$ can be computed in polynomial time (e.g., OBDDs and SDDs), then
$\psi_{1}\Tmodels\psi_{2}$ can be computed in polynomial time.

\end{enumerate}
\end{theorem}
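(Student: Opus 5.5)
The plan is to reduce both items directly to \cref{teo:treducedallqueries}, with no further structural argument on the \dDNNF{}s. Since $\psi_1$ and $\psi_2$ are both \Treduceda{} (for the same superset \allalpha{} of their atoms), \cref{teo:treducedallqueries} [EQ] gives $\psi_1\Tequiv\psi_2$ iff $\psi_1\bequiv\psi_2$. Likewise, \cref{teo:treducedallqueries} [SE] gives $\psi_1\Tmodels\psi_2$ iff $\psi_1\bmodels\psi_2$. Hence the \T-level question is, in each case, the same question as the propositional one.

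By definition, $\psi_1\bequiv\psi_2$ means $\psip{1}\equiv\psip{2}$, and $\psi_1\bmodels\psi_2$ means $\psip{1}\models\psip{2}$. Here $\psip{i}=\foltoprop(\psi_i)$ is the Boolean abstraction, which by our convention is a propositional formula in the same form (OBDD, SDD, \ldots) as $\psi_i$. The algorithm is therefore:
\begin{enumerate}
\item compute $\psip{1}$ and $\psip{2}$ by relabelling each \T-literal leaf with its Boolean abstraction;
\item run the assumed polytime propositional EQ \resp{SE} procedure on $\psip{1},\psip{2}$;
\item return its answer.
\end{enumerate}
Step 1 is linear in the DAG sizes, because \foltoprop{} is a bijection on atoms applied leaf by leaf and is homomorphic on connectives. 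It keeps the DAG shape, so it preserves size, decomposability, determinism, and the ordering or vtree structure. The composition is polynomial.

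The only point needing care is this last preservation claim. Propositional EQ/SE on OBDDs or SDDs requires both operands to share a variable order \resp{vtree}. So I will state that the order \resp{vtree} over \allalpha{} is transported through \foltoprop{} to one over \allA{}. Because \foltoprop{} is injective on atoms, distinct \T-atoms never collapse, and the decision-node or partition conditions still hold verbatim on the abstraction.

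I will also note that both formulas must be considered on a common \allalpha{}, as the hypotheses implicitly assume. With these points settled, the proof is a two-line reduction, and I expect no real obstacle.
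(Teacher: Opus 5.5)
Your proposal is correct and follows the paper's proof: it applies the EQ and SE items of \cref{teo:treducedallqueries} (\cref{teo:treducedtequivalence,teo:treducedtentailment}) to reduce \T-equivalence and \T-entailment to their Boolean counterparts, then invokes the known polytime propositional checks. Your remarks on carrying the variable order or vtree over through \foltoprop{} add detail that the paper leaves implicit in its convention that a \T-formula is in a given form exactly when its Boolean abstraction is.
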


\subsection{Polytime queries on d-DNNF{}  \T-formulas by exploiting \Textension}
\label{sec:operations:textension}

\cref{teo:textendedallqueries}
shows that, for a \Textendeda \T-formula \vi{},
\T-validity (VA),
\T-implicant check (IM),
\T-equivalence (EQ) and
\T-entailment (SE),
reduce to the
(computationally much cheaper) 
Boolean case, regardless of the theory \T considered.
This leads to the following results.

\begin{theorem}
\label{teo:polinomialwithtextended}
Let $\psi{}$ be a \Textendeda \dDNNF{} \T-formula, and let $\gamma$ be
some cube on \allalpha{}.
\begin{enumerate}[(a)]
\item{\label{item:poly:va}}[VA] The \T-validity of $\psi{}$ can be computed in polynomial
  time wrt.\ the size of $\psi{}$. 
\item{\label{item:poly:im}}[IM] The \T-entailment $\gamma\Tmodels\psi{}$ can be computed in polynomial
  time wrt.\ the size of $\psi{}$ and $\gamma$. 
\end{enumerate}
\end{theorem}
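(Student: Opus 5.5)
The plan is to reduce both items to their propositional counterparts via \cref{teo:textendedallqueries}, and then invoke the known polytime algorithms for \dDNNFs{} from~\cite{darwicheKnowledgeCompilationMap2002}. Since $\psi$ is \Textendeda{}, \cref{teo:textendedallqueries}(a) gives that $\psi$ is \T-valid iff it is \B-valid. By \cref{teo:textendedallqueries}(b), $\gamma\Tmodels\psi$ iff $\gamma\bmodels\psi$. Since $\psi$ is a \dDNNF{} \T-formula, by definition $\psip{}$ is a propositional \dDNNF{} of the same size. Moreover, \B-validity of $\psi$ and $\gamma\bmodels\psi$ are, by definition of $\pmodels$, the statements that $\psip{}$ is valid and that $\gamma^p\models\psip{}$. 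Computing $\psip{}$ from $\psi$, and $\gamma^p$ from $\gamma$, is a linear-time relabeling of the leaves through \foltoprop{}.

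For [VA], I would not merely cite the map but give the standard argument, because validity must be checked over the atom set $\allA$, and $\psip{}$ may not mention all of its atoms. The first step is to apply $\Smooth{}$, extended to \allalpha{} as in \sref{sec:nnf}; this is polynomial (Lemma A.1 of~\cite{darwicheKnowledgeCompilationMap2002}) and yields an \AlphaSmooth{} \sdDNNF{} that is \B-equivalent to $\psip{}$. The second step is model counting by the bottom-up pass: literals get $1$, a $\wedge$-node gets the product of its children's counts (valid by decomposability), and a $\vee$-node gets the sum (valid by determinism). This gives $\SharpSATOf{\psip{}}$ in time polynomial in the size of $\psi$ and $|\allalpha|$. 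Finally, $\psi$ is \B-valid iff this count equals $2^{|\allalpha|}$. This is a comparison of numbers with polynomially many bits, so it is polytime. Should $|\allalpha|$ exceed the size of $\psi$, validity still does not depend on atoms outside $\atoms{\psi}$, so it suffices to smooth over $\atoms{\psi}$ alone.

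For [IM], I would use the standard reduction from implicant checking to consistency. Write $\gamma=\bigwedge_i l_i$. If $\gamma$ is \B-inconsistent, that is, it contains complementary literals, then $\gamma\bmodels\psi$ holds trivially. Otherwise, $\gamma^p\models\psip{}$ iff $\neg\gamma^p\vee\psip{}$ is valid, iff $\residual{\psip{}}{\gamma^p}$ is valid over the remaining atoms. Conditioning a \dDNNF{} on a consistent cube means replacing the literals of $\gamma$ by $\top$ and their complements by $\bot$. This preserves decomposability and determinism and does not increase size, so it is polytime. Validity of the conditioned \dDNNF{} is then decided as in [VA].

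The main subtlety lies in [VA], and it is bookkeeping rather than depth. One must make sure that the propositional validity test is performed relative to the correct atom set. One must also make sure that smoothing and counting stay polynomial, which they do since smoothing adds only $O(|\allalpha|)$ nodes per $\vee$-node. Everything \T-specific is already absorbed into \cref{teo:textendedallqueries}. No theory reasoning occurs at query time, which is exactly what makes the bounds theory-agnostic.
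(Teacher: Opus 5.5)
Your proposal is correct and follows the paper's proof: both reduce [VA] and [IM] to \B-validity and the Boolean implicant check via the \Textendeda{} results (\cref{teo:textendedallqueries}, items (a) and (b)), and then appeal to the polytime propositional \dDNNF{} procedures of Darwiche and Marquis. The only difference is that you spell out those cited procedures (smoothing plus bottom-up counting for validity, and conditioning on $\gamma$ followed by the validity test for the implicant check), whereas the paper simply cites them.
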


\begin{theorem}
\label{teo:polinomialequivalencewithtextended}
Let $\psi_{1}$ and $\psi_{2}$ be \Textendeda \dDNNF{}.
\begin{enumerate}[(a)]
\item{}\label{item:poly:text:eq} [EQ] If  $\psi_{1}$ and $\psi_{2}$ are in one form s.t.\ $\psi_{1}\bequiv\psi_{2}$ can be computed in polynomial time (e.g., OBDDs and SDDs), then
$\psi_{1}\Tequiv\psi_{2}$ can be computed in polynomial time.
\item{}\label{item:poly:text:se} [SE] If  $\psi_{1}$ and $\psi_{2}$ are in one form s.t.\ $\psi_{1}\bmodels\psi_{2}$ can be computed in polynomial time  (e.g., OBDDs and SDDs), then
$\psi_{1}\Tmodels\psi_{2}$ can be computed in polynomial time.

\end{enumerate}
\end{theorem}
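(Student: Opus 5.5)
The proof is a direct reduction to the propositional case via \cref{teo:textendedallqueries}. Part (c) [EQ] and part (d) [SE] of that theorem state that, for \Textendeda{} \T-formulas $\psi_1,\psi_2$ on the same \allalpha{}, $\psi_{1}\Tequiv\psi_{2}$ iff $\psi_{1}\bequiv\psi_{2}$, and $\psi_{1}\Tmodels\psi_{2}$ iff $\psi_{1}\bmodels\psi_{2}$. Hence, for (a), I would decide \T-equivalence by running the assumed polytime propositional equivalence check on the Boolean abstractions $\psi^p_1,\psi^p_2$ (e.g., pointer comparison for canonical OBDDs/SDDs with the same order or vtree). For (b), I would likewise run the propositional entailment check $\psi^p_1\models\psi^p_2$ (e.g., conjoin $\psi^p_1$ with the negation of $\psi^p_2$ via polytime OBDD/SDD apply and negation, then test consistency).

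The steps, in order, are as follows.
\begin{enumerate}
\item Observe that computing the Boolean abstraction \foltoprop{} is linear in the size of the DAG, since it only relabels leaves bijectively. The abstracted formulas therefore keep the same structural form (OBDD/SDD with the same order or vtree), and their sizes are unchanged.
\item Invoke the hypothesis that the propositional query is polytime in that form.
\item Invoke \cref{teo:textendedallqueries}(c) and (d) to transfer the answer to the \T-level.
\end{enumerate}
Correctness follows from the ``iff''. The total cost is the linear abstraction cost plus the polytime propositional query.

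The only point needing care is that \cref{teo:textendedallqueries} requires both formulas to be \Textendeda{} with respect to the same \allalpha{}. I take this as given by the hypothesis, which places both formulas on a common superset of their atoms. I would also note that the propositional check must treat $\psi^p_1$ and $\psi^p_2$ as formulas over the common $\allA$, so that (non-)occurrence of some atoms does not affect the answer. For OBDDs and SDDs this holds automatically, since equivalence and entailment are semantic over the union of variables. I do not expect a real obstacle; the argument mirrors that of \cref{teo:polinomialequivalencewithtreduced}, with \cref{teo:textendedallqueries} replacing \cref{teo:treducedallqueries}.
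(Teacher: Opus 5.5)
Your proposal is correct and follows the paper's proof: you reduce \T-equivalence and \T-entailment of \Textendeda{} formulas to their Boolean counterparts via \cref{teo:textendedallqueries} [EQ] and [SE], i.e.\ \cref{teo:textendedtequivalence,,teo:textendedtentailment}, and then invoke the assumed polytime propositional check. Your remarks on the linear cost of the Boolean abstraction and on both formulas sharing the same \allalpha{} are details the paper leaves implicit; they are harmless and correct.
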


\subsection{Producing \Treduced{} and \Textended{} d-DNNF{} \T-formulas}%
\label{sec:producingtreducedtextended}

We denote with \stardDNNF{} an arbitrary \dDNNF{} form from~%
\cite{darwicheKnowledgeCompilationMap2002,darwicheSDDNewCanonical2011} 
(e.g., \dDNNF, \sdDNNF, \obdd{}, \sdd, etc.),
and with \stardDNNFof{.} a
corresponding encoding function into  \stardDNNF{} form.
Then, for each NNF form \stardDNNF{}, we define:
\begin{eqnarray}
  \label{eq:starnnfred}
  \redOf{\stardDNNF}{\vi{}}&\defas&\stardDNNFof{\TredOf{\vi{}}};\\
  \label{eq:starnnfext}
  \extOf{\stardDNNF}{\vi{}}&\defas&\stardDNNFof{\TextOf{\vi{}}}.
\end{eqnarray}
For instance, $\redOf{\dDNNF}{\vi{}}\defas\dDNNFof{\TredOf{\vi{}}}$ and
$\extOf{\sdd}{\vi{}}\defas\sddof{\TextOf{\vi{}}}$.

Since each encoding function \stardDNNFof{.} is \B-equivalence preserving, then by
\cref{teo:treduce,,teo:textend},
all \redOf{\stardDNNF}{\vi{}}'s are \Treduceda and all
 \extOf{\stardDNNF}{\vi{}}'s are \Textendeda. 
\cref{fig:ddnnftheory} (center right and right) shows examples of \dDNNF{} \T-formulas from
\cref{ex:tredtext}, obtained by applying the  above definitions.

We can compute $\redOf{\stardDNNF}{\vi{}}$ \resp{$\extOf{\stardDNNF}{\vi{}}$} as shown in \cref{alg:treducednnfcompiler} \resp{\cref{alg:textendednnfcompiler}}. 
The algorithm takes as input a \T-formula \vi{}, and uses a theory-lemma enumerator \LemmaEnumerator{} (e.g.,~\cite{civiniEagerEncodingsTheoryAgnostic2026}), and a Boolean \stardDNNF{} compiler \BoolNNFCompiler{} (e.g., \cref{alg:dDNNFcompile} or more advanced compilers~\cite{somenzi2009cudd,lagniezImprovedDecisionDNNFCompiler2017}).
The algorithm first computes the set of theory lemmas \theorycl{}
on \allalpha{} by calling \LemmaEnumerator{} on \vi{} \resp{on $\neg\vi{}$}.~%
Then, it builds the \Treduceda{} \resp{\Textendeda{}} version of \vi{}
as in \eqref{eq:treduce} \resp{\eqref{eq:textend}.)} 
Finally, it compiles the Boolean abstraction of the resulting formula
into \stardDNNF{} and returns the corresponding \T-formula.

\begin{figure}
\begin{minipage}[t]{0.48\textwidth}
  \begin{algorithm}[H]
    \caption{\TredNNFCompilerOf{\vi{}}}%
    \label{alg:treducednnfcompiler}
    \Input{} $\vi{}$: a \T-formula over atoms \allalpha{}\\
    \Output{} \redOf{\stardDNNF}{\vi{}}
    \begin{algorithmic}[1]
      \State $\theorycl{} \gets \LemmaEnumerator(\vi{})$\label{line:lemmaenumerator-tred}
      \State $\TredOf{\vi{}} \gets \vi{}\wedge\left(\bigwedge_{l=1}^{K}C_{l}\right)$ \Comment{see~\eqref{eq:treduce}}\label{line:tred}
      \State $\psip{} \gets \BoolNNFCompiler(\foltoprop(\TredOf{\vi{}}))$\label{line:boolnnfcompiler-tred}
      \State \Return $\proptofol(\psip{})$\label{line:proptofol-tred}
    \end{algorithmic}
  \end{algorithm}
\end{minipage}
\hfill
\begin{minipage}[t]{0.48\textwidth}
  \begin{algorithm}[H]
    \caption{\TextNNFCompilerOf{\vi{}}}%
    \label{alg:textendednnfcompiler}
    \Input{} $\vi{}$: a \T-formula over atoms \allalpha{}\\
    \Output{} \extOf{\stardDNNF}{\vi{}}
    \begin{algorithmic}[1]
      \State $\theorycl{} \gets \LemmaEnumerator(\neg\vi{})$\label{line:lemmaenumerator-text}
      \State $\TextOf{\vi{}} \gets \vi{} \vee \neg\!\left(\bigwedge_{l=1}^{K}\neg C_{l}\right)$ \Comment{see~\eqref{eq:textend}}\label{line:text}
      \State $\psip{} \gets \BoolNNFCompiler(\foltoprop(\TextOf{\vi{}}))$\label{line:boolnnfcompiler-text}
      \State \Return $\proptofol(\psip{})$\label{line:proptofol-text}
    \end{algorithmic}
  \end{algorithm}
\end{minipage}
\end{figure}

\subsection{Querying \Treduced{} and \Textended{} d-DNNF{} \T-formulas}%
\label{sec:queryingtreducedtextended}
 Combining the facts above with \cref{teo:polinomialwithtreduced,,teo:polinomialwithtextended} we have the following facts:
 \begin{itemize}
 \item every $\redOf{\stardDNNF}{\vi{}}$ %
  verifies the hypotheses of
   \cref{teo:polinomialwithtreduced}, so that \T-satisfiability (CO), clause
   \T-entailment (CE), \SharpSMT{} (CT), and \AllSMT{} (ME) with these formulas
   can be computed in polynomial time;
 \item every  $\extOf{\stardDNNF}{\vi{}}$ %
  verifies the hypotheses of
   \cref{teo:polinomialwithtextended}, so that \T-validity (VA) and
   \T-implicant check (IM) with these formulas can be computed in polynomial time;
 \item every  $\extOf{\stardDNNF}{\vi{}}$ \resp{$\redOf{\stardDNNF}{\vi{}}$}
   s.t.\ \stardDNNF{} is one of OBDD %
   and SDD,
   verifies the hypotheses of
   \cref{teo:polinomialequivalencewithtreduced} \cref{item:poly:tred:eq} 
   \resp{\cref{teo:polinomialequivalencewithtextended} \cref{item:poly:text:eq}},
   so that \T-equivalence with these formulas can be computed in polynomial time;
 \item every  $\extOf{\stardDNNF}{\vi{}}$ \resp{$\redOf{\stardDNNF}{\vi{}}$}
   s.t.\ \stardDNNF{} is one of OBDD %
   and SDD,
   verifies the hypotheses of
   \cref{teo:polinomialequivalencewithtreduced} \cref{item:poly:tred:se} 
   \resp{\cref{teo:polinomialequivalencewithtextended} \cref{item:poly:text:se}},
   so that \T-entailment with these formulas can be computed in polynomial time.
 \end{itemize}
Algorithmically, we can compute these queries by means of a Boolean \stardDNNF{}-reasoner, feeding it with the Boolean abstraction of the compiled \T-formulas and the Boolean abstraction of the query, and then interpreting the result in the theory \T.

\section{A Preliminary Experimental Evaluation}%
\label{sec:experiments}

As proof of concept, we present a preliminary experimental
evaluation of the framework described in the previous sections.
(We recall that we currently do not have any direct competitor for SMT-level \dDNNF{} querying.)
We focus on \Treduced{} \dDNNFs{}, as they allow for a wide range of queries.
We consider both general \dDNNFs{}, and the subclasses of \obdds{} and \sdds{} also considered in~\cite{micheluttiCanonicalDecisionDiagrams2024}.
We assess the following aspects:
\begin{enumerate}[(i)]
    \item the compilation time, and the contribution of each compilation step to the total time;
    \item the size of the compiled representation, compared to the size of the input \T-formula;
    \item the effectiveness and efficiency of
     the compiled representation in answering queries against SMT-based approaches. We focus on CE, and on CT under assumptions, as detailed next.
\end{enumerate}

\subparagraph*{Implementation.}
We have implemented the framework described in
\sref{sec:producingtreducedtextended} in a prototype Python tool, which is publicly available.~\footnote{\url{https://github.com/ecivini/tddnnf}}
The tool %
uses PySMT~\cite{garioPySMTSolveragnosticLibrary2015} to handle \T-formulas.
For enumeration of \T-lemmas we use our novel technique  from~\cite{civiniEagerEncodingsTheoryAgnostic2026}.~\footnote{\url{https://github.com/ecivini/tlemmas-enumeration}}
(Notably, this technique has been shown to be much faster, and to produce smaller and fewer lemmas compared to the technique used in~\cite{micheluttiCanonicalDecisionDiagrams2024}.)
We use
\dfour{}~\cite{lagniezImprovedDecisionDNNFCompiler2017}~\footnote{\url{https://github.com/crillab/d4v2}}
for d-DNNF compilation, using the %
algorithm for non-CNF formulas~\cite{derkinderenCircuitAwareDDNNFCompilation2025};
we use %
\cudd{}~\cite{somenzi2009cudd} for OBDD compilation, and
\pysdd{}~\footnote{\url{https://github.com/ML-KULeuven/PySDD}} for SDD
compilation. 

Since \obdds, \sdds, and \dDNNFs{} are obtained with different
compilers and use different representations, queries are answered
using different tools. For \obdds{} and \sdds, we use the respective
compilers' APIs. For \dDNNFs{}, we use
\ddnnife{}~\cite{sundermannReusingDDNNFsEfficient2024},~\footnote{\url{https://github.com/SoftVarE-Group/d-dnnf-reasoner}}
which supports a wide range of queries on propositional \dDNNFs,
including CO (possibly under conditioning), ME (possibly under
conditioning), and CT (possibly under conditioning and/or projected on
a subset of atoms).  CE can be reduced to CO under conditioning, and
thus it is also supported. 

The code to run the experiments is also publicly available.~\footnote{\url{https://github.com/ecivini/tddnnf-testbench}}
All the experiments have been executed on three identical machines with an Intel(R) Xeon(R) Gold 6238R @ 2.20GHz CPU and 128GB of RAM.
For theory-lemma enumeration, we use 45 parallel processes.
We set a timeout of 3600s for theory-lemma enumeration, and further 3600s for Boolean compilation.
For query answering, we set a timeout of 600s for each query.

\subparagraph*{Benchmarks.}
We consider all the problems from~\cite{micheluttiCanonicalDecisionDiagrams2024}, %
consisting of 450 synthetic non-CNF \smtlarat{} instances.
These problems are inspired by Weighted Model Integration problems from~\cite{morettin-wmi-aij19}, and were generated by randomly nesting Boolean operators up to fixed depths.
For CE queries, we generate 10 random clauses of size 1 to 3 for each problem, with literals over atoms occurring in the formula.
The second type of queries we consider is CT under assumptions, i.e.,
given a formula $\vi{}$ and a cube $\mu$, we ask for
$\SharpSMTOf{\vi{} \wedge \mu}$, %
where $\allalpha=\atoms{\vi{}}$.
We consider cubes deriving from the negated clauses used for CE. 
This task %
counts the number of ``counterexamples'' to clausal entailment, where a counterexample is a truth assignment to the atoms of the formula violating the entailment.
Hence, we only consider problems where CE does not hold.

\subparagraph*{Baselines.}
Since there is no direct competitor for SMT-level \dDNNF{} querying, 
we compare on query answering against SMT-based approaches, using the SMT solver \mathsatfive{}~\cite{mathsat5_tacas13}.

For CE, given a formula $\vi{}$ and a clause $C$, we check the \T-(un)satisfiability of $\vi{} \wedge \neg C$.
We use \mathsat{} both in non-incremental and incremental mode, using SMT under assumptions.

For CT, to the best of our knowledge, there is no SMT tool specifically
designed for counting the number of \T-satisfiable truth assignments.
This is not surprising since this task is very tricky, because
{\em it is not possible to exploit \T-consistent partial-assignment enumeration}, 
as partial assignments \emph{do not} guarantee that \emph{all} their
extensions are \T-satisfiable
(recall \cref{remark:allsmtcountsmt}).
Thus, we are forced to use the \AllSMT{} functionality of \mathsat{} to enumerate all
the total \T-satisfiable truth assignments, counting and discharging
them as they are produced.
(We reckon this could be considered as an easy-to-win comparison, but
we see no alternative competitor.)

\subparagraph*{Results on compilation time.}
\begin{figure}[t]
    \centering
    \begin{subfigure}{\textwidth}
        \centering
        \includegraphics[width=.7\textwidth]{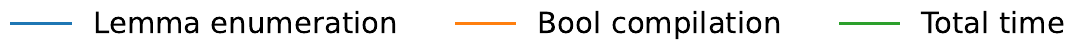}
    \end{subfigure}

    \begin{subfigure}{0.3\textwidth}
        \centering
        \includegraphics[width=.92\textwidth]{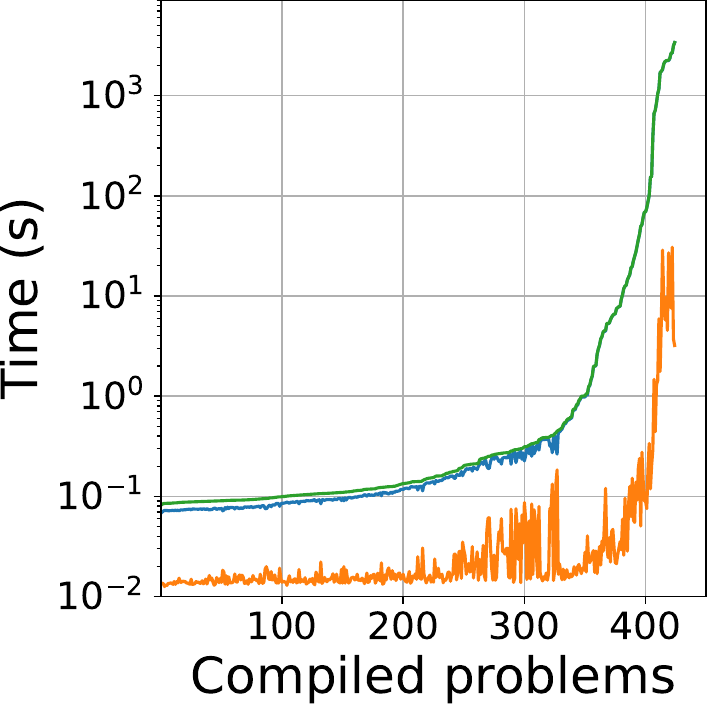}
        \caption{\centering \T-red \dDNNF}%
        \label{fig:compilation-time-ddnnf}
    \end{subfigure}
    \hfill
    \begin{subfigure}{0.3\textwidth}
        \centering
        \includegraphics[width=.92\textwidth]{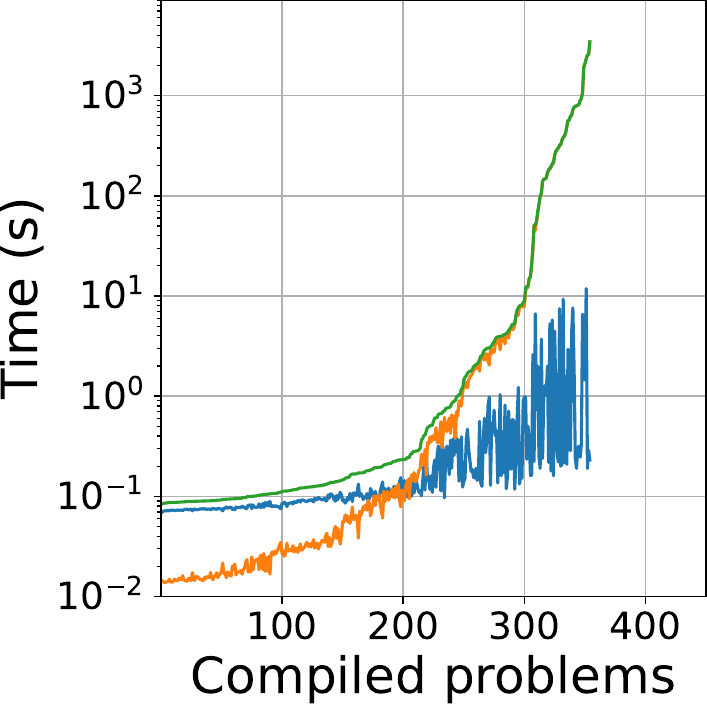}
        \caption{\centering \T-red \obdd}%
        \label{fig:compilation-time-tbdd}
    \end{subfigure}
    \hfill
    \begin{subfigure}{0.3\textwidth}
        \centering
        \includegraphics[width=.92\textwidth]{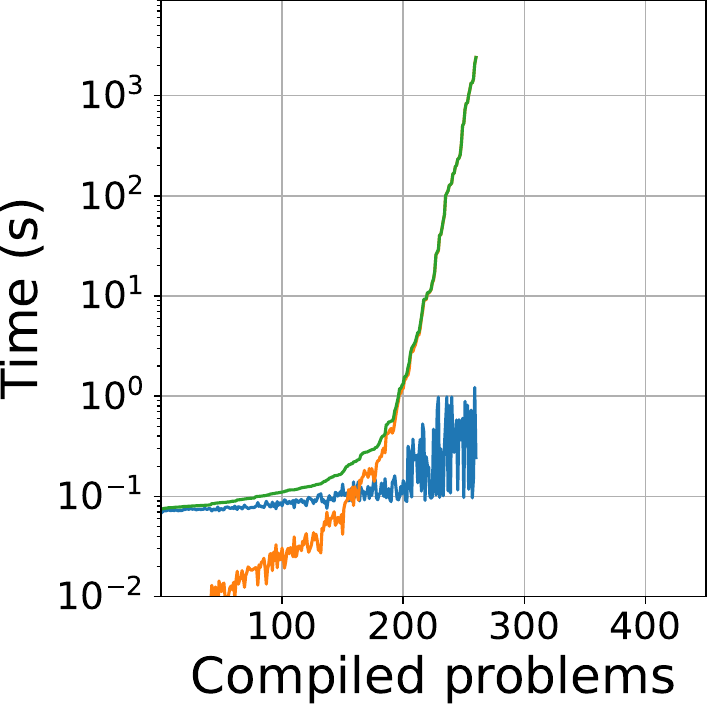}
        \caption{\centering \T-red \sdd}%
        \label{fig:compilation-time-tsdd}
    \end{subfigure}

    \begin{subfigure}{\textwidth}
        \small
        \centering
        \begin{tabular}{lccc}
            \textbf{Method} & \textbf{Lemma enum. T.O.} & \textbf{Compilation T.O.} & \textbf{Tot.} \\
            \hline
            \T-red \dDNNF{}     & 26                       & 0                       & 450           \\
            \T-red \obdd{}       & 26                       & 70                      & 450           \\
            \T-red \sdd{}        & 26                       & 164                     & 450           \\
        \end{tabular}
    \end{subfigure}
    \caption{Compilation times for \Treduced{} \NNFs.
        Top: cactus plots showing the total compilation time for \dDNNF{} (left), \obdd{} (center), and \sdd{} (right). We also show the time for lemma enumeration (blue) and Boolean compilation (orange) for each instance. 
        For \dDNNF{}, the green and blue lines overlap; for \obdd{} and \sdd{}, instead, the green and orange lines overlap.
        We only show instances compiled within the timeout.
    Bottom: Number of timeouts for each compilation step.
    }%
    \label{fig:compilation-time}
    
    \begin{subfigure}{0.3\textwidth}
        \centering
        \includegraphics[width=\textwidth]{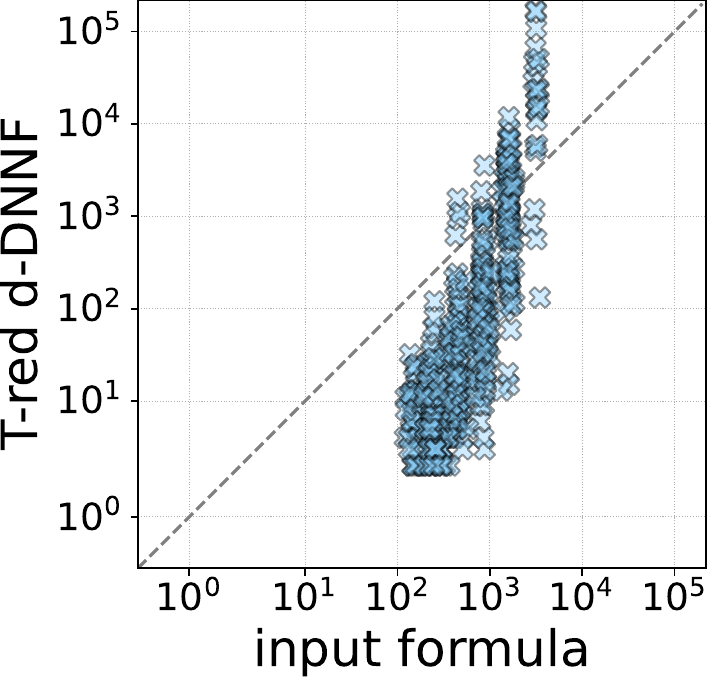}
    \end{subfigure}
    \hfill
    \begin{subfigure}{0.3\textwidth}
        \centering
        \includegraphics[width=\textwidth]{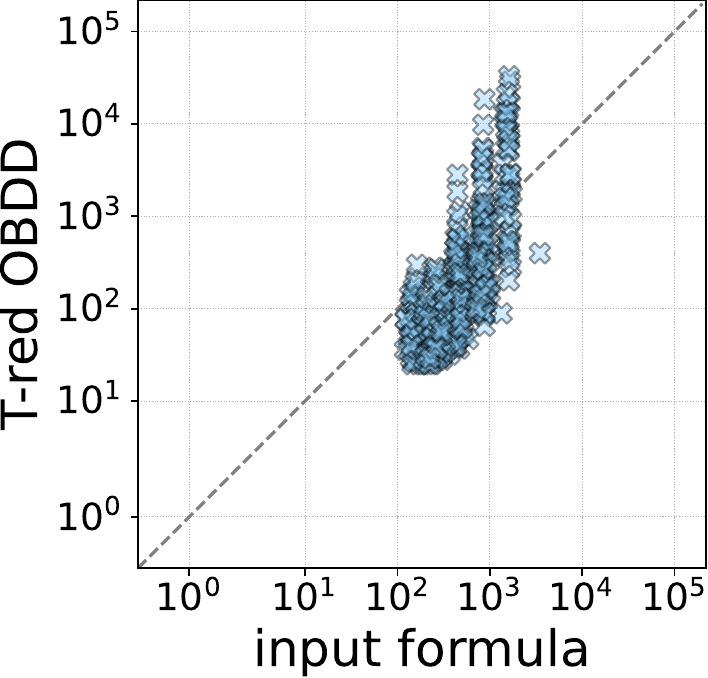}
    \end{subfigure}
    \hfill
    \begin{subfigure}{0.3\textwidth}
        \centering
        \includegraphics[width=\textwidth]{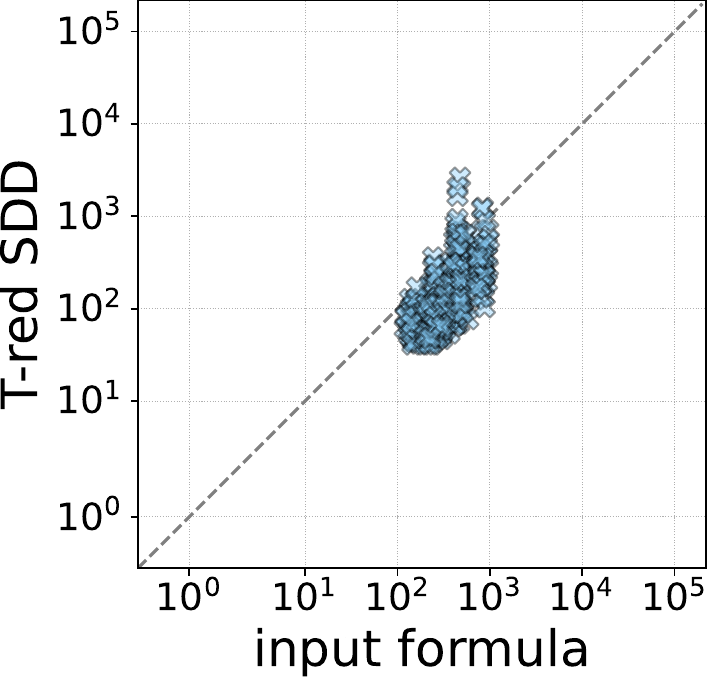}
    \end{subfigure}
    \caption{Size of the compiled \Treduced{} \dDNNF{}s (left), OBDDs (center), and SDDs (right) vs size of the input formulas, measured as the number of nodes in their DAG representation.}%
    \label{fig:size-compared-to-input}
\end{figure}

We first analyze the compilation time for \Treduced{} \NNFs{}.
In \cref{fig:compilation-time}, we show cactus plots of the compilation times for 
\dDNNFs{} (left), \obdds{} (center), and \sdds{} (right).
We show both the total time, and the contributions of the two compilation steps in \cref{alg:treducednnfcompiler}, i.e., lemma enumeration (line~\ref{line:lemmaenumerator-tred}) and Boolean compilation
(line~\ref{line:boolnnfcompiler-tred}). The table at the bottom reports the number of timeouts for each compilation step.

First, we observe that substantially more instances can be compiled
into \dDNNF{} and in significantly less time than into \obdds{} or
\sdds{}. This is not surprising since \obdds{} and \sdds{} impose
stronger structural restrictions on the compiled representation, and
are canonical under a fixed variable order or vtree, which typically
makes compilation much more expensive.

Second, for \dDNNFs{}, the compilation time is dominated by
lemma enumeration, whereas for \obdds{} and \sdds{} it is dominated by the Boolean compilation step, which also explains the smaller number of compiled instances. 
This is expected, since \dDNNF{} compilation is typically much faster than
\obdd{} and \sdd{} compilation. Also, 
 despite the optimizations
of~\cite{civiniEagerEncodingsTheoryAgnostic2026}, the enumeration
procedure is still expensive, dominating the relatively-fast
\dDNNF{} compilation. 
(The strategy
of~\cite{civiniEagerEncodingsTheoryAgnostic2026} is highly
parallelizable, however, suggesting that the enumeration time can be significantly reduced with more computational resources.)
Also more efficient strategies for conjoining \T-lemmas during
construction  could potentially  improve the compilation for \obdds{} and 
\sdds{}.

\subparagraph*{Results on size.}
Next, we analyze the size of the compiled representations.
The scatter plots in \cref{fig:size-compared-to-input} compare the size of the compiled \Treduced{} \NNFs{} with the size of the input formulas, both measured as the number of nodes in their DAG representations.
We observe that the addition of lemmas does not seem to increase 
the size of the compiled representations, which, instead, are most
often (much) smaller than the input formulas.
This is a very interesting and promising result. 
Note, however, that in these plots we only consider the instances that
could be compiled within the timeout, which are typically those with
a smaller input size. In particular, for \obdds{} and \sdds{}, we plot
much fewer points than for \dDNNFs{}.

\subparagraph*{Results on clausal-entailment queries.}
\begin{figure}[t]
    \centering
    \begin{subfigure}[c]{0.3\textwidth}
        \centering
        \includegraphics[width=\textwidth]{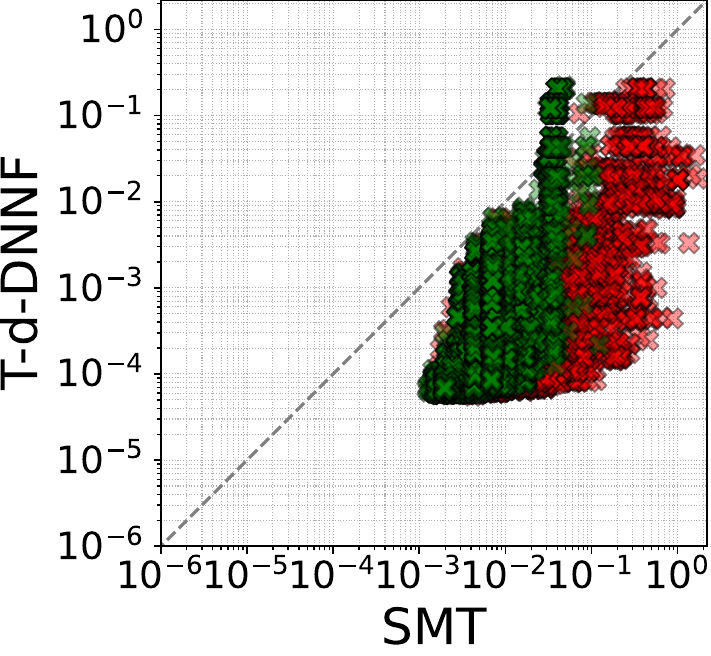}
    \end{subfigure}
    \hfill
    \begin{subfigure}[c]{0.3\textwidth}
        \centering
        \includegraphics[width=\textwidth]{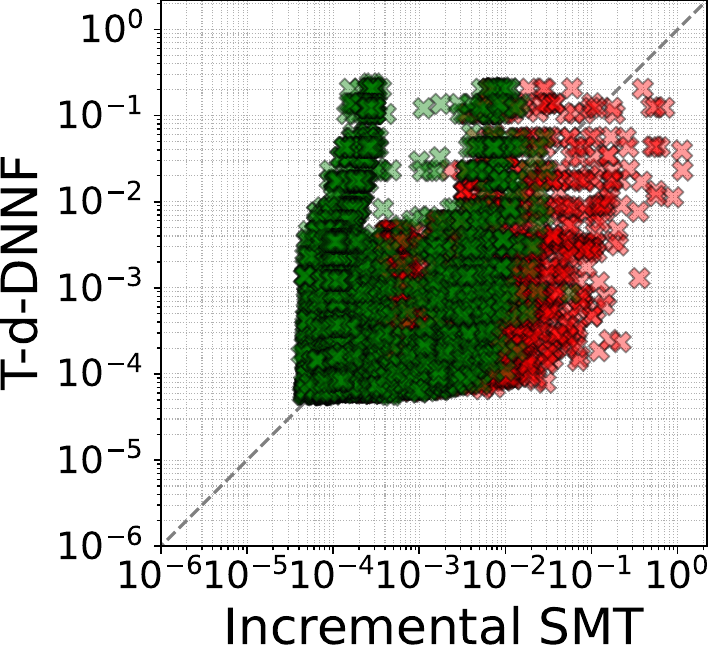}
    \end{subfigure}
    \hfill
    \begin{subfigure}[c]{0.3\textwidth}
        \centering
        \includegraphics[width=.92\textwidth]{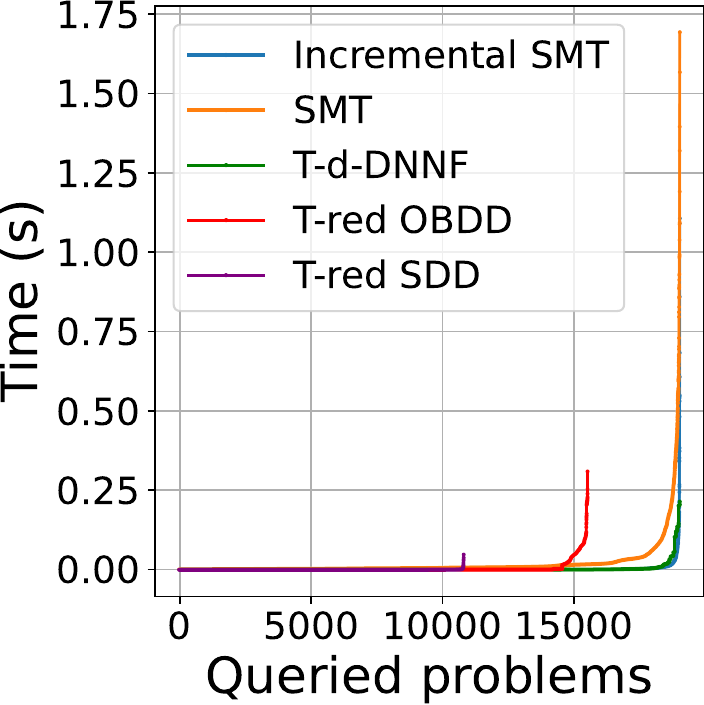}
    \end{subfigure}
    \caption{Answering times for CE queries. Left and center: scatter plots comparing the time for \Treduceda{} \dDNNFs{} vs non-incremental and incremental SMT, respectively. Green points correspond to instances where CE holds, red points correspond to instances where it does not.
    Right: comparison of all methods with a cactus plot, including \Treduceda{} \obdds{} and \sdds{}.}%
    \label{fig:entailment-queries-time}
    
    \centering
    \begin{subfigure}[c]{0.3\textwidth}
        \centering
        \includegraphics[width=\textwidth]{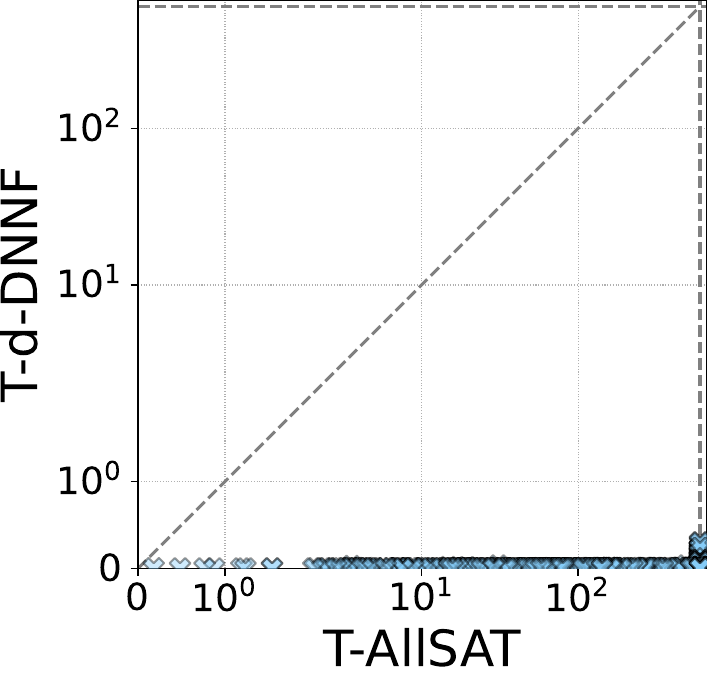}
    \end{subfigure}
    \hfill
    \begin{subfigure}[c]{0.3\textwidth}
        \centering
        \includegraphics[width=0.92\textwidth]{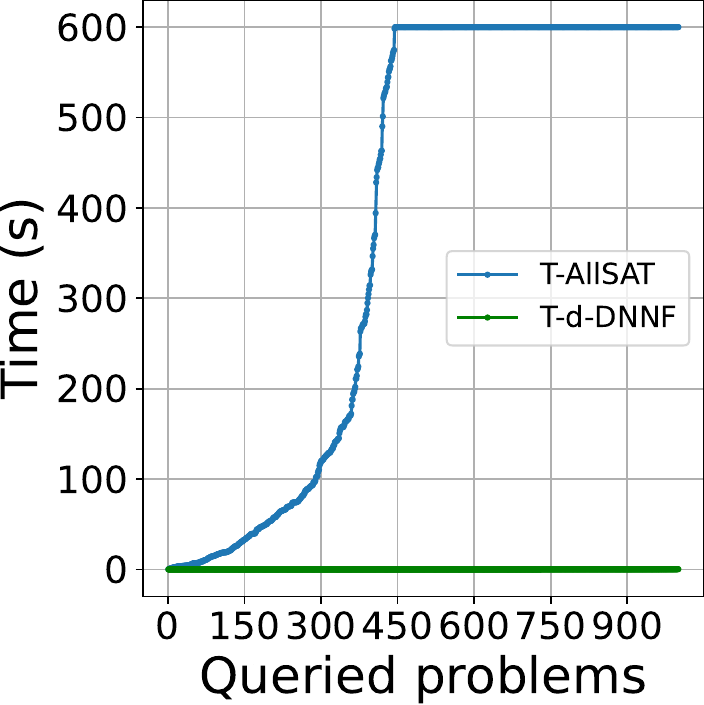}
    \end{subfigure}
    \hfill
    \begin{subfigure}[c]{0.3\textwidth}
        \small
        \centering
        \begin{tabular}{lcc}
            \hline
            \textbf{Method} & \textbf{T.O.} & \textbf{Tot.} \\
            \hline
            \AllSMT          & 556           & 1000           \\
            d-DNNF{}     & 0             & 1000           \\
            \hline
        \end{tabular}
    \end{subfigure}
    \caption{Answering times for CT queries. Left: scatter plot comparing \dDNNF{} vs \AllSMT-based counting. Center: comparison with cactus plot. Right: number of timeouts for each method.
    }%
    \label{fig:counting-queries-time}
\end{figure}
We now proceed to analyze the effectiveness and efficiency of \Treduced{} \NNFs{} for answering CE queries.
In \cref{fig:entailment-queries-time} left and center, we compare with scatter plots the time for \Treduceda{} \dDNNF{} against non-incremental and incremental SMT, respectively. On the right, a cactus plot compares all methods,
including \obdds{} and \sdds{}.

We observe that \Treduceda{} \dDNNF{}, \obdds{} and \sdds{} are all effective and efficient for answering CE queries.
In particular, \dDNNFs{} drastically outperform non-incremental SMT,
and are mostly better than incremental SMT. 
Note that SMT-solvers are highly optimized for checking satisfiability, and are thus expected to perform well on CE queries. 

Notice also that the problems and queries considered are easy and can
be answered by both methods in a fraction of a second, whereas they
were quite challenging for the experiments in \cite{micheluttiCanonicalDecisionDiagrams2024}.

\subparagraph*{Results on counting queries.}

Finally we analyze the results for  queries on \SharpSMT{} under different
sets of assumptions.
For these queries, we only consider the 100 most challenging problems according to lemma enumeration time, for a total of 1000 queries. As very few of them could be compiled into \obdds{} and \sdds{}, we only compare \dDNNFs{} against the \AllSMT-based approach.
In \cref{fig:counting-queries-time}, we compare the two approaches with a scatter plot (left), and with a cactus plot (center). On the right, we report the number of timeouts for each method.
On these queries, the advantage of \Treduced{} \dDNNF{}s over SMT-based approaches is dramatic. Indeed, %
\dDNNFs{} can solve all the problems in a fraction of a second, whereas \AllSMT{} fails to solve the big majority of the queries within the timeout.
As mentioned above, counting the number of \T-satisfiable truth
assignments with an \AllSMT{}-based approach is extremely expensive. %
On the other hand, \Treduced{} \dDNNFs{} can answer these queries in linear time with respect to the size of the compiled representation.
The results for \Treduced{} \obdds{} and \sdds{} are similar to those for \dDNNFs{}, but limited to fewer instances due to compilation timeouts.

Overall, the results show the potential of \Treduced{} \dDNNFs{} for queries intractable for standard SMT approaches, amortizing the cost of compilation over many expensive queries.

\section{Conclusions and Future Work}%
\label{sec:conclusions}

In this paper, we have introduced for the first time a formal framework for compiling SMT formulas into \T-equivalent \dDNNF{} \T-formulas, %
supporting polytime SMT queries.
The approach is theory-agnostic, independent of the target \dDNNF{} language, and can be implemented on top of existing tools. In particular, compilation relies on a lemma enumerator and a propositional compiler, both treated as black boxes.
Queries can be reduced to their propositional counterpart, and can be implemented by means of any propositional \dDNNF{} reasoner.
We have implemented our approach in a prototype tool and performed a preliminary experimental evaluation on the benchmarks from~\cite{micheluttiCanonicalDecisionDiagrams2024}.
The results confirm the feasibility and effectiveness of our approach.

Remarkably, our approach inherits from 
propositional \dDNNFs{} %
the intrinsic limitation %
that the (super)set of
atoms \allalpha{} in the formula and in the queries should be known at
compilation time. 
Relaxing this assumption is non-trivial,
and opens interesting directions for future work, including
incremental lemma-generation techniques, incremental \dDNNF{}
compilation, 
and hybrid \dDNNF{}/SMT querying on \Treduceda or \Textendeda \T-formulas. 
We also plan to investigate the applicability %
to real-world problems where fast on-line query answering is crucial.

\newpage
\bibliography{bibliography}

@incollection{barrettSatisfiabilityModuloTheories2021,
  title = {Satisfiability {{Modulo Theories}}},
  booktitle = {Handbook of {{Satisfiability}}},
  author = {Barrett, Clark and Sebastiani, Roberto and Seshia, Sanjit A. and Tinelli, Cesare},
  year = 2021,
  edition = {2},
  volume = {336},
  pages = {1267--1329},
  publisher = {IOS Press},
  doi = {10.3233/FAIA201017},
  isbn = {978-1-64368-160-3 978-1-64368-161-0},
  series = {{{FAIA}}}
}

@article{bryantBooleanSatisfiabilityTransitivity2002,
  title = {Boolean {{Satisfiability}} with {{Transitivity Constraints}}},
  author = {Bryant, Randal E. and Velev, Miroslav N.},
  year = 2002,
  journal = {ACM Trans. Comput. Logic},
  volume = {3},
  number = {4},
  pages = {604--627},
  issn = {1529-3785},
  doi = {10.1145/566385.566390},
  urldate = {2024-03-11}
}

@article{bryantGraphBasedAlgorithmsBoolean1986,
  title = {Graph-{{Based Algorithms}} for {{Boolean Function Manipulation}}},
  author = {Bryant, Randal E.},
  year = 1986,
  journal = {IEEE Trans Comput},
  volume = {C-35},
  number = {8},
  pages = {677--691},
  issn = {1557-9956},
  doi = {10.1109/TC.1986.1676819},
  urldate = {2024-03-11}
}

@inproceedings{cavadaComputingPredicateAbstractions2007,
  title = {Computing {{Predicate Abstractions}} by {{Integrating BDDs}} and {{SMT Solvers}}},
  author = {Cavada, Roberto and Cimatti, Alessandro and Franz{\'e}n, Anders and Kalyanasundaram, Krishnamani and Roveri, Marco and Shyamasundar, R.K.},
  year = 2007,
  pages = {69--76},
  doi = {10.1109/FAMCAD.2007.35},
  booktitle = {{{FMCAD}} 2007}
}

@inproceedings{chakiDecisionDiagramsLinear2009,
  title = {Decision Diagrams for Linear Arithmetic},
  author = {Chaki, Sagar and Gurfinkel, Arie and Strichman, Ofer},
  year = 2009,
  pages = {53--60},
  doi = {10.1109/FMCAD.2009.5351143},
  booktitle = {{{FMCAD}} 2009}
}

@article{chistikovApproximateCountingSMT2017,
  title = {Approximate {{Counting}} in {{SMT}} and {{Value Estimation}} for {{Probabilistic Programs}}},
  author = {Chistikov, Dmitry and Dimitrova, Rayna and Majumdar, Rupak},
  year = 2017,
  journal = {Acta Inform.},
  volume = {54},
  number = {8},
  pages = {729--764},
  issn = {1432-0525},
  doi = {10.1007/s00236-017-0297-2},
  urldate = {2023-08-24},
  langid = {english}
}

@misc{civiniEagerEncodingsTheoryAgnostic2026,
  title = {Beyond {{Eager Encodings}}: {{A Theory-Agnostic Approach}} to {{Theory-Lemma Enumeration}} in {{SMT}}},
  shorttitle = {Beyond {{Eager Encodings}}},
  author = {Civini, Emanuele and Masina, Gabriele and Spallitta, Giuseppe and Sebastiani, Roberto},
  year = 2026,
  number = {arXiv:2602.14634},
  eprint = {2602.14634},
  primaryclass = {cs},
  publisher = {arXiv},
  doi = {10.48550/arXiv.2602.14634},
  urldate = {2026-02-17},
  archiveprefix = {arXiv},
  langid = {english},
  note = {Extended version with proofs. Conference version published in Proceedings of IJCAR 2026.}
}

@article{darwicheKnowledgeCompilationMap2002,
  title = {A Knowledge Compilation Map},
  author = {Darwiche, Adnan and Marquis, Pierre},
  year = 2002,
  journal = {J Artif Intell Res},
  volume = {17},
  number = {1},
  pages = {229--264},
  issn = {1076-9757},
  doi = {10.1613/jair.989}
}

@inproceedings{darwicheSDDNewCanonical2011,
  title = {{{SDD}}: {{A New Canonical Representation}} of {{Propositional Knowledge Bases}}},
  shorttitle = {{{SDD}}},
  author = {Darwiche, Adnan},
  year = 2011,
  series = {{{IJCAI}}'11},
  volume = {2},
  pages = {819--826},
  publisher = {AAAI Press},
  doi = {10.5591/978-1-57735-516-8/IJCAI11-143},
  urldate = {2024-03-04},
  isbn = {978-1-57735-514-4},
  booktitle = {{{IJCAI}} 2011}
}

@inproceedings{derkinderenCircuitAwareDDNNFCompilation2025,
  title = {Circuit-{{Aware}} d-{{DNNF Compilation}}},
  author = {Derkinderen, Vincent and Lagniez, Jean-Marie},
  year = 2025,
  volume = {1},
  pages = {4454--4462},
  issn = {1045-0823},
  doi = {10.24963/ijcai.2025/496},
  urldate = {2025-11-05},
  langid = {english},
  booktitle = {{{IJCAI}} 2025}
}

@misc{derkinderenTopDownKnowledgeCompilation2023,
  title = {Top-{{Down Knowledge Compilation}} for {{Counting Modulo Theories}}},
  author = {Derkinderen, Vincent and Martires, Pedro Zuidberg Dos and Kolb, Samuel and Morettin, Paolo},
  year = 2023,
  number = {arXiv:2306.04541},
  eprint = {2306.04541},
  primaryclass = {cs},
  publisher = {arXiv},
  url = {http://arxiv.org/abs/2306.04541},
  urldate = {2024-03-06},
  archiveprefix = {arXiv},
  langid = {english},
  note = {Workshop on Counting and Sampling at SAT 2023}
}

@inproceedings{dosmartiresExactApproximateWeighted2019,
  title = {Exact and {{Approximate Weighted Model Integration}} with {{Probability Density Functions Using Knowledge Compilation}}},
  author = {Dos Martires, Pedro Zuidberg and Dries, Anton and De Raedt, Luc},
  year = 2019,
  volume = {33},
  pages = {7825--7833},
  doi = {10.1609/aaai.v33i01.33017825},
  urldate = {2022-12-19},
  booktitle = {{{AAAI}} 2019}
}

@inproceedings{garioPySMTSolveragnosticLibrary2015,
  title = {{{PySMT}}: A Solver-Agnostic Library for Fast Prototyping of {{SMT-based}} Algorithms},
  booktitle = {{{SMT Workshop}} 2015},
  author = {Gario, Marco and Micheli, Andrea},
  year = 2015,
  url = {https://github.com/pysmt/pysmt}
}

@article{goelBDDBasedProcedures2003,
  title = {{{BDD Based Procedures}} for a {{Theory}} of {{Equality}} with {{Uninterpreted Functions}}},
  author = {Goel, Anuj and Sajid, Khurram and Zhou, Hai and Aziz, Adnan and Singhal, Vigyan},
  year = 2003,
  journal = {Form Methods Syst Des},
  volume = {22},
  number = {3},
  pages = {205--224},
  issn = {1572-8102},
  doi = {10.1023/A:1022988809947},
  urldate = {2024-02-28},
  langid = {english}
}

@inproceedings{lagniezImprovedDecisionDNNFCompiler2017,
  title = {An {{Improved Decision-DNNF Compiler}}},
  author = {Lagniez, Jean-Marie and Marquis, Pierre},
  year = 2017,
  pages = {667--673},
  publisher = {International Joint Conferences on Artificial Intelligence Organization},
  doi = {10.24963/ijcai.2017/93},
  urldate = {2024-07-24},
  isbn = {978-0-9992411-0-3},
  langid = {english},
  booktitle = {{{IJCAI}} 2017}
}

@inproceedings{lahiriSMTTechniquesFast2006,
  title = {{{SMT Techniques}} for {{Fast Predicate Abstraction}}},
  author = {Lahiri, Shuvendu K. and Nieuwenhuis, Robert and Oliveras, Albert},
  year = 2006,
  pages = {424--437},
  doi = {10.1007/11817963_39},
  urldate = {2023-11-17},
  isbn = {978-3-540-37406-0},
  langid = {english},
  series = {{{LNCS}}},
  publisher = {Springer},
  booktitle = {{{CAV}} 2006}
}

@inproceedings{mathsat5_tacas13,
  title = {The {{MathSAT5 SMT Solver}}},
  author = {Cimatti, Alessandro and Griggio, Alberto and Schaafsma, Bastiaan Joost and Sebastiani, Roberto},
  year = 2013,
  pages = {93--107},
  publisher = {Springer},
  doi = {10.1007/978-3-642-36742-7_7},
  isbn = {978-3-642-36742-7},
  langid = {english},
  series = {{{LNCS}}},
  booktitle = {{{TACAS}} 2013}
}

@inproceedings{maVolumeComputationBoolean2009,
  title = {Volume {{Computation}} for {{Boolean Combination}} of {{Linear Arithmetic Constraints}}},
  author = {Ma, Feifei and Liu, Sheng and Zhang, Jian},
  year = 2009,
  pages = {453--468},
  publisher = {Springer},
  doi = {10.1007/978-3-642-02959-2_33},
  isbn = {978-3-642-02959-2},
  langid = {english},
  series = {{{LNCS}}},
  booktitle = {{{CADE}} 22}
}

@inproceedings{micheluttiCanonicalDecisionDiagrams2024,
  title = {Canonical {{Decision Diagrams Modulo Theories}}},
  author = {Michelutti, Massimo and Masina, Gabriele and Spallitta, Giuseppe and Sebastiani, Roberto},
  year = 2024,
  volume = {392},
  pages = {4319--4327},
  publisher = {IOS Press},
  doi = {10.3233/FAIA241007},
  urldate = {2024-11-28},
  series = {{{FAIA}}},
  booktitle = {{{ECAI}} 2024}
}

@inproceedings{mollerDifferenceDecisionDiagrams1999,
  title = {Difference {{Decision Diagrams}}},
  author = {M{\o}ller, Jesper and Lichtenberg, Jakob and Andersen, Henrik Reif and Hulgaard, Henrik},
  year = 1999,
  volume = {1683},
  pages = {111--125},
  doi = {10.1007/3-540-48168-0_9},
  urldate = {2023-09-18},
  isbn = {978-3-540-66536-6 978-3-540-48168-3},
  langid = {english},
  series = {{{LNCS}}},
  publisher = {Springer},
  booktitle = {{{CSL}} 1999}
}

@article{morettin-wmi-aij19,
  title = {Advanced {{SMT}} Techniques for {{Weighted Model Integration}}},
  author = {Morettin, Paolo and Passerini, Andrea and Sebastiani, Roberto},
  year = 2019,
  journal = {Artif Intell},
  volume = {275},
  number = {C},
  pages = {1--27},
  issn = {00043702},
  doi = {10.1016/j.artint.2019.04.003},
  langid = {english}
}

@phdthesis{phanModelCountingModulo2015,
  type = {Thesis},
  title = {Model {{Counting Modulo Theories}}},
  author = {Phan, Quoc-Sang},
  year = 2015,
  url = {https://qmro.qmul.ac.uk/xmlui/handle/123456789/15130},
  urldate = {2024-07-16},
  langid = {english},
  school = {Queen Mary University of London}
}

@inproceedings{sebastianiColorsMakeTheories2016,
  title = {Colors {{Make Theories Hard}}},
  author = {Sebastiani, Roberto},
  year = 2016,
  series = {{{LNCS}}},
  volume = {9706},
  pages = {152--170},
  doi = {10.1007/978-3-319-40229-1_11},
  urldate = {2026-03-09},
  isbn = {978-3-319-40228-4 978-3-319-40229-1},
  publisher = {Springer},
  booktitle = {{{IJCAR}} 2016}
}

@inproceedings{shawApproximateSMTCounting2025,
  title = {Approximate {{SMT Counting Beyond Discrete Domains}}},
  author = {Shaw, Arijit and Meel, Kuldeep S.},
  year = 2025,
  pages = {1--7},
  doi = {10.1109/DAC63849.2025.11133351},
  urldate = {2026-03-02},
  booktitle = {{{DAC}}}
}

@inproceedings{shawEfficientVolumeComputation2025,
  title = {Efficient {{Volume Computation}} for {{SMT Formulas}}},
  author = {Shaw, Arijit and Sarkar, Uddalok and Meel, Kuldeep S.},
  year = 2025,
  volume = {22},
  pages = {544--554},
  issn = {2334-1033},
  doi = {10.24963/kr.2025/53},
  urldate = {2026-03-05},
  langid = {english},
  booktitle = {{{KR}} 2025}
}

@misc{somenzi2009cudd,
  title = {{{CUDD}}: {{CU}} Decision Diagram Package},
  author = {Somenzi, Fabio},
  year = 2009,
  url = {https://github.com/cuddorg/cudd},
  howpublished = {University of Colorado Boulder}
}

@inproceedings{strichmanDecidingSeparationFormulas2002,
  title = {Deciding {{Separation Formulas}} with {{SAT}}},
  author = {Strichman, Ofer and Seshia, Sanjit A. and Bryant, Randal E.},
  year = 2002,
  pages = {209--222},
  publisher = {Springer},
  doi = {10.1007/3-540-45657-0_16},
  isbn = {978-3-540-45657-5},
  langid = {english},
  booktitle = {{{CAV}} 2002}
}

@inproceedings{strichmanSolvingPresburgerLinear2002,
  title = {On {{Solving Presburger}} and {{Linear Arithmetic}} with {{SAT}}},
  author = {Strichman, Ofer},
  year = 2002,
  pages = {160--170},
  publisher = {Springer},
  doi = {10.1007/3-540-36126-X_10},
  isbn = {978-3-540-36126-8},
  langid = {english},
  booktitle = {{{FMCAD}} 2002}
}

@article{sundermannReusingDDNNFsEfficient2024,
  title = {Reusing D-{{DNNFs}} for {{Efficient Feature-Model Counting}}},
  author = {Sundermann, Chico and Raab, Heiko and He{\ss}, Tobias and Th{\"u}m, Thomas and Schaefer, Ina},
  year = 2024,
  journal = {ACM Trans. Softw. Eng. Methodol.},
  volume = {33},
  number = {8},
  pages = {1--32},
  issn = {1049-331X, 1557-7392},
  doi = {10.1145/3680465},
  urldate = {2026-03-03},
  langid = {english}
}

@inproceedings{vandenbroeckLiftedProbabilisticInference2011,
  title = {Lifted {{Probabilistic Inference}} by {{First-Order Knowledge Compilation}}},
  author = {{Van den Broeck}, Guy and Taghipour, Nima and Meert, Wannes and Davis, Jesse and De Raedt, Luc},
  year = 2011,
  series = {{{IJCAI}}'11},
  pages = {2178--2185},
  publisher = {AAAI Press},
  doi = {10.5591/978-1-57735-516-8/IJCAI11-363},
  urldate = {2026-04-14},
  isbn = {978-1-57735-515-1},
  booktitle = {{{IJCAI}} 2011}
}

@inproceedings{vandepolBDDRepresentationLogicEquality2005,
  title = {A {{BDD-Representation}} for the {{Logic}} of {{Equality}} and {{Uninterpreted Functions}}},
  author = {{van de Pol}, Jaco and Tveretina, Olga},
  year = 2005,
  pages = {769--780},
  publisher = {Springer},
  doi = {10.1007/11549345_66},
  isbn = {978-3-540-31867-5},
  langid = {english},
  series = {{{LNCS}}},
  booktitle = {{{MFCS}} 2005}
}

@inproceedings{velevEffectiveUseBoolean2001,
  title = {Effective Use of {{Boolean}} Satisfiability Procedures in the Formal Verification of Superscalar and {{VLIW}}},
  author = {Velev, Miroslav N. and Bryant, Randal E.},
  year = 2001,
  pages = {226--231},
  doi = {10.1145/378239.378469},
  urldate = {2026-02-13},
  isbn = {978-1-58113-297-7},
  publisher = {ACM},
  booktitle = {{{DAC}} 2001}
}

  \newpage
  \appendix
  \section{Proofs}\label{sec:proofs}

{\em In what follows ``Proposition'' denotes facts which are straightforward
consequences of the  definitions, whereas ``Lemma'' and ``Theorem''
denote facts for which we provide a proof explicitly.}

\subsection{Some auxiliary propositions and lemmas}

\begin{proposition}%
\label{prop:assignmentsets}
 Consider some \T-formulas \via{}, $\via{1}$, and $\via{2}$. We have that: 
  \begin{enumerate}[(a)]
{\item\label{item:assignmentsets:disjoint}
    $\CTTA{\vi{}}$, $\CTTA{\neg\vi{}}$, $\ITTA{\vi{}}$, 
    $\ITTA{\neg\vi{}}$ are all pairwise disjoint;}
{\item\label{item:assignmentsets:exhaustive}
 $\CTTA{\vi{}} \cup \CTTA{\neg\vi{}} \cup \ITTA{\vi{}} \cup
 \ITTA{\neg\vi{}}={\CTTA{\top}\cup\ITTA{\top}=}\
 2^{\allalpha}$;}
  {\item\label{item:assignmentsets:tsat}
   $\vi{}$ is \T-unsatisfiable if and only if  $\CTTA{\vi{}}=\emptyset$;}
 {\item\label{item:assignmentsets:bsat}
   $\vi{}$ is \B-unsatisfiable if and only if  $\CTTA{\vi{}}=\emptyset$
   and $\ITTA{\vi{}}=\emptyset$;}
 {\item\label{item:assignmentsets:tvalid}
   $\vi{}$ is \T-valid  if and only if $\CTTA{\neg\vi{}}=\emptyset$;}
 {\item\label{item:assignmentsets:bvalid}
   $\vi{}$ is \B-valid  if and only if $\CTTA{\neg\vi{}}=\emptyset$ and $\ITTA{\neg\vi{}}=\emptyset$;}
  \item\label{item:assignmentsets:tequiv}
 $\vi{1}\Tequiv\vi{2}$ if and only if
 $\CTTA{\vi{1}}=\CTTA{\vi{2}}$;
  \item\label{item:assignmentsets:bequiv}
 $\vi{1}\bequiv\vi{2}$ if and only if
    $\CTTA{\vi{1}}=\CTTA{\vi{2}}$ and $\ITTA{\vi{1}}=\ITTA{\vi{2}}$;
  {\item\label{item:assignmentsets:tentail}
 $\vi{1}\Tmodels\vi{2}$ if and only if
 $\CTTA{\vi{1}}\subseteq\CTTA{\vi{2}}$;}
  {\item\label{item:assignmentsets:bentail}
 $\vi{1}\pmodels\vi{2}$ if and only if
    $\CTTA{\vi{1}}\subseteq\CTTA{\vi{2}}$ and $\ITTA{\vi{1}}\subseteq\ITTA{\vi{2}}$.}
 
  \end{enumerate}
\end{proposition}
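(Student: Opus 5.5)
The plan is to reduce every item to the disjoint decomposition of $2^{\allalpha}$ into the four sets $\CTTA{\vi{}}$, $\ITTA{\vi{}}$, $\CTTA{\neg\vi{}}$, $\ITTA{\neg\vi{}}$. Throughout, I will use one bridge fact: a total assignment $\eta$ on $\allalpha\supseteq\atoms{\vi{}}$ either \B-satisfies $\vi{}$ or \B-satisfies $\neg\vi{}$, and it \T-satisfies $\vi{}$ iff it is \T-satisfiable and \B-satisfies $\vi{}$. This holds because $\eta$ fixes the truth value of every atom, so any \T-model of $\eta$ evaluates $\vi{}$ exactly as the Boolean assignment $\foltoprop(\eta)$ evaluates $\vip{}$.

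\textbf{Items (a) and (b).} Every total assignment is either \T-satisfiable or \T-unsatisfiable, and (by the bridge fact) satisfies exactly one of $\vip{}$ and $\neg\vip{}$. This gives pairwise disjointness. It also shows the union is all of $2^{\allalpha}$, which equals $\CTTA{\top}\cup\ITTA{\top}$.

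\textbf{Satisfiability and validity, items (c)--(f).} For (c), any \T-model $\mathcal{I}$ of $\vi{}$ induces a total assignment $\eta$ on $\allalpha$ (the atoms true in $\mathcal{I}$). This $\eta$ is \T-satisfiable, and by the bridge fact $\eta\bmodels\vi{}$, so $\eta\in\CTTA{\vi{}}$. Conversely, any element of $\CTTA{\vi{}}$ has a \T-model, which is then a \T-model of $\vi{}$. For (d), I apply \eqref{eq:decomposition}: $\vi{}$ is \B-unsatisfiable iff the disjunction of its total satisfying assignments is empty. Items (e) and (f) follow by applying (c) and (d) to $\neg\vi{}$.

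\textbf{Equivalence and entailment, items (g)--(j).} I prove the entailments (i) and (j) first and derive (g) and (h) as two-sided inclusions. For (j), I use \eqref{eq:decomposition}: $\vi{1}\pmodels\vi{2}$ iff every total assignment \B-satisfying $\vi{1}$ also \B-satisfies $\vi{2}$. Since \T-satisfiability is a property of the assignment alone, this is equivalent to the two inclusions $\CTTA{\vi{1}}\subseteq\CTTA{\vi{2}}$ and $\ITTA{\vi{1}}\subseteq\ITTA{\vi{2}}$.

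The main obstacle is item (i), because a \T-model is not a Boolean assignment.
\begin{itemize}
\item ($\Rightarrow$) Suppose $\vi{1}\Tmodels\vi{2}$ and $\eta\in\CTTA{\vi{1}}$. Take a \T-model $\mathcal{I}$ of $\eta$. By the bridge fact $\mathcal{I}\models\vi{1}$, hence $\mathcal{I}\models\vi{2}$. Since $\mathcal{I}$ induces exactly $\eta$ on $\allalpha\supseteq\atoms{\vi{2}}$, we get $\eta\bmodels\vi{2}$, so $\eta\in\CTTA{\vi{2}}$.
\item ($\Leftarrow$) Take any \T-model $\mathcal{I}$ of $\vi{1}$ and let $\eta$ be its induced total assignment on $\allalpha$. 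As in (c), $\eta\in\CTTA{\vi{1}}\subseteq\CTTA{\vi{2}}$. So $\eta\bmodels\vi{2}$, and since $\mathcal{I}\models\eta$, also $\mathcal{I}\models\vi{2}$.
\end{itemize}
The key is that $\allalpha$ must contain the atoms of both formulas. This is exactly the standing convention, so I would make this dependence explicit in the proof.
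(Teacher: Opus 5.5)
Your proposal is correct. The paper gives no explicit proof of this proposition: it states that results labelled ``Proposition'' are straightforward consequences of the definitions, and your argument is exactly such a direct unfolding of \cref{def:cttaitta}. Your ``bridge fact'' (since $\atoms{\vi{}}\subseteq\allalpha$, a total assignment on \allalpha{} fixes the truth value of \vi{} in each of its \T-models) is the implicit observation doing the work, and you rightly make explicit that \allalpha{} must cover the atoms of both formulas in items (g)--(j).
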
     

\begin{proposition}%
\label{prop:assignmentdiffsets}
    Consider some \T-formula \vi{} and two supersets of \atoms{\vi{}},
    $\allalpha,\allalpha{}'$, s.t $\allalpha\neq\allalpha{}'$. We have
    that:
\begin{enumerate}[(a)]
\item\label{item:assignmentdiffsets:diff}      
  $\CTTA{\vi{}}\neq \CTTAprime{\vi{}}$ unless
  $\CTTA{\vi{}}=\CTTAprime{\vi{}}=\emptyset$, and\\
  $\ITTA{\vi{}}\neq \ITTAprime{\vi{}}$ unless $\ITTA{\vi{}}=\ITTAprime{\vi{}}=\emptyset$;
\item\label{item:assignmentdiffsets:bequiv}      
  $%
\OrOf{\CTTA{\vi{}}}{\eta{}} \not\bequiv 
    \OrOf{\CTTAprime{\vi{}}}{\etaprime{}} 
 \mbox{ unless $\CTTA{\vi{}}=\CTTAprime{\vi{}}=\emptyset$, and}\\
\OrOf{\ITTA{\vi{}}}{\rho{}} \not\bequiv 
    \OrOf{\ITTAprime{\vi{}}}{\rhoprime{}}  
  \mbox{ unless $\ITTA{\vi{}}=\ITTAprime{\vi{}}=\emptyset$;}
      $
\item\label{item:assignmentdiffsets:tequiv}      
  $%
 \OrOf{\CTTA{\vi{}}}{\eta{}} \Tequiv \!
    \OrOf{\CTTAprime{\vi{}}}{\etaprime{}}
    ,\\
\OrOf{\ITTA{\vi{}}}{\rho{}} \Tequiv 
    \OrOf{\ITTAprime{\vi{}}}{\rhoprime{}}  \Tequiv \bot.
      $
\item\label{item:assignmentdiffsets:true}
  $\CTTA{\top}\neq H_{\allalpha{}'}(\top)$, 
  $\ITTA{\top}\neq P_{\allalpha{}'}(\top)$.

\end{enumerate}
\end{proposition}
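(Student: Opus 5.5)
The plan is to argue directly from \cref{def:cttaitta}, with \cref{prop:assignmentsets} as the main tool. The key observation for everything that follows is syntactic. Every element of $\CTTA{\vi{}}$ or $\ITTA{\vi{}}$ is a \emph{total} truth assignment on \allalpha{}, so it is a cube mentioning exactly the atoms in \allalpha{}. Likewise, every element of $\CTTAprime{\vi{}}$ or $\ITTAprime{\vi{}}$ mentions exactly the atoms in $\allalpha{}'$. Since $\allalpha\neq\allalpha{}'$, no cube can belong to both families, so $\CTTA{\vi{}}\cap\CTTAprime{\vi{}}=\emptyset$ and $\ITTA{\vi{}}\cap\ITTAprime{\vi{}}=\emptyset$.

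For \cref{item:assignmentdiffsets:diff}, two disjoint sets coincide only if both are empty, which gives both claims at once. For \cref{item:assignmentdiffsets:true}, I will instantiate \cref{item:assignmentdiffsets:diff} with $\vi{}\defas\top$. What remains is to show that the relevant sets are nonempty.
\begin{itemize}
\item $\CTTA{\top}$ is nonempty: fix any model of \T{}; the truth values it induces on \allalpha{} form a \T-satisfiable total assignment.
\item For $\ITTA{\top}$, nonemptiness does not hold in general. If \allalpha{} contains only \B-atoms, or only mutually independent \T-atoms, then $\ITTA{\top}=\emptyset=\ITTAprime{\top}$.
\end{itemize}
I would therefore state the second inequality with the same ``unless both are empty'' proviso as in \cref{item:assignmentdiffsets:diff}.

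For \cref{item:assignmentdiffsets:tequiv}, I use the decomposition \eqref{eq:decomposition} on \allalpha{} and on $\allalpha{}'$. Each $\rho\in\ITTA{\vi{}}$ is \T-unsatisfiable, so $\OrOf{\ITTA{\vi{}}}{\rho{}}\Tequiv\bot$, and the same holds on $\allalpha{}'$. This is the second chain. Since \B-equivalence implies \T-equivalence, \eqref{eq:decomposition} then yields $\vi{}\Tequiv\OrOf{\CTTA{\vi{}}}{\eta{}}$ and $\vi{}\Tequiv\OrOf{\CTTAprime{\vi{}}}{\etaprime{}}$. Transitivity gives the first claim.

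Item \cref{item:assignmentdiffsets:bequiv} is the main obstacle, because its meaning depends on the atom set over which \B-equivalence is judged. Suppose both disjunctions are read over the common superset $\allalpha\cup\allalpha{}'$. Then the claim can fail. Take $\vi{}=A_1$, $\allalpha=\set{A_1}$ and $\allalpha{}'=\set{A_1,A_2}$. Then $\OrOf{\CTTA{\vi{}}}{\eta{}}=A_1$, while $\OrOf{\CTTAprime{\vi{}}}{\etaprime{}}=(A_1\wedge A_2)\vee(A_1\wedge\neg A_2)\bequiv A_1$.

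The reading I would adopt follows the convention stated at the start of \sref{sec:theory-definitions}: each disjunction is regarded as a formula on its own atom set. Under that reading, the first disjunction, viewed on \allalpha{}, has exactly $\CTTA{\vi{}}$ as its set of \T-satisfiable satisfying assignments. The second, viewed on $\allalpha{}'$, has exactly $\CTTAprime{\vi{}}$. By \cref{item:assignmentdiffsets:diff}, these two sets differ unless both are empty. Proposition~\ref{prop:assignmentsets}\cref{item:assignmentsets:bequiv} then rules out \B-equivalence, since it requires equal sets of \T-satisfiable assignments. The argument for the $\ITTA{}$-disjunctions is identical.

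If the intended reading is instead over $\allalpha\cup\allalpha{}'$, I would state the claim only for the case where $\allalpha{}'\setminus\allalpha$ contains an atom on which \T-consistency genuinely depends. I would then exhibit an assignment that flips that atom and separates the two disjunctions.
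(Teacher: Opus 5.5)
The paper gives no proof of this proposition. Its appendix lists it among the ``Propositions'' treated as straightforward consequences of the definitions, so there is no route to compare against. Your arguments for (a), (c), and the $\CTTA{\top}$ half of (d) are exactly that unwinding, and they are correct.

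Your two refutations are also correct. The $\ITTA{\top}$ half of (d) fails whenever both sets are empty, e.g.\ $\allalpha=\set{A_1}$, $\allalpha'=\set{A_1,A_2}$. Your counterexample to (b) is valid. The paper's own remark at the start of \sref{sec:theory-definitions} treats $A_1$ and $(A_1\vee A_2)\wedge(A_1\vee\neg A_2)$ as equivalent, and that is your counterexample in CNF form. The second half of (b) fails for the same reason. Take $\vi{}\defas(x\le 0)\wedge(x\ge 1)$, $\allalpha\defas\atoms{\vi{}}$ and $\allalpha'\defas\allalpha\cup\set{A}$. Then $\ITTA{\vi{}}$ and $\ITTAprime{\vi{}}$ are both nonempty, yet both disjunctions are \B-equivalent to $\vi{}$.

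What does not hold up is your rescue of (b).
\begin{itemize}
\item The convention at the start of \sref{sec:theory-definitions} says the opposite of what you attribute to it. Formulas on different atom sets are compared only as formulas on a common superset of $\allalpha\cup\allalpha'$. That is precisely the reading under which your own counterexample refutes (b).
\item Item~(h) of \cref{prop:assignmentsets} characterizes \B-equivalence of two formulas over one and the same atom set. \B-equivalence is propositional equivalence of the Boolean abstractions and does not depend on the ambient atom set. Comparing $\CTTA{\vi{}}$ for one disjunction with $\CTTAprime{\vi{}}$ for the other therefore tells you nothing about $\bequiv$.
\end{itemize}
That paragraph silently redefines $\bequiv$ as equality of assignment sets, under which (b) merely restates (a). Drop it and record (b) as false in general.

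If you want a correct replacement, sharpen your last paragraph as follows. Work over $\allalpha\cup\allalpha'$, and use $\atoms{\vi{}}\subseteq\allalpha$.
\begin{itemize}
\item A total assignment $\mu$ \B-satisfies $\OrOf{\CTTA{\vi{}}}{\eta{}}$ if and only if $\mu\bmodels\vi{}$ and its restriction to $\allalpha$ is \T-satisfiable.
\item It \B-satisfies $\OrOf{\ITTA{\vi{}}}{\rho{}}$ if and only if $\mu\bmodels\vi{}$ and that restriction is \T-unsatisfiable.
\item The same holds with $\allalpha'$ in place of $\allalpha$.
\end{itemize}
Hence the disjunctions over $\CTTA{\vi{}}$ and $\CTTAprime{\vi{}}$, and equally those over $\ITTA{\vi{}}$ and $\ITTAprime{\vi{}}$, are \B-inequivalent exactly when the following holds. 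Some total assignment on $\allalpha\cup\allalpha'$ that \B-satisfies $\vi{}$ has a \T-satisfiable restriction to one of $\allalpha,\allalpha'$ and a \T-unsatisfiable restriction to the other.
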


  \begin{lemma}[\Treduction of $\resvil{}\wedge l$]
  \label{teo:tresidualconjoinistreduced}
    Let $\via{}$ be a  \Treduceda \T-formula, and let $l$ be some literal
    on \allalpha{}.
    Then       $\resvil{}\wedge l$ is \Treduceda. 
\end{lemma}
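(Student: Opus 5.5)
The plan is to work directly with the characterisation of \Treduceda{} formulas as those with $\ITTA{\vi{}}=\emptyset$. It suffices to show that every total truth assignment $\rho$ on \allalpha{} with $\rho\pmodels\resvil{}\wedge l$ is \T-satisfiable. Equivalently, we show $\ITTA{\resvil{}\wedge l}\subseteq\ITTA{\vi{}}$, which is empty by hypothesis.

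First, fix a total assignment $\rho$ on \allalpha{} such that $\rho\pmodels\resvil{}\wedge l$. Since $\rho$ is total and contains $l$, it has the form $\rho=l\wedge\rho'$, where $\rho'$ is a total assignment on $\allalpha\setminus\set{\atoms{l}}$. The residual \resvil{} is obtained by substituting the truth value fixed by $l$ for the atom of $l$ and simplifying. Hence, for every assignment containing $l$, the Boolean value of \resvil{} coincides with that of \vi{}. This is the standard fact $\foltoprop(\vi{})|_{l^p}\wedge l^p\bequiv\foltoprop(\vi{})\wedge l^p$ at the propositional level, and the simplification rules ($\vi{}\wedge\top\Rightarrow\vi{}$, etc.) preserve \B-equivalence. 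Therefore $\resvil{}\wedge l\bequiv\vi{}\wedge l$, and in particular $\rho\pmodels\vi{}$.

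Next, since $\rho$ is a total assignment on \allalpha{} propositionally satisfying \vi{}, it belongs to $\CTTA{\vi{}}\cup\ITTA{\vi{}}$ by \cref{def:cttaitta}. Because \vi{} is \Treduceda, $\ITTA{\vi{}}=\emptyset$, so $\rho\in\CTTA{\vi{}}$ and $\rho$ is \T-satisfiable. Thus no \T-unsatisfiable total assignment propositionally satisfies $\resvil{}\wedge l$, i.e.\ $\ITTA{\resvil{}\wedge l}=\emptyset$. Moreover, $\atoms{\resvil{}\wedge l}\subseteq\atoms{\vi{}}\cup\atoms{l}\subseteq\allalpha$, so the notion is well defined for the same \allalpha{}.

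The only delicate step is the identity $\resvil{}\wedge l\bequiv\vi{}\wedge l$, because the residual is defined syntactically via substitution and propagation of constants. I would argue it by structural induction on the DAG of $\vip{}$: each substitution and simplification step preserves the truth value under any assignment extending $l$. Everything else follows immediately from the definitions and \cref{prop:assignmentsets}.
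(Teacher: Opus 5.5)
Your proof is correct and is essentially the paper's argument. The paper also rests on the identity $\resvil{}\wedge l\bequiv\vi{}\wedge l$ and concludes $\ITTA{\resvil{}\wedge l}=\ITTA{\vi{}\wedge l}\subseteq\ITTA{\vi{}}=\emptyset$; you unfold that inclusion assignment by assignment, and you justify the residual identity and check that the atoms stay within \allalpha{}, both of which the paper takes for granted.
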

\begin{proof}
    $\resvil{}\wedge l \bequiv \vi{}\wedge l$.
 Then $\ITTA{\resvil{}\wedge
   l}=\ITTA{\vi{}\wedge l}\subseteq \ITTA{\vi{}}=\emptyset$,
 Thus $\resvil{}\wedge l$ is \Treduceda. 
\end{proof}

\begin{lemma}[\Textension of $\neg l \vee \resvil{}$ and
  $\neg l \vee (\resvil{}\wedge l)$]
  \label{teo:tresidualdisjoinistextended}
    Let $\via{}$ be a  \Textendeda \T-formula, and let $l$ be some literal
    on \allalpha{}.
    Then $\neg l \vee \resvil{}$ and      $\neg l \vee
    (\resvil{}\wedge l)$
    are \Textendeda. 
\end{lemma}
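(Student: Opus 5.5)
We reduce everything to the definition of \Textendeda{}. A formula $\psi$ is \Textendeda{} exactly when $\ITTA{\neg\psi}=\emptyset$. The argument mirrors the proof of \cref{teo:tresidualconjoinistreduced}, applied to the negations.

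First, compute the \B-equivalence classes of the two formulas. By Shannon expansion we have $\vi{}\bequiv(l\wedge\resvil{})\vee(\neg l\wedge\residual{\vi{}}{\neg l})$. Hence $l\wedge\resvil{}\bequiv l\wedge\vi{}$, which gives
\begin{eqnarray*}
\neg l\vee\resvil{} \;\bequiv\; \neg l\vee(l\wedge\resvil{}) \;\bequiv\; \neg l\vee(l\wedge\vi{}) \;\bequiv\; \neg l\vee\vi{}.
\end{eqnarray*}
So both formulas in the statement are \B-equivalent to $\neg l\vee\vi{}$. Their negations are therefore \B-equivalent to $l\wedge\neg\vi{}$. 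Since $\ITTA{\cdot}$ depends only on the \B-equivalence class (\cref{prop:assignmentsets}~\cref{item:assignmentsets:bequiv}), it suffices to show that $\ITTA{l\wedge\neg\vi{}}=\emptyset$.

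Next, every total assignment on \allalpha{} that \B-satisfies $l\wedge\neg\vi{}$ also \B-satisfies $\neg\vi{}$. Therefore $\ITTA{l\wedge\neg\vi{}}\subseteq\ITTA{\neg\vi{}}$. The latter set is empty because \vi{} is \Textendeda{}. This concludes the argument.

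The only point needing care is the claim $\resvil{}\wedge l\bequiv\vi{}\wedge l$, which was already used in \cref{teo:tresidualconjoinistreduced}. It holds because substituting the truth value of $l$ and propagating preserves \B-equivalence once $l$ is conjoined. A second point is that the atoms of both formulas are contained in \allalpha{}, so their assignment sets are taken with respect to the same \allalpha{}. This follows from $l$ being a literal on \allalpha{} and $\atoms{\resvil{}}\subseteq\atoms{\vi{}}\subseteq\allalpha$. No real obstacle is expected.
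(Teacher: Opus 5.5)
Your proof is correct and is essentially the paper's proof. The paper applies \cref{teo:tresidualconjoinistreduced} to the \Treduceda{} formula $\neg\vi{}$ and then uses $\neg(\neg(\resvil{})\wedge l)\bequiv\neg l\vee\resvil{}\bequiv\neg l\vee(\resvil{}\wedge l)$, whereas you inline that lemma's one-line argument $\ITTA{l\wedge\neg\vi{}}\subseteq\ITTA{\neg\vi{}}=\emptyset$ instead of citing it.
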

\begin{proof}
  Let $\vi{}$ be \Textendeda. Then $\neg\vi{}$ is \Treduceda.
  By \cref{teo:tresidualconjoinistreduced}, 
  $\residual{(\neg\vi{})}{l}\wedge l$ is \Treduceda.
  Thus $\neg(\neg(\resvil{})\wedge l) $ is \Textendeda.
 Since $ \neg(\neg(\resvil{})\wedge l)
 \bequiv\resvil{}\vee \neg l \bequiv
  (\resvil{}\wedge l)\vee \neg l$,
  then both $\neg l \vee \resvil{}$ and
  $\neg l \vee (\resvil{}\wedge l)$ are \Textendeda.
\end{proof}

  \begin{lemma}[\T-literal implication check with a \Textendeda{} \T-formula{}]
    \label{teo:littentails}
  Let $\via{}$ be a  \Textendeda \T-formula.
    Let $l$ be a literal in \allalpha.
   Then $l\Tmodels \vi{}$ if and only if $l\pmodels\vi{}$.
  \end{lemma}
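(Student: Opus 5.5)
The plan is to prove both directions using the characterization of entailment through the assignment sets $\CTTA{\cdot}$ and $\ITTA{\cdot}$ in \cref{prop:assignmentsets}. The direction ``$l\pmodels\vi{}$ implies $l\Tmodels\vi{}$'' is immediate, since propositional entailment always implies \T-entailment. All the work is in the converse.

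For the converse, assume $l\Tmodels\vi{}$. Let $\eta$ be any total truth assignment on \allalpha{} that contains $l$; we must show $\eta\pmodels\vi{}$. By \cref{prop:assignmentsets}~\cref{item:assignmentsets:exhaustive}, $\eta$ lies in exactly one of $\CTTA{\vi{}}$, $\ITTA{\vi{}}$, $\CTTA{\neg\vi{}}$, $\ITTA{\neg\vi{}}$. We rule out the two sets of assignments that propositionally satisfy $\neg\vi{}$.
\begin{itemize}
\item $\eta\notin\ITTA{\neg\vi{}}$, because $\vi{}$ is \Textendeda{}, so $\ITTA{\neg\vi{}}=\emptyset$.
\item $\eta\notin\CTTA{\neg\vi{}}$. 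If it were, $\eta$ would be \T-satisfiable, contain $l$, and propositionally satisfy $\neg\vi{}$, hence also \T-satisfy $\neg\vi{}$. A \T-model of $\eta$ would then satisfy $l$ but not $\vi{}$, contradicting $l\Tmodels\vi{}$.
\end{itemize}
So $\eta\in\CTTA{\vi{}}\cup\ITTA{\vi{}}$, which gives $\eta\pmodels\vi{}$. Since every total extension of $l$ propositionally satisfies $\vi{}$ and $\atoms{\vi{}}\subseteq\allalpha$, we conclude $l\pmodels\vi{}$.

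An alternative route uses \cref{teo:tresidualdisjoinistextended}: the formula $\neg l\vee\resvil{}$ is \Textendeda{}, and $l\Tmodels\vi{}$ is equivalent to its \T-validity. One would then need the [VA] part of \cref{teo:textendedallqueries}, but that theorem is presumably proved using this very lemma, which risks circularity. The direct argument above avoids this.

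The main point needing care is the second bullet, where a \T-satisfiable total assignment that \B-satisfies $\neg\vi{}$ must be turned into a genuine \T-countermodel. This works because a total assignment fixes the truth value of every atom, so $\eta\pmodels\neg\vi{}$ implies $\eta\Tmodels\neg\vi{}$.
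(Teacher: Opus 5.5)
Your proof is correct, but it takes a different route from the paper's. The paper's proof is exactly the ``alternative'' you set aside. It rewrites $l\Tmodels\vi{}$ and $l\pmodels\vi{}$ as \T-validity and \B-validity of $\neg l\vee\resvil{}$. It then observes that this formula is \Textendeda{} by \cref{teo:tresidualdisjoinistextended}, and concludes with the [VA] lemma \cref{teo:tvaliditytextended}.

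There is no circularity on that route. \cref{teo:tvaliditytextended} is proved directly from \cref{prop:assignmentsets} and \cref{def:textended}. The dependency actually runs the other way: the present lemma is the base case of the induction that proves [IM] in \cref{teo:revtentailment}.

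Your direct argument over the four-way partition of $2^{\allalpha}$ is self-contained, and it is also stronger. It never uses that $l$ is a single literal, so it proves $\gamma\Tmodels\vi{}$ iff $\gamma\pmodels\vi{}$ for an arbitrary cube $\gamma$ in one step. If you replace ``$\eta$ contains $l$'' by ``$\eta\pmodels\psi$'', the same argument works for any formula $\psi$ on $\allalpha$. This makes the induction in \cref{teo:revtentailment} unnecessary. The paper's route is instead modular: it reuses the residual-closure lemma and [VA], both of which come from the \T-reduced results via negation.

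One small inaccuracy in your last paragraph: $\eta\pmodels\neg\vi{}$ implies $\eta\Tmodels\neg\vi{}$ for any cube $\eta$, because \B-entailment always implies \T-entailment. Totality of $\eta$ is needed only to place it in one of the four sets of \cref{prop:assignmentsets}.
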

  \begin{proof} \ \\
    $l \Tmodels \vi{} $  if and only if
    $\neg l\vee\vi{} $ is \T-valid, that is, if and only if
    $\neg l\vee\resvil{}$ is \T-valid.
    \\
    $l \pmodels \vi{} $  if and only if
    $\neg l\vee\vi{} $ is \B-valid, that is, if and only if
    $\neg l\vee\resvil{}$ is \B-valid.
    \\
    By     applying 
    \cref{teo:tresidualdisjoinistextended},
    we have that
    $\neg l\vee\resvil{}$ is also 
    \Textendeda.
    Thus, by~\cref{teo:tvaliditytextended},
    $\neg l\vee\resvil{}$ is \T-valid iff it
    is \B-valid.\\
    Combining the three facts above, $l\Tmodels \vi{}$ if and only if $l\pmodels\vi{}$. 
  \end{proof}

\newpage\subsection{Proof of \cref{teo:treducedallqueries}}

The items CO, CE, EQ, SE, CT and ME of \cref{teo:treducedallqueries} are proven by
\cref{teo:treducedtsatisfiability,,teo:tentailment,,teo:treducedtequivalence,,teo:treducedtentailment,,teo:treducedcounting,,teo:treducedenumeration} respectively.

\begin{lemma}[CO: \T-satisfiability of \Treduceda formulas]
  \label{teo:treducedtsatisfiability}
  Let $\via{}$ be a  \Treduceda \T-formula.
  Then  $\vi{}$ is \T-satisfiable if and only if it is \B-satisfiable.
\end{lemma}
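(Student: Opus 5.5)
The argument reduces both notions of satisfiability to emptiness conditions on the assignment sets $\CTTA{\vi{}}$ and $\ITTA{\vi{}}$, using \cref{prop:assignmentsets}, and then applies the hypothesis $\ITTA{\vi{}}=\emptyset$ from \cref{def:treduced}.

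By \cref{prop:assignmentsets} \cref{item:assignmentsets:tsat}, $\vi{}$ is \T-unsatisfiable if and only if $\CTTA{\vi{}}=\emptyset$. By \cref{prop:assignmentsets} \cref{item:assignmentsets:bsat}, $\vi{}$ is \B-unsatisfiable if and only if $\CTTA{\vi{}}=\emptyset$ and $\ITTA{\vi{}}=\emptyset$. Since $\vi{}$ is \Treduceda, we have $\ITTA{\vi{}}=\emptyset$. Hence the second condition collapses to $\CTTA{\vi{}}=\emptyset$, so the two characterizations coincide. Taking contrapositives gives that $\vi{}$ is \T-satisfiable if and only if it is \B-satisfiable.

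One could instead argue directly from \eqref{eq:treduction1}. If $\vi{}$ is \B-satisfiable, some total assignment on \allalpha{} propositionally satisfies it; by \eqref{eq:treduction1} that assignment lies in $\CTTA{\vi{}}$, and is therefore \T-satisfiable. Any \T-model of that assignment then satisfies $\vi{}$, because $\eta\pmodels\vi{}$ implies $\eta\Tmodels\vi{}$. The converse direction holds in general: a \T-model of $\vi{}$ induces a total truth assignment on \allalpha{} that propositionally satisfies $\vi{}$.

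There is no real obstacle here. The only point needing care is the converse direction, and it is already packaged in the propositions stated earlier.
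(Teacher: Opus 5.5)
Your proposal is correct and follows the same route as the paper: it uses items (c) and (d) of \cref{prop:assignmentsets} to characterize \T- and \B-(un)satisfiability through emptiness of $\CTTA{\vi{}}$ and $\CTTA{\vi{}}\cup\ITTA{\vi{}}$, then collapses the two conditions using $\ITTA{\vi{}}=\emptyset$. Your alternative direct argument via \eqref{eq:treduction1} is also sound, but it is not needed.
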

\begin{proof}
By \cref{prop:assignmentsets}~\cref{item:assignmentsets:tsat},
$\vi{}$ is \T-satisfiable if and only if $\CTTA{\vi{}}\neq\emptyset$.\\
By \cref{prop:assignmentsets}~\cref{item:assignmentsets:bsat},
$\vi{}$ is \B-satisfiable if and only if $\CTTA{\vi{}}\cup\ITTA{\vi{}}\neq\emptyset$.\\
$\ITTA{\vi{}}=\emptyset$ because \vi{} is \Treduceda, thus $\vi{}$ is \T-satisfiable if and only if it is \B-satisfiable.
\end{proof}

  \begin{lemma}[CE: \T-entailment with a \Treduceda{} \T-formula{}]
    \label{teo:tentailment}
  Let $\via{}$ be a  \Treduceda \T-formula.
    Let $C$ be some clause on \allalpha.
    Then $\vi{}\Tmodels C$ if and only if $\vi{} \pmodels C$.
  \end{lemma}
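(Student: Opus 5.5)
The plan is to reduce clausal entailment to (un)satisfiability and then invoke \cref{teo:treducedtsatisfiability} on a suitable \Treduceda{} formula. Since $C$ is a clause on \allalpha{}, $\neg C$ is a cube $\bigwedge_i l_i$ of literals on \allalpha{}. We have $\vi{}\Tmodels C$ iff $\vi{}\wedge\neg C$ is \T-unsatisfiable, and $\vi{}\pmodels C$ iff $\vi{}\wedge\neg C$ is \B-unsatisfiable. It therefore suffices to show that $\vi{}\wedge\neg C$ is \Treduceda{}, and then apply \cref{teo:treducedtsatisfiability} to it.

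To show that $\vi{}\wedge\neg C$ is \Treduceda{}, I would argue directly on assignment sets rather than by iterating \cref{teo:tresidualconjoinistreduced}. Any total truth assignment on \allalpha{} that propositionally satisfies $\vi{}\wedge\neg C$ also propositionally satisfies $\vi{}$. Hence $\ITTA{\vi{}\wedge\neg C}\subseteq\ITTA{\vi{}}=\emptyset$.

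Alternatively, one could apply \cref{teo:tresidualconjoinistreduced} once per literal $l_i$ of $\neg C$. Each step uses $\residual{\vi{}}{l}\wedge l\bequiv\vi{}\wedge l$, and the property of being \Treduceda{} is invariant under \B-equivalence, since $\ITTA{\cdot}$ depends only on the Boolean abstraction. This gives, by induction on the length of $C$, that $\vi{}\wedge l_1\wedge\dots\wedge l_k$ is \Treduceda{}. The case $C=\bot$ (empty clause) is handled by noting that $\neg C=\top$, so the claim reduces directly to \cref{teo:treducedtsatisfiability}.

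There is no real obstacle here. The only point needing care is the monotonicity $\ITTA{\psi\wedge\gamma}\subseteq\ITTA{\psi}$ for a fixed superset \allalpha{}. This holds immediately because $\psi\wedge\gamma\pmodels\psi$, together with \cref{prop:assignmentsets}~\cref{item:assignmentsets:bentail}. Chaining the resulting equivalences then yields $\vi{}\Tmodels C$ iff $\vi{}\pmodels C$.
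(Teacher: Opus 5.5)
Your proposal is correct and follows essentially the paper's proof. The paper also reduces CE to the \T- vs.\ \B-unsatisfiability of $\varphi\wedge\gamma$ with $\gamma\bequiv\neg C$ (phrased via the residual $\residual{\varphi}{\gamma}\wedge\gamma$), shows that formula is \Treduceda{} by iterating \cref{teo:tresidualconjoinistreduced}, and concludes with \cref{teo:treducedtsatisfiability}; your one-shot argument $\ITTA{\varphi\wedge\neg C}\subseteq\ITTA{\varphi}=\emptyset$ just inlines the monotonicity step that is the whole content of \cref{teo:tresidualconjoinistreduced}, skipping the residual detour.
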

  \begin{proof}
    Let $\gamma$ be the cube s.t.\ $\gamma\pequiv\neg C$. Then:\\
    $\vi{} \Tmodels C$  if and only if
    $\vi{} \wedge \gamma$ is \T-unsatisfiable, that is, if and only if
    $\residual{\vi{}}{\gamma} \wedge \gamma$ is \T-unsatisfiable.
    \\
    $\vi{} \pmodels C$  if and only if
    $\vi{} \wedge \gamma$ is \B-unsatisfiable, that is, if and only if
    $\residual{\vi{}}{\gamma} \wedge \gamma$ is \B-unsatisfiable.
    \\
    By     applying iteratively
    \cref{teo:tresidualconjoinistreduced}
    to all literals in $\gamma$, we have that
    $\residual{\vi{}}{\gamma} \wedge \gamma$ is also 
    \Treduceda.
    Thus, by~\cref{teo:treducedtsatisfiability},
    $\residual{\vi{}}{\gamma} \wedge \gamma$ is \T-satisfiable iff it
    is \B-satisfiable.\\
    Combining the three facts above, $\vi{}\Tmodels C$ if and
    only if $\vi{} \pmodels C$. 
     \end{proof}

\begin{lemma}[EQ: \T-equivalence of \Treduceda formulas]
  \label{teo:treducedtequivalence}
  Let $\via{1}$ and $\via{2}$ be  \Treduceda \T-formulas.
  Then  $\vi{1}\Tequiv\vi{2}$ if and only if $\vi{1}\bequiv\vi{2}$.
\end{lemma}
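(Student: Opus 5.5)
The plan is to reduce both sides of the equivalence to statements about the assignment sets $\CTTA{\cdot}$ and $\ITTA{\cdot}$, using \cref{prop:assignmentsets}, and then use the hypothesis $\ITTA{\vi{1}}=\ITTA{\vi{2}}=\emptyset$. The argument mirrors that of \cref{teo:treducedtsatisfiability}.

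First, I will handle the direction ``if''. Suppose $\vi{1}\bequiv\vi{2}$. We observed in \sref{sec:background} that $\B$-equivalence implies $\T$-equivalence, since $\vi{1}\pmodels\vi{2}$ implies $\vi{1}\Tmodels\vi{2}$, and symmetrically. Hence $\vi{1}\Tequiv\vi{2}$. Alternatively, this direction follows at once from \cref{prop:assignmentsets}~\cref{item:assignmentsets:bequiv} together with \cref{item:assignmentsets:tequiv}.

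Second, I will handle the direction ``only if''. Suppose $\vi{1}\Tequiv\vi{2}$. By \cref{prop:assignmentsets}~\cref{item:assignmentsets:tequiv}, this gives $\CTTA{\vi{1}}=\CTTA{\vi{2}}$. Since both formulas are \Treduceda, we have $\ITTA{\vi{1}}=\emptyset=\ITTA{\vi{2}}$. Both conditions of \cref{prop:assignmentsets}~\cref{item:assignmentsets:bequiv} therefore hold, and we conclude $\vi{1}\bequiv\vi{2}$. Equivalently, one can argue directly from \eqref{eq:treduction1}: each $\vi{i}$ is $\B$-equivalent to $\bigvee_{\eta\in\CTTA{\vi{i}}}\eta$, and the two disjunctions coincide.

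The only point needing care is that both formulas are considered on the same superset $\allalpha$ of $\atoms{\vi{1}}\cup\atoms{\vi{2}}$. Otherwise the assignment sets are not comparable, as warned in \cref{prop:assignmentdiffsets}. This is guaranteed by the hypothesis that both are given as formulas $\via{1}$ and $\via{2}$ over the same $\allalpha$, so I expect no real obstacle. The proof is a two-line combination of the cited items of \cref{prop:assignmentsets}.
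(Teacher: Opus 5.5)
Your proof is correct and is essentially the paper's argument: both combine \cref{prop:assignmentsets}~\cref{item:assignmentsets:tequiv} and~\cref{item:assignmentsets:bequiv} with $\ITTA{\vi{1}}=\ITTA{\vi{2}}=\emptyset$. Splitting the biconditional into two directions is only a difference in presentation.
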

\begin{proof}
  By \cref{prop:assignmentsets}~\cref{item:assignmentsets:tequiv},
  $\vi{1}\Tequiv\vi{2}$ if and only if $\CTTA{\vi{1}}=\CTTA{\vi{2}}$.\\
  By \cref{prop:assignmentsets}~\cref{item:assignmentsets:bequiv},
  $\vi{1}\bequiv\vi{2}$ if and only if
  $\CTTA{\vi{1}}=\CTTA{\vi{2}}$ and
  $\ITTA{\vi{1}}=\ITTA{\vi{2}}$. \\
  $\ITTA{\vi{1}}=\ITTA{\vi{2}}=\emptyset$ because $\vi{1}$ and
  $\vi{2}$ are \Treduceda.\\
  Thus $\vi{1}\Tequiv\vi{2}$ if and only if $\vi{1}\bequiv\vi{2}$.
\end{proof}

\begin{lemma}[SE: \T-entailment of \Treduceda formulas]
  \label{teo:treducedtentailment}
  Let $\via{1}$ and $\via{2}$ be  \Treduceda \T-formulas.
  Then  $\vi{1}\Tmodels\vi{2}$ if and only if $\vi{1}\pmodels\vi{2}$.
\end{lemma}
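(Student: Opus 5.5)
The proof follows the same pattern as \cref{teo:treducedtequivalence}, now using the entailment items of \cref{prop:assignmentsets} instead of the equivalence items. First, by \cref{prop:assignmentsets}~\cref{item:assignmentsets:tentail}, $\vi{1}\Tmodels\vi{2}$ holds if and only if $\CTTA{\vi{1}}\subseteq\CTTA{\vi{2}}$. Second, by \cref{prop:assignmentsets}~\cref{item:assignmentsets:bentail}, $\vi{1}\pmodels\vi{2}$ holds if and only if both $\CTTA{\vi{1}}\subseteq\CTTA{\vi{2}}$ and $\ITTA{\vi{1}}\subseteq\ITTA{\vi{2}}$.

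Since $\vi{1}$ and $\vi{2}$ are \Treduceda, we have $\ITTA{\vi{1}}=\ITTA{\vi{2}}=\emptyset$ by \cref{def:treduced}. Hence the second inclusion, $\emptyset\subseteq\emptyset$, holds trivially. The two characterizations therefore coincide, which gives $\vi{1}\Tmodels\vi{2}$ if and only if $\vi{1}\pmodels\vi{2}$.

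Only the hypothesis $\ITTA{\vi{1}}=\emptyset$ is actually used; the reduction of $\vi{2}$ is not needed. This is consistent with the fact that $\vi{1}\pmodels\vi{2}$ requires $\ITTA{\vi{1}}\subseteq\ITTA{\vi{2}}$, which holds automatically when $\ITTA{\vi{1}}$ is empty.

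There is no real obstacle here; the one point to check is that both formulas are considered over the same $\allalpha$, so that \cref{prop:assignmentsets} applies to them with the same assignment sets.
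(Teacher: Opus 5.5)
Your proof is correct and follows the paper's argument: both characterize $\Tmodels$ and $\pmodels$ through the entailment items of \cref{prop:assignmentsets}, and then discharge the extra inclusion using $\ITTA{\vi{1}}=\ITTA{\vi{2}}=\emptyset$. Your side remark is also correct: only the \T-reduction of $\vi{1}$ is needed, a mild strengthening the paper does not state.
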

\begin{proof}
  By \cref{prop:assignmentsets}~\cref{item:assignmentsets:tentail},
  $\vi{1}\Tmodels\vi{2}$ if and only if $\CTTA{\vi{1}}\subseteq\CTTA{\vi{2}}$.\\
  By \cref{prop:assignmentsets}~\cref{item:assignmentsets:bentail},
  $\vi{1}\pmodels\vi{2}$ if and only if
  $\CTTA{\vi{1}}\subseteq\CTTA{\vi{2}}$ and
  $\ITTA{\vi{1}}\subseteq\ITTA{\vi{2}}$. \\
  $\ITTA{\vi{1}}=\ITTA{\vi{2}}=\emptyset$ because $\vi{1}$ and
  $\vi{2}$ are \Treduceda.\\
  Thus $\vi{1}\Tmodels\vi{2}$ if and only if $\vi{1}\pmodels\vi{2}$.
\end{proof}

\begin{lemma}[CT: \SharpSMT{} of \Treduceda \T-formula]
  Let \via{} be a  \Treduceda \T-formula.
  Then $\SharpSMTOf{\vi{}}=\SharpSATOf{\vip{}}$.
\label{teo:treducedcounting}
  \end{lemma}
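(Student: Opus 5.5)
The plan is to make both sides of the equation cardinalities of the same set of total assignments on $\allalpha$. By \cref{remark:allsmtcountsmt}, $\SharpSMTOf{\vi{}}$ counts the \T-consistent total truth assignments $\eta$ on $\allalpha$ with $\eta\pmodels\vi{}$. By \cref{def:cttaitta}, this set is exactly $\CTTA{\vi{}}$, so $\SharpSMTOf{\vi{}}=|\CTTA{\vi{}}|$.

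The propositional side is handled the same way. $\SharpSATOf{\vip{}}$ counts the total truth assignments $\mup$ on $\allA$ with $\mup\models\vip{}$. Because $\foltoprop$ is a bijection between total assignments on $\allalpha$ and total assignments on $\allA$, and $\mup\models\vip{}$ iff $\proptofol(\mup)\pmodels\vi{}$, this count equals the number of total assignments on $\allalpha$ that propositionally satisfy $\vi{}$. By \cref{def:cttaitta} and \cref{prop:assignmentsets}~\cref{item:assignmentsets:disjoint}, every such assignment is either \T-satisfiable or \T-unsatisfiable, so this set is the disjoint union $\CTTA{\vi{}}\cup\ITTA{\vi{}}$. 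Hence $\SharpSATOf{\vip{}}=|\CTTA{\vi{}}|+|\ITTA{\vi{}}|$.

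Finally, since $\vi{}$ is \Treduceda, \cref{def:treduced} gives $\ITTA{\vi{}}=\emptyset$, and therefore $\SharpSATOf{\vip{}}=|\CTTA{\vi{}}|=\SharpSMTOf{\vi{}}$.

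There is no real obstacle. The only point needing care is that both counts range over total assignments on the same set, $\allalpha$ and its abstraction $\allA$, rather than on $\atoms{\vi{}}$. Atoms of $\allalpha$ that do not occur in $\vi{}$ are counted consistently on both sides, which is why the argument goes through the sets $\CTTA{\vi{}}$ and $\ITTA{\vi{}}$ defined relative to $\allalpha$, as in \cref{prop:assignmentdiffsets}.
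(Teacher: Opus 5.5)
Your proposal is correct and follows the paper's argument. Both proofs identify $\SharpSATOf{\vip{}}$ with $|\CTTA{\vi{}}|+|\ITTA{\vi{}}|$ and $\SharpSMTOf{\vi{}}$ with $|\CTTA{\vi{}}|$, then conclude from $\ITTA{\vi{}}=\emptyset$; you only make explicit the bijection via $\foltoprop$ and the disjointness that the paper leaves implicit in its definitions.
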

  \begin{proof}
By \cref{def:cttaitta}, for a generic \T-formula \vi{},
$\SharpSATOf{\viap{}}=|\CTTA{\vi{}}|+|\ITTA{\vi{}}|$.
If \vi{} is  \Treduceda, then $|\ITTA{\vi{}}|=0$.
Thus $\SharpSMTOf{\vi{}}=|\CTTA{\vi{}}|=\SharpSATOf{\vip{}}$.
\end{proof}

\begin{lemma}[ME: \T-consistent truth assignment enumeration for
  \Treduceda \T-formulas]
\label{teo:treducedenumeration}
Let \via{} be a  \Treduceda \T-formula.
  Then $\AllSMTOf{\vi{}}$ is the refinement of $\AllSATOf{\vip{}}$.
\end{lemma}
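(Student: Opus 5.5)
The proof will mirror \cref{teo:treducedcounting}, but at the level of sets rather than cardinalities. First I will fix what the two sides denote. By \cref{remark:allsmtcountsmt}, $\AllSMTOf{\vi{}}$ is the set of all \T-consistent total truth assignments $\eta$ on $\allalpha$ such that $\eta\pmodels\vi{}$. By \cref{def:cttaitta}, this set is exactly $\CTTA{\vi{}}$. Dually, $\AllSATOf{\vip{}}$ is the set of all total truth assignments $\mup$ on $\allA$ with $\mup\models\vip{}$.

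The key step is to show that the refinement $\proptofol$ maps $\AllSATOf{\vip{}}$ bijectively onto $\CTTA{\vi{}}\cup\ITTA{\vi{}}$. Since $\foltoprop$ is a bijection between $\allalpha$ and $\allA$ that is homomorphic with respect to connectives, it induces a bijection between total assignments on $\allalpha$ and total assignments on $\allA$. Under this bijection, $\mup\models\vip{}$ holds iff $\proptofol(\mup)\pmodels\vi{}$, by the very definition of $\pmodels$. Every total assignment on $\allalpha$ that propositionally satisfies $\vi{}$ is either \T-satisfiable or \T-unsatisfiable, hence lies in $\CTTA{\vi{}}$ or in $\ITTA{\vi{}}$. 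These two sets are disjoint by \cref{prop:assignmentsets}~\cref{item:assignmentsets:disjoint}. So $\proptofol(\AllSATOf{\vip{}})=\CTTA{\vi{}}\cup\ITTA{\vi{}}$.

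Finally, I will invoke the hypothesis that $\vi{}$ is \Treduceda, i.e.\ $\ITTA{\vi{}}=\emptyset$ (\cref{def:treduced}). This gives $\proptofol(\AllSATOf{\vip{}})=\CTTA{\vi{}}=\AllSMTOf{\vi{}}$, which is the claim.

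No step here is a real obstacle. The only care needed is in the key step: the correspondence must be stated over \emph{total} assignments on the fixed superset $\allalpha$, not only over $\atoms{\vi{}}$. Both enumerations are taken with respect to the same $\allalpha$ and its abstraction $\allA$, so the bijection lines up.
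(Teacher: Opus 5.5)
Your proposal is correct and matches the paper's proof. The paper likewise notes that the refinement of $\AllSATOf{\vip{}}$ is $\CTTA{\vi{}}\cup\ITTA{\vi{}}$ and uses $\ITTA{\vi{}}=\emptyset$ to conclude that it equals $\CTTA{\vi{}}=\AllSMTOf{\vi{}}$. You only spell out the assignment-level bijection induced by \foltoprop{} that the paper takes for granted; the appeal to disjointness is harmless but not needed for the set equality.
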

\begin{proof}
For a generic formula $\vi{}$, the refinement of
$\AllSATOf{\vip{}}$ is $\CTTA{\vi{}}\cup\ITTA{\vi{}}$. Since $\vi{}$ is
\Treduceda so that $\ITTA{\vi{}}=\emptyset$ by
\cref{def:treduced}, it reduces to $\CTTA{\vi{}}=\AllSMTOf{\vi{}}$.
\end{proof}

\newpage\subsection{Proof of \cref{teo:textendedallqueries}}

The items VA, IM, EQ,  and ME of \cref{teo:textendedallqueries} are proven by
\cref{teo:tvaliditytextended,,teo:revtentailment,,teo:textendedtequivalence,,teo:textendedtentailment}
 respectively.

\begin{lemma}[VA: \T-validity of \Textendeda \T-formulas]
\label{teo:tvaliditytextended}
A \Textendeda \T-formula $\via{}$ is \T-valid if and only if it is
\B-valid.
\end{lemma}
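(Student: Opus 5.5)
The plan is to argue exactly in the style of \cref{teo:treducedtsatisfiability}, but dually, working with the negated formula and the characterisations of validity in \cref{prop:assignmentsets}.

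First, I would invoke \cref{prop:assignmentsets}~\cref{item:assignmentsets:tvalid}: $\vi{}$ is \T-valid if and only if $\CTTA{\neg\vi{}}=\emptyset$.
Second, I would invoke \cref{prop:assignmentsets}~\cref{item:assignmentsets:bvalid}: $\vi{}$ is \B-valid if and only if $\CTTA{\neg\vi{}}=\emptyset$ and $\ITTA{\neg\vi{}}=\emptyset$.
Third, I would use \cref{def:textended}: since $\vi{}$ is \Textendeda, $\neg\vi{}$ is \Treduceda, so $\ITTA{\neg\vi{}}=\emptyset$. Hence the second conjunct in the \B-validity characterisation holds automatically, and the two conditions coincide.

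Alternatively, the result can be obtained directly from \cref{teo:treducedtsatisfiability} applied to $\neg\vi{}$, which is \Treduceda. In that lemma, $\neg\vi{}$ is \T-satisfiable iff it is \B-satisfiable. Negating both sides gives: $\vi{}$ is \T-valid iff $\neg\vi{}$ is \T-unsatisfiable, iff $\neg\vi{}$ is \B-unsatisfiable, iff $\vi{}$ is \B-valid. I would present the first route and mention this second one.

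There is no real obstacle here. The only point needing care is that validity is taken relative to total assignments on the same \allalpha{} used in defining $\CTTA{\cdot}$ and $\ITTA{\cdot}$, and \cref{prop:assignmentsets} already presupposes this.
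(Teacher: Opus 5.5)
Your proposal is correct and follows essentially the same route as the paper. The paper proves the ``if'' direction directly (\B-validity trivially implies \T-validity); for ``only if'' it combines $\CTTA{\neg\vi{}}=\emptyset$, from \cref{prop:assignmentsets}~\cref{item:assignmentsets:tvalid}, with $\ITTA{\neg\vi{}}=\emptyset$, from \cref{def:textended}, to get that $\neg\vi{}$ is \B-unsatisfiable, which is exactly your first argument, and your alternative via \cref{teo:treducedtsatisfiability} applied to $\neg\vi{}$ is also sound.
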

\begin{proof}
(If case) A \B-valid \T-formula $\vi{}$ is obviously also
\T-valid by definition.\\
(Only if case)
$\CTTA{\neg\vi{}}=\emptyset$ because $\vi{}$ is \T-valid
(\cref{prop:assignmentsets}, \cref{item:assignmentsets:tvalid}).
$\ITTA{\neg\vi{}}=\emptyset$ because $\vi{}$ is \Textendeda
(\cref{def:textended}). Thus $\neg\vi{}$ is \B-unsatisfiable, so that $\vi{}$ is \B-valid.
\end{proof}

  \begin{lemma}[IM: \T-implicant check with a \Textendeda{} \T-formula{}]
    \label{teo:revtentailment}
  Let $\via{}$ be a  \Textendeda \T-formula.
    Let $\gamma$ be some cube on \allalpha.
   Then $\gamma\Tmodels \vi{}$ if and only if $\gamma\pmodels\vi{}$.
 \end{lemma}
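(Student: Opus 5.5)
The plan mirrors the proof of \cref{teo:littentails}, with the single literal $l$ replaced by the cube $\gamma$, and reduces the claim to \cref{teo:tvaliditytextended}. Write $\gamma\defas\bigwedge_{i=1}^k l_i$ and let $C\defas\neg\gamma\bequiv\bigvee_{i}\neg l_i$ be the corresponding clause.

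\emph{Step 1 (rewriting both sides).} We have $\gamma\Tmodels\vi{}$ iff $\neg\gamma\vee\vi{}$ is \T-valid. Since every model of $\gamma$ assigns the $l_i$ consistently, $\neg\gamma\vee\vi{}\bequiv\neg\gamma\vee\residual{\vi{}}{\gamma}$, and \B-equivalence implies \T-equivalence. Hence $\gamma\Tmodels\vi{}$ iff $\neg\gamma\vee\residual{\vi{}}{\gamma}$ is \T-valid. In the same way, $\gamma\pmodels\vi{}$ iff $\neg\gamma\vee\residual{\vi{}}{\gamma}$ is \B-valid. (If $\gamma$ contains complementary literals, both sides hold trivially; we discard this case upfront.)

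\emph{Step 2 (\Textension{} is preserved).} We show that $\neg\gamma\vee\residual{\vi{}}{\gamma}$ is \Textendeda. Its negation is \B-equivalent to $\gamma\wedge\residual{(\neg\vi{})}{\gamma}\bequiv\neg\vi{}\wedge\gamma$. Because $\vi{}$ is \Textendeda, $\neg\vi{}$ is \Treduceda. Applying \cref{teo:tresidualconjoinistreduced} iteratively to $l_1,\ldots,l_k$ shows that $\residual{(\neg\vi{})}{\gamma}\wedge\gamma$ is \Treduceda. Alternatively, one can argue directly: $\ITTA{\neg\vi{}\wedge\gamma}\subseteq\ITTA{\neg\vi{}}=\emptyset$, and $\ITTA{}$ depends only on the \B-equivalence class, since its elements are total assignments on \allalpha{} propositionally satisfying the formula. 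Therefore $\neg\gamma\vee\residual{\vi{}}{\gamma}$ is \Textendeda. Equivalently, one could iterate \cref{teo:tresidualdisjoinistextended} literal by literal, but the direct subset argument is cleaner.

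\emph{Step 3 (conclusion).} By \cref{teo:tvaliditytextended}, $\neg\gamma\vee\residual{\vi{}}{\gamma}$ is \T-valid iff it is \B-valid. Combined with Step 1, this gives $\gamma\Tmodels\vi{}$ iff $\gamma\pmodels\vi{}$. Even more directly, one can bypass residuals: $\neg\gamma\vee\vi{}$ has negation $\gamma\wedge\neg\vi{}$, whose set $\ITTA{}$ is empty. So $\neg\gamma\vee\vi{}$ is itself \Textendeda, and \cref{teo:tvaliditytextended} applies to it immediately.

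The only delicate point is Step 2, namely the invariance of $\ITTA{}$ under \B-equivalence and its monotonicity under conjunction with a cube. Both follow straightforwardly from \cref{def:cttaitta}, because $\rho\pmodels\neg\vi{}\wedge\gamma$ implies $\rho\pmodels\neg\vi{}$.
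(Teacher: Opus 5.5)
Your proof is correct, but it is more direct than the paper's. Your closing remark is really the whole argument. Since $\gamma$ is on $\allalpha$, every \T-unsatisfiable total assignment on $\allalpha$ that \B-satisfies $\gamma\wedge\neg\vi{}$ also \B-satisfies $\neg\vi{}$, so $\ITTA{\gamma\wedge\neg\vi{}}\subseteq\ITTA{\neg\vi{}}=\emptyset$. Hence $\neg\gamma\vee\vi{}$ is \Textendeda{} and one application of \cref{teo:tvaliditytextended} concludes. With this, the residual rewriting and the complementary-literals case split are superfluous.

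The paper instead inducts on the number $n$ of literals of $\gamma$:
\begin{itemize}
\item the base case is \cref{teo:littentails};
\item the step rewrites $\bigwedge_{i=1}^{n}l_i\Tmodels\vi{}$ as $\bigwedge_{i=1}^{n-1}l_i\Tmodels\neg l_n\vee\residual{\vi{}}{l_n}$ (Lemma A.7 in~\cite{darwicheKnowledgeCompilationMap2002});
\item it uses \cref{teo:tresidualdisjoinistextended} to see that this formula is again \Textendeda, and applies the induction hypothesis to it.
\end{itemize}

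Both proofs rest on the same monotonicity of $\ITTA{\cdot}$ under conjunction, which is the content of \cref{teo:tresidualconjoinistreduced}. The paper uses it one literal at a time through residuals. That stays close to how IM is decided on a \dDNNF{}: condition on $\gamma$, then check validity. You use it on the whole cube at once.

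Your route is shorter and shows more. It never uses that $\gamma$ is a cube, so $\psi\Tmodels\vi{}$ iff $\psi\pmodels\vi{}$ holds for every formula $\psi$ on $\allalpha$ whenever $\vi{}$ alone is \Textendeda{}. For example, the SE result for \Textendeda{} formulas then only needs its right-hand formula to be \Textendeda.
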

\begin{proof}
Let $\gamma\defas\bigwedge_{i=1}^n l_i$ for some $n\ge 1$. We show
that $\bigwedge_{i=1}^n l_i\Tmodels \vi{}$ if and only if $\bigwedge_{i=1}^n l_i\bmodels \vi{}$ by
induction on $n$:\\
(base):  By \cref{teo:littentails}, if $n=1$, then  $l_1\Tmodels \vi{}$ if and only if
$l_1\pmodels\vi{}$.\\
(step): We recall that
$\neg l \vee \vi{}
\bequiv
\neg l \vee (l \wedge \vi{})
\bequiv
\neg l \vee (l \wedge \residual{\vi{}}{l})
\bequiv
\neg l \vee \residual{\vi{}}{l}
$ (Lemma A.7 in \cite{darwicheKnowledgeCompilationMap2002}).
\\
$\bigwedge_{i=1}^n l_i\Tmodels \vi{}$ if and only if
$\bigwedge_{i=1}^{n-1} l_i\Tmodels (\neg l_n \vee \vi{})$,
if and only if $\bigwedge_{i=1}^{n-1} l_i\Tmodels (\neg l_n \vee
\residual{\vi{}}{l_n})$.
\\
$\bigwedge_{i=1}^n l_i\pmodels \vi{}$ if and only if
$\bigwedge_{i=1}^{n-1} l_i\pmodels (\neg l_n \vee \vi{})$,
if and only if $\bigwedge_{i=1}^{n-1} l_i\pmodels (\neg l_n \vee
\residual{\vi{}}{l_n})$.
\\
By \cref{teo:tresidualdisjoinistextended}, $(\neg l_n \vee
\residual{\vi{}}{l_n}[\allalpha])$ is \Textendeda.
\\
Thus, by induction,
$\bigwedge_{i=1}^{n-1} l_i\Tmodels (\neg l_n \vee
\residual{\vi{}}{l_n})$
if and only if 
$\bigwedge_{i=1}^{n-1} l_i\pmodels (\neg l_n \vee
\residual{\vi{}}{l_n})$, that is,
$\bigwedge_{i=1}^n l_i\Tmodels \vi{}$ if and only if $\bigwedge_{i=1}^n l_i\pmodels\vi{}$.
\end{proof}

\begin{lemma}[EQ: \T-equivalence of \Textendeda formulas]
  \label{teo:textendedtequivalence}
  Let $\via{1}$ and $\via{2}$ be  \Textendeda \T-formulas.
  Then  $\vi{1}\Tequiv\vi{2}$ if and only if $\vi{1}\bequiv\vi{2}$.
\end{lemma}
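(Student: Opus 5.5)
The plan is to reduce the statement to \cref{teo:treducedtequivalence}, applied to the negated formulas, using the duality noted in \sref{sec:problem}: $\vi{1}\Tequiv\vi{2}$ iff $\neg\vi{1}\Tequiv\neg\vi{2}$, and $\vi{1}\bequiv\vi{2}$ iff $\neg\vi{1}\bequiv\neg\vi{2}$. Both facts are immediate from the definitions of $\Tequiv$ and $\bequiv$, since (propositional or theory) equivalence is preserved under negation in both directions.

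The steps, in order, are as follows. First, by \cref{def:textended}, since $\via{1}$ and $\via{2}$ are \Textendeda, the formulas $\neg\vi{1}$ and $\neg\vi{2}$ are \Treduceda; note that $\atoms{\neg\vi{i}}=\atoms{\vi{i}}\subseteq\allalpha$, so they are formulas on the same $\allalpha$. Second, applying \cref{teo:treducedtequivalence} to them gives $\neg\vi{1}\Tequiv\neg\vi{2}$ iff $\neg\vi{1}\bequiv\neg\vi{2}$. Third, chaining with the two duality equivalences yields $\vi{1}\Tequiv\vi{2}$ iff $\neg\vi{1}\Tequiv\neg\vi{2}$ iff $\neg\vi{1}\bequiv\neg\vi{2}$ iff $\vi{1}\bequiv\vi{2}$.

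As an alternative, one could argue directly via \cref{prop:assignmentsets}. By \cref{item:assignmentsets:exhaustive,item:assignmentsets:disjoint}, $\CTTA{\vi{}}$ is the complement of $\CTTA{\neg\vi{}}$ within $\CTTA{\top}$. By \cref{item:assignmentsets:tequiv}, $\Tequiv$ amounts to equality of the $\CTTA{\cdot}$ sets. Since $\ITTA{\neg\vi{i}}=\emptyset$, we get $\ITTA{\vi{i}}=\ITTA{\top}$ for both $i$, and \cref{item:assignmentsets:bequiv} then closes the argument.

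There is no real obstacle. The only point needing care is checking that negation preserves and reflects both equivalences, and that the same superset $\allalpha$ is used throughout.
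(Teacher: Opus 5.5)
Your proposal is correct and matches the paper's proof: you negate both formulas, which makes them \T-reduced by \cref{def:textended}, then apply \cref{teo:treducedtequivalence}, and transfer the result back using the fact that negation both preserves and reflects $\Tequiv$ and $\bequiv$. Your alternative argument via \cref{prop:assignmentsets} is also valid; it uses $\ITTA{\vi{i}}=\ITTA{\top}$ for both $i$ together with items (g) and (h), and the complement remark there is not actually needed.
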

\begin{proof}
By \cref{def:textended}, $\neg\vi{1}$ and $\neg\vi{2}$ are
\Treduceda.
\\Thus, by \cref{teo:treducedtequivalence},
$\neg\vi{1}\Tequiv\neg\vi{2}$ if and only if
$\neg\vi{1}\bequiv\neg\vi{2}$.\\
Therefore, $\vi{1}\Tequiv\vi{2}$ if and only if $\vi{1}\bequiv\vi{2}$.
\end{proof}

\begin{lemma}[SE: \T-entailment of \Textendeda formulas]
  \label{teo:textendedtentailment}
  Let $\via{1}$ and $\via{2}$ be  \Textendeda \T-formulas.
  Then  $\vi{1}\Tmodels\vi{2}$ if and only if $\vi{1}\bmodels\vi{2}$.
\end{lemma}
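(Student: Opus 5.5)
The plan is to reduce the statement to its \T-reduced counterpart, \cref{teo:treducedtentailment}, by contraposition. This mirrors the proof of \cref{teo:textendedtequivalence}. By \cref{def:textended}, since $\vi{1}$ and $\vi{2}$ are \Textendeda, both $\neg\vi{1}$ and $\neg\vi{2}$ are \Treduceda.

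First, I would observe that entailment reverses under negation, both modulo \T{} and propositionally. That is, $\vi{1}\Tmodels\vi{2}$ iff $\neg\vi{2}\Tmodels\neg\vi{1}$, and $\vi{1}\bmodels\vi{2}$ iff $\neg\vi{2}\bmodels\neg\vi{1}$. The propositional case holds because \foltoprop{} is homomorphic wrt.\ $\neg$, so $\foltoprop(\neg\vi{i})=\neg\vip{i}$. Second, I would apply \cref{teo:treducedtentailment} to the \Treduceda{} pair $\neg\vi{2},\neg\vi{1}$, taken in this order. This gives $\neg\vi{2}\Tmodels\neg\vi{1}$ iff $\neg\vi{2}\bmodels\neg\vi{1}$. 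Chaining the three equivalences yields the claim.

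As a cross-check, one can argue directly via \cref{prop:assignmentsets}, items (i) and (j), using that $2^{\allalpha}$ is partitioned into $\CTTA{\vi{}}$, $\CTTA{\neg\vi{}}$, $\ITTA{\vi{}}$ and $\ITTA{\neg\vi{}}$. Since $\ITTA{\neg\vi{i}}=\emptyset$, we get $\ITTA{\vi{i}}=\ITTA{\top}$ for both $i$. The condition $\ITTA{\vi{1}}\subseteq\ITTA{\vi{2}}$ therefore holds trivially, so \B-entailment collapses to the \T-entailment condition $\CTTA{\vi{1}}\subseteq\CTTA{\vi{2}}$.

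There is no real obstacle here. The only point needing care is that both formulas are considered over the same \allalpha{}, so that the sets $\ITTA{\cdot}$ are comparable. This is assumed throughout the paper.
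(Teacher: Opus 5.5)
Your proof is correct and is the paper's own argument: since $\neg\vi{1}$ and $\neg\vi{2}$ are \Treduceda{}, apply \cref{teo:treducedtentailment} to the pair $\neg\vi{2},\neg\vi{1}$ and use that entailment reverses under negation. Your cross-check via \cref{prop:assignmentsets} is also sound but not needed; there, $\ITTA{\vi{i}}=\ITTA{\top}$ makes the $\ITTA{\cdot}$-inclusion trivial.
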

\begin{proof}
By \cref{def:textended}, $\neg\vi{1}$ and $\neg\vi{2}$ are
\Treduceda.
\\Thus, by \cref{teo:treducedtentailment},
$\neg\vi{2}\Tmodels\neg\vi{1}$ if and only if
$\neg\vi{2}\bmodels\neg\vi{1}$.\\
Therefore, $\vi{1}\Tmodels\vi{2}$ if and only if $\vi{1}\bmodels\vi{2}$.
\end{proof}

Comparing \cref{teo:textendedtequivalence,,teo:textendedtentailment}
with \cref{teo:treducedtequivalence,,teo:treducedtentailment}
respectively, we notice that the same property holds for both
\Treduceda and \Textendeda pairs of \T-formulas.
This should not be a surprise, remembering that
$\vi{1}\equiv\vi{2}$ if and only if $\neg\vi{1}\equiv\neg\vi{2}$ and that
$\vi{1}\models\vi{2}$ if and only if $\neg\vi{2}\models\neg\vi{1}$.

\newpage\subsection{Proof of \cref{teo:treduce}}
\begin{proof}
  By construction, all $\C{l}$'s in \TLEMMAS{\vi{}} are \T-valid, so
  that $\bigwedge_{\C{l}\in\TLEMMAS{\vi{}}} \C{l}$ is \T-valid. Thus:
\begin{enumerate}[(i)]
\item By \eqref{eq:treduce}, $\TredOf{\vi{}}\Tequiv\vi{}$ because all  \C{l}'s are \T-valid. 
\item By \eqref{eq:treduce}, $\TredOf{\vi{}}\bmodels\vi{}$.
\item By \cref{item:treduce:tequiv}, $\CTTA{\TredOf{\vi{}}}=\CTTA{\vi{}}$; by \cref{item:treduce:bmodels}, 
  $\ITTA{\TredOf{\vi{}}}\subseteq\ITTA{\vi{}}$.
  Consider some $\rho_{j}\in\ITTA{\vi{}}$. 
By \cref{def:ruleout-nobetas}, $\rho_{j}\bmodels\neg\C{l}$ for some
$\C{l}\in\TLEMMAS{\vi{}}$. Since $\rho_{j}$ is a total assignment on
$\allalpha$, then
$\rho_{j}\not\bmodels\C{l}$,
that is, $\rho_{j}\not\bmodels\TredOf{\vi{}}$,
that is, $\rho_{j}\not\in\ITTA{\TredOf{\vi{}}}$.
Thus $\ITTA{\TredOf{\vi{}}}=\emptyset$, that is,  $\TredOf{\vi{}}$ is
\Treduceda.
\end{enumerate}  
\end{proof}

\subsection{Proof of \cref{teo:textend}}

\begin{proof}
  Comparing \eqref{eq:treduce} and \eqref{eq:textend}, we see that
  $\TextOf{\vi{}}\bequiv\neg\TredOf{\neg\vi{}}$. Thus, 
  \cref{item:textend:tequiv,,item:textend:bmodels,,item:textend:textended}
  follow directly from \cref{teo:treduce}.
\end{proof}

\newpage
\subsection{Proof of \cref{teo:polinomialwithtreduced}}
\begin{proof}\ 
\begin{enumerate}[(a)]
\item{} [CO] By \cref{teo:treducedtsatisfiability}, $\psi{}$ is
  \T-satisfiable iff it is \B-satisfiable, which can be computed in
  polynomial time wrt.\ the size of $\psi{}$  for a \dDNNF{} formula \cite{darwicheKnowledgeCompilationMap2002}.
\item{} [CE] By \cref{teo:tentailment},
  $\psi{}\Tmodels C{}$ iff $\psi{}\pmodels C{}$, which can be computed in
  polynomial time wrt.\ the size of $\psi{}$  and $C$  for a
  \dDNNF{} formula \cite{darwicheKnowledgeCompilationMap2002}.
\item{} [CT] By \cref{teo:treducedcounting},
  $\SharpSMT{}(\psia{})=\SharpSATOf{\psiap{}}$, 
  which can be computed in
  polynomial time wrt.\ the size of $\psi{}$ for a \dDNNF{} formula \cite{darwicheKnowledgeCompilationMap2002}.
\item{} [ME] By \cref{teo:treducedenumeration},
  $\AllSMT{}(\psia{})=\AllSATOf{\psiap{}}$, 
  which  can be computed in
  polynomial time wrt.\ the size of $\psi{}$ and the number of output
  models for a \dDNNF{} formula \cite{darwicheKnowledgeCompilationMap2002}.
\end{enumerate}
\end{proof}

\subsection{Proof of \cref{teo:polinomialequivalencewithtreduced}}
\begin{proof}\ 
\begin{enumerate}[(a)]
\item{} [EQ] By \cref{teo:treducedtequivalence}, $\psi_{1}\Tequiv\psi_{2}$ if and only
if $\psi_{1}\bequiv\psi_{2}$, 
  which  can be computed in
  polynomial time wrt.\ the size of $\psi_{1}$ and $\psi_{2}$ for the
  listed \NNF{} formulas \cite{darwicheKnowledgeCompilationMap2002,darwicheSDDNewCanonical2011}. 
\item{} [SE] By \cref{teo:treducedtentailment}, $\psi_{1}\Tmodels\psi_{2}$ if and only
if $\psi_{1}\bmodels\psi_{2}$, 
  which  can be computed in
  polynomial time wrt.\ the size of $\psi_{1}$ and $\psi_{2}$ for the
  listed \NNF{} formulas \cite{darwicheKnowledgeCompilationMap2002,darwicheSDDNewCanonical2011}. 
\end{enumerate}
\end{proof}

\subsection{Proof of \cref{teo:polinomialwithtextended}}
\begin{proof}\ 
  \begin{enumerate}[(a)]
\item{}[VA] By \cref{teo:tvaliditytextended}, $\psi{}$ is
  \T-valid iff it is \B-valid, which can be computed in
  polynomial time wrt.\ the size of $\psi{}$ for a \dDNNF{} formula \cite{darwicheKnowledgeCompilationMap2002}.
\item{}[IM]
By \cref{teo:revtentailment}, $\gamma\Tmodels\psi{}$ if and only if
$\gamma\pmodels\psi{}$, which can be computed in
  polynomial time wrt.\ the size of $\psi{}$ and $\gamma$ for a \dDNNF{} formula \cite{darwicheKnowledgeCompilationMap2002}. 
\end{enumerate}
\end{proof}

\subsection{Proof of \cref{teo:polinomialequivalencewithtextended}}
\begin{proof}\ 
\begin{enumerate}[(a)]
\item{} [EQ] By \cref{teo:textendedtequivalence}, $\psi_{1}\Tequiv\psi_{2}$ if and only
if $\psi_{1}\bequiv\psi_{2}$, 
  which can be computed in
  polynomial time wrt.\ the size of $\psi_{1}$ and $\psi_{2}$ for the listed \NNF{} formulas \cite{darwicheKnowledgeCompilationMap2002,darwicheSDDNewCanonical2011}. 
\item{} [SE] By \cref{teo:textendedtentailment}, $\psi_{1}\Tmodels\psi_{2}$ if and only
if $\psi_{1}\bmodels\psi_{2}$, 
  which can be computed in
  polynomial time wrt.\ the size of $\psi_{1}$ and $\psi_{2}$ for the listed \NNF{} formulas \cite{darwicheKnowledgeCompilationMap2002,darwicheSDDNewCanonical2011}. 
\end{enumerate}
\end{proof}

\end{document}